\def\nalpha{\mathsf{\alpha}}
\def\nbeta{\mathsf{\beta}}
\def\na{a}
\def\nb{b}
\def\nc{c}
\def\nC{\mathsf{C}}
\def\nS{\mathsf{S}}
\def\nN{\mathsf{N}}
\def\nA{\mathsf{A}}
\def\nB{\mathsf{B}}
\def\nfb{\mathfrak{b}}
\def\nfa{\mathfrak{a}}
\def\nX{\mathsf{X}}
\def\nY{\mathsf{Y}}
\newcommand{\Sl}{\mathfrak{sl}_2}
\newcommand{{\N}}{{\rm N}}
\newcommand{{\M}}{{\rm M}}
\newcommand{\ad}{{\rm ad}}
\newcommand{\nt}{\mathsf{t}}
\newcommand{\nv}{\mathsf{v}}
\newcommand{\m}{\mathsf{m}}
\newcommand{\n}{\mathsf{n}}
\newcommand{\h}{\mathsf{h}}
\newcommand{\z}{\mathsf{z}}
\newcommand{\s}{\mathsf{s}}
\newtheorem{thm}{Theorem}[section]
\newtheorem{lem}[thm]{Lemma}
\theoremstyle{definition}
\newtheorem{defn}[thm]{Definition}
\newtheorem{rem}[thm]{Remark}
\numberwithin{equation}{section}
\newcounter{IssueCounter}
\newtheorem{Issue}[IssueCounter]{Issue}
\def\be {\begin{equation}}
\def\ee {\end{equation}}
\def\ba {\begin{eqnarray}}
\def\ea {\end{eqnarray}}
\def\bpr {\begin{proof}}
\def\epr {\end{proof}}
\def\bes {\begin{equation*}}
\def\ees {\end{equation*}}
\def\bas {\begin{eqnarray*}}
\def\eas {\end{eqnarray*}}
\begin{document}
\renewcommand {\thefootnote}{\dag}
\renewcommand {\thefootnote}{\ddag}
\renewcommand {\thefootnote}{ }


\title{Versal Normal Form for Nonsemisimple Singularities}
\author{
Fahimeh Mokhtari\footnote{Corresponding author. Email: fahimeh.mokhtari.fm@gmail.com}
\\Jan A. Sanders \footnote{Email: jan.sanders.a@gmail.com}\\
Department of Mathematics, Faculty of Sciences\\
Vrije Universiteit, De Boelelaan 1081a,\\ 1081 HV Amsterdam, The Netherlands
}

\date{}
\maketitle



\begin{abstract}
The theory of versal normal form has been playing a role in normal form since the introduction of the concept by V.I. Arnol'd in \cite{arnold1971matrices,0036-0279-27-5-A02}.
But there has been no systematic use of it that is in line with the semidirect character of the group of formal transformations on formal vector fields,
that is, the linear part should be done completely first, before one computes the nonlinear terms.
In this paper we address this issue by giving a complete description of a first order calculation in the case of the two- and three-dimensional
irreducible nilpotent cases, which is then followed up by an explicit almost symplectic calculation to find the transformation to versal normal form
in a particular fluid dynamics problem and in the celestial mechanics \(L_4\) problem.
\end{abstract}\vspace{-0.2in}
\vspace{0.10in} \noindent {\it Keywords}: Versal normal form;  \(L_4\) problem;   Nilpotent; \(\Sl\) representation.
\section{Introduction}\label{sec:1}
In normal form theory for general differential equations or symplectic systems around equilibria,
not much attention is usually given to the linear part of the problem.
A typical approach in bifurcation theory is to compute the normal form
of a general system with respect to a given organizing center and add
versal deformation terms (as first considered in \cite{arnold1971matrices,0036-0279-27-5-A02}). One can then analyze all possible bifurcations
in a neighborhood of the organizing center.
While there is nothing wrong with this approach, it does not answer the question where a given system fits in the analysis.
In other words, how does one compute where the given system is in this
neighborhood of the organizing center?

It is this question that we attempt to answer for a number of examples.
Some of these examples will be very concrete, with only one or two parameters to give us a possibility to actually, see the bifurcations,
others are completely general systems where one can use the computation
by just filling in the parameter values of a given system with the same
type of organizing center.

Ideally, before starting the nonlinear computation, the linear system should be brought
in versal normal form in a finite number of steps, as is attempted in \cite{sanders1994versal}.
In practice what one does is to put the linear part in normal form in the same way
as one does the nonlinear part of the equation, but this may involve infinitely many steps.
Since the linear terms influence the computation in every step, this is not
very desirable (contrary to the nonlinear computations, which cannot influence the
linear part unless there is also a constant term to take into account).

In this paper, we address this problem for a very particular system that has been
the subject of several papers already from the versal deformation point of view,
namely the \(L_4\)-problem as described in  \cite{cushman1986versal}.
This paper contains a very clear discussion of the arguments involved in the
versal deformation computation and we will not repeat these here.
The issue we want to address here is to change the infinite series approach
into a finite explicit computation. 
Apart from the \(L_4\)-problem, we have added several examples to illustrate the method
and to show that it is indeed a method, not a computation that happens to work in the one example.
We treat the \(2\)- and \(3\)-dimensional irreducible nilpotent case in section \ref{sec:2Di} and \ref{sec:3D}, respectively.
\markright{{\footnotesize { \hspace{3in}  F. Mokhtari and J.A. Sanders\,\,\,{\it Versal normal form}}}}
\pagestyle{myheadings}
We started this research by computing exponential maps using the generators of the Chevalley normal form of the 
Lie algebra. In the specific \(L_4\)-problem this leads to quartic equations in the flow parameters
and even if one is able to explicitly solve these equations the result is a map full of radical expressions
which will be very hard to use if one applies the result to the full nonlinear problem as is our goal.
We should mention that in the general linear case this does not occur and one can expect that for simply laced simple Lie algebras
this approach will work without problems.

In order to simplify the resulting map that puts the linear system in versal normal form,
we then decided to drop the requirement that the symplectic form be preserved.
As remarked in \cite{koccak1984normal} there is a strong belief that the symplectic form should be preserved,
which is a bit strange if one considers the fact that in order to put the symplectic form in its Darboux normal form,
one has to use (by definition) transformations that are not symplectic.

Dropping this requirement, which has anyway no consequence for the further analysis since we work with the symplectic vector fields,
not with the Hamiltonians, we then proceed as follows.
We first determine a theoretical form of  the versal normal form, depending on a finite number of versal deformation parameters.
Since we want to reach the versal normal form by conjugation, the characteristic polynomial of the original linear vector field and
the versal deformation should be equal. From this equality, we determine the versal deformation parameters (this is in the symplectic case the only nonlinear part of the procedure,
in the general linear case this part is completely straightforward).

Once we have, given a linear vector field \(\nX^\varepsilon_0\),
which consists of an organizing center \(\nX^0_0\) plus terms in a neighborhood of the organizing center,  in order  to compute its versal deformation \(\bar{\nX}^\varepsilon_0\), we need to solve the linear problem \(X_0^\varepsilon T^\varepsilon=T^\varepsilon\bar{X}^\varepsilon_0\) in such a way that \(T^0\) reduces
to the identity and \(\bar{\nX}^\varepsilon_0\) is in versal normal form.
We then can obtain reasonable expressions for the transformation, which can then be put to good use in the nonlinear normal form analysis.
\section{The algorithm}\label{sec:2}
We start with polynomials \(\mathbb{R}[ x_1,\cdots,x_n]\). We then add to these commuting derivations \(\partial_1,\cdots,\partial_n\) 
and consider these as a left \(\mathbb{R}[ x_1,\cdots,x_n]\) module,
such that \([\partial_i ,x_j]=\delta^i_j\). (One could  write \( \partial_i \) as \(\frac{\partial}{\partial_{x_i}}\)).
We write
\(\frac{\partial P}{\partial x_i}\in\mathbb{R}[ x_1,\cdots,x_n]\) for \([\partial_i,P]\). 
We then define a multiplication \(P_i\partial_i\star P_j\partial_j=
P_i \frac{\partial P_j}{\partial x_i} \partial_j\). This defines a non-associative algebra with an associator \(\alpha(x,y,z)=(x\star y)\star z-x\star (y\star z)\) which is symmetric in its first two variables 
(this ensures that the Jacobi identity holds, \cite{MR0161898}) and from it we can define a Lie algebra, the Polynomial Lie algebra by defining the Lie bracket as \([x,y]=x\star y-y\star x\).
Apart possibly from the notation, this is the usual way of defining polynomial vector fields.
We can put a grading on the polynomial vector field by assigning degree \(1\) to the \(x_i\)'s and degree \(-1\) to the \(\partial_j\)'s.
We remark that the \(\star\)-product is a graded product, that is, the degree of \(U \star V\) is the sum of the degree of \(U\) and the degree of \(V,\)
and this makes the Lie algebra into a graded Lie algebra \(\mathfrak{g}=\prod_{k=0}^\infty \mathfrak{g}_k\).
Among the elements in this Lie algebra, as special position is reserved for those of degree zero. They form a Lie subalgebra \(\mathfrak{gl}(n,\mathbb{R})\).

We start with a given linear vector field which we consider as an element in a reductive Lie algebra \(\mathfrak{g}_0\). In our examples, \(\mathfrak{g}_0\) will
be \(\mathfrak{gl}(n,\mathbb{R})\) or \(\mathfrak{sp}(n,\mathbb{R})\).
We chose an organizing center (in all our example this will be characterized by the fact that the real part of all its eigenvalues is zero, since this is where bifurcations happen)
and introduce for organizational reasons a deformation parameter \(\varepsilon\), which at the end of the computation can be set back to \(1\).
 In our first two examples we assume that  the organizing center \(\nX_0^0\) is in real Jordan normal form
as is usually done in normal form theory.
This is not really necessary and might need the knowledge of the spectrum of \(\nX_0^0\), something we try to avoid in this paper, so we stress the fact that the whole construction
works well without this choice. Alternatively one might want to put \(\nX_0^0\) in rational normal form before starting the computation, or not all, as in Section \ref{sec:exampsp4}. All this is a matter of taste and convenience.

We then split \(\nX_0^0\) into a semisimple and nilpotent part, \(\nX_0^0=\s_0+\n_0\), with \(\s_0\) and \(\n_0\) commuting, \(\s_0,\n_0\in\mathfrak{g}_0\).
We remark that this only needs the characteristic polynomial of \(\nX_0^0\) \cite{MR0323842,SVM2007}.
In \(\ker\ad(\s_0)\) (where \(\ad(X)Y=[X,Y]\), as usual) we construct around \(\n_0\) an \(\mathfrak{sl}_2\)-triple \(\langle \n_0 , \h_0 , \m_0\rangle \) as follows, cf. \cite{knapp2013lie}.
Let \(\z_0^\mu\in\mathfrak{g}_0\) be a solution (with free parameter \(\mu\)) of \(\n_0=\ad^2(\n_0) \z_0^\mu\). Put \(\m_0=-2 \z_0^0\) and \(\h_0=[\m_0,\n_0]\). Then
solve \([\h_0,\z_0^\mu]=2 \z_0^\mu\). If \(\z_0^{\mu_0}\) is a solution, put \(\m_0=\z_0^{\mu_0}\) and let \(\h_0=[\m_0,\n_0]\). Then \([\h_0,\n_0]=- 2\n_0\). As in the \(\s_0+\n_0\)-decomposition, this is a completely rational procedure \cite{cushman56nilpotent}. The existence of solutions to the equations
is guaranteed by the Jacobson-Morozow theorem, see \cite{knapp2013lie}.
\begin{rem}
This construction determines
the {\em style} of the normal form, since we will choose \(\ker\ad(\m_0)\) as the complement to \(\mathrm{im}\ \ad(\n_0)\) and {\em costyle} of the normal form transformation, since we will choose
\(\mathrm{im}\ \ad(\m_0)\) as the complement to \(\ker\ad(\n_0)\).
The costyle of  normal form transformation is the way we choose the free parameters in transformations.
 As suggested by the terminology,
other choices of style are also possible and may in specific problems be preferable.
\end{rem}
The versal normal form should be equivalent  to
the {\em rational (or Frobenius) normal form} of the matrix of \(X_0^\varepsilon\), although for that normal form one usually chooses a different style. 
The computation of \(\bar{X}^\varepsilon_0\)
from \(X^\varepsilon_0 \) has already been described.  
The \(T^\varepsilon\) can be computed by linear elimination. 
If for some \(\varepsilon_0\), \(T^\varepsilon\) fails to be invertible, then we should  take \(|Y^\varepsilon_0|<|Y^{\varepsilon_0}_0|\). 
\begin{defn}
	We say that \(\nX_0^\varepsilon=\nX_0^0+\bar\nY_0^\varepsilon\) is in normal form (in \(\Sl\)-style) with respect to \(\nX_0^0\) if \(\bar\nY_0^\varepsilon\in\ker \s_0 \cap \ker \m_0\).
	We say that \(\bar{\nX}_0^\varepsilon=\nX_0^0+\bar\nY_0^\varepsilon\) is a versal normal form with respect to \(\nX_0^0\) if \(\bar\nY^\varepsilon_0\) is in normal form with respect to \(\nX_0^0\) and there exists a 
	\(T^\varepsilon\in \mathrm{GL}(n,\mathbb{R})\) such that \(X^\varepsilon_0 T^\varepsilon=T^\varepsilon\bar{X}^\varepsilon_0\) and \(T_0^0=I\). 
\end{defn}
If the Lie algebra is defined by an invariant bilinear form \(\Omega_0\) (for instance, a symplectic form),
one has to compute the induced form  \(\bar{\Omega}^\varepsilon_0=(T^\varepsilon)^t \Omega_0 T^\varepsilon\). In this case we write
 \(\mathfrak{g}^\Omega\) and \(\mathfrak{g}^{\bar{\Omega}}\). Similar remarks apply to invariant trilinear forms in the less popular (in dynamics) case \(\mathfrak{g}_2\),
the Lie algebra of \(\mathsf{G}_2\), cf. \cite{baez2014𝐺2}, not to be confused with an element of grade two in \(\mathfrak{g}\).
This ensures that the versal deformation vector field 
behaves correctly with respect to \(\bar{\Omega}^\varepsilon_0\), that is, \(\bar{\nX}^\varepsilon_0 \in \mathfrak{g}^{\bar{\Omega}}_0\).
Here we trade symplecticness of the maps involved against computational convenience.
\begin{defn}
	Let \(T^\varepsilon\in{\rm GL(2n,\mathbb{R})}\).
	Then  this induces a  new symplectic form   \(\bar{\Omega}^\varepsilon_0\) and a new vector field \(\bar{X}^\varepsilon_0\) as follows 
	\ba\label{E1}
	\bar{\Omega}^\varepsilon_0&=&(T^\varepsilon)^t \Omega_0 T^\varepsilon,
	\\\label{E2}
	X^\varepsilon_0 T^\varepsilon&=&T^\varepsilon  \bar{X}^\varepsilon_0.
	\ea
\end{defn}
\begin{lem}\label{mainlemma}
	The vector field \(\bar{\nX}^\varepsilon_0\) is \(\bar{\Omega}^\varepsilon_0\)-symplectic iff \(\nX^\varepsilon_0\) is an \(\Omega_0\)-symplectic vector field.
\end{lem}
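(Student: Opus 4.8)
The plan is to reduce the claim to the infinitesimal (Lie algebra) characterization of symplectic maps and then carry out a one-line congruence computation. Recall that a matrix $Z$ lies in the Lie algebra $\mathfrak{g}^\Omega$ attached to a fixed invertible bilinear form $\Omega$ — that is, $Z$ is an $\Omega$-symplectic vector field — exactly when $Z^t\Omega+\Omega Z=0$, which is obtained by differentiating the group relation $(\exp tZ)^t\,\Omega\,(\exp tZ)=\Omega$ at $t=0$. So it suffices to show that
\[
(\bar X^\varepsilon_0)^t\,\bar\Omega^\varepsilon_0+\bar\Omega^\varepsilon_0\,\bar X^\varepsilon_0 = 0
\qquad\Longleftrightarrow\qquad
(X^\varepsilon_0)^t\,\Omega_0+\Omega_0\,X^\varepsilon_0 = 0 .
\]

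Next I would substitute the defining relations \eqref{E1} and \eqref{E2}. From \eqref{E2} we have $\bar X^\varepsilon_0=(T^\varepsilon)^{-1}X^\varepsilon_0 T^\varepsilon$, and $\bar\Omega^\varepsilon_0=(T^\varepsilon)^t\Omega_0 T^\varepsilon$ by \eqref{E1}. Expanding the first term,
\[
(\bar X^\varepsilon_0)^t\bar\Omega^\varepsilon_0
=(T^\varepsilon)^t (X^\varepsilon_0)^t \big((T^\varepsilon)^{-1}\big)^t (T^\varepsilon)^t\,\Omega_0\, T^\varepsilon
=(T^\varepsilon)^t (X^\varepsilon_0)^t\,\Omega_0\, T^\varepsilon ,
\]
using $\big((T^\varepsilon)^{-1}\big)^t(T^\varepsilon)^t=\big(T^\varepsilon (T^\varepsilon)^{-1}\big)^t=I$; likewise $\bar\Omega^\varepsilon_0\bar X^\varepsilon_0=(T^\varepsilon)^t\,\Omega_0 X^\varepsilon_0\, T^\varepsilon$ after cancelling $T^\varepsilon(T^\varepsilon)^{-1}$. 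Adding the two gives
\[
(\bar X^\varepsilon_0)^t\bar\Omega^\varepsilon_0+\bar\Omega^\varepsilon_0\bar X^\varepsilon_0
=(T^\varepsilon)^t\big((X^\varepsilon_0)^t\Omega_0+\Omega_0 X^\varepsilon_0\big)T^\varepsilon .
\]

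Since $T^\varepsilon\in\mathrm{GL}(2n,\mathbb{R})$, the congruence $A\mapsto (T^\varepsilon)^tAT^\varepsilon$ is a bijection on square matrices, so the left-hand side vanishes precisely when $(X^\varepsilon_0)^t\Omega_0+\Omega_0 X^\varepsilon_0=0$; this establishes both implications simultaneously. I do not expect a genuine obstacle here: the only care needed is the bookkeeping of transposes, and it is worth remarking that the argument uses nothing about $\Omega_0$ beyond being a fixed matrix and nothing about $T^\varepsilon$ beyond invertibility — in particular the same computation applies verbatim, with the multilinear analogue of congruence, to the trilinear $\mathfrak{g}_2$ situation alluded to in the text. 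One could alternatively argue at the group level first, noting that conjugation by $T^\varepsilon$ intertwines the $\Omega_0$- and $\bar\Omega^\varepsilon_0$-symplectic groups because $M^t\Omega_0 M=\Omega_0$ is equivalent to $\big((T^\varepsilon)^{-1}MT^\varepsilon\big)^t\bar\Omega^\varepsilon_0\big((T^\varepsilon)^{-1}MT^\varepsilon\big)=\bar\Omega^\varepsilon_0$, and then differentiate; but the direct Lie-algebra computation above is shorter.
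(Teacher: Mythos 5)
Your proof is correct and follows essentially the same route as the paper: both reduce the claim to the congruence identity \((\bar X^\varepsilon_0)^t\bar\Omega^\varepsilon_0+\bar\Omega^\varepsilon_0\bar X^\varepsilon_0=(T^\varepsilon)^t\bigl((X^\varepsilon_0)^t\Omega_0+\Omega_0 X^\varepsilon_0\bigr)T^\varepsilon\) obtained from \eqref{E1} and \eqref{E2}. Your explicit remark that congruence by the invertible \(T^\varepsilon\) is a bijection is a small improvement, since it cleanly delivers both directions of the ``iff,'' whereas the paper's written computation only spells out one implication.
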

The claim is that \(\bar{\nX}^\varepsilon_0\) is a \(\bar{\Omega}^\varepsilon_0\)-symplectic vector field, that is, we have  
to prove   that 
\ba
(\bar{X}^\varepsilon_0)^t \bar{\Omega}^\varepsilon_0+\bar{\Omega}^\varepsilon_0\bar{X}^\varepsilon_0=0 .
\ea
\begin{proof}
	Assume \( (X_0^\varepsilon)^t \Omega_0+\Omega_0 X_0^\varepsilon=0\). Then
\bas
&&(\bar{X}^\varepsilon_0)^t \bar{\Omega}^\varepsilon_0+\bar{\Omega}^\varepsilon_0\bar{X}^\varepsilon_0
\\&=&
(\bar{X}^\varepsilon_0)^t (T^\varepsilon)^t \Omega_0 T^\varepsilon+(T^\varepsilon)^t \Omega_0 T^\varepsilon\bar{X}^\varepsilon_0
\\&=&
(T^\varepsilon \bar{X}^\varepsilon_0)^t  \Omega_0 T^\varepsilon+(T^\varepsilon)^t \Omega_0 T^\varepsilon\bar{X}^\varepsilon_0
\\&=&
(X^\varepsilon_0 T^\varepsilon)^t  \Omega_0 T^\varepsilon+(T^\varepsilon)^t \Omega_0 X^\varepsilon_0 T^\varepsilon
\\&=&
(T^\varepsilon)^t \left( (X^\varepsilon_0 )^t \Omega_0 + \Omega_0 X^\varepsilon_0\right) T^\varepsilon
\\&=&0,
\eas
proving the statement of the Lemma.
\end{proof}
The next order step is to compute 
\ba
\exp(\ad({\nt}^\varepsilon_{1}))(\bar{\nX}^\varepsilon_0+\nX^\varepsilon_{1}+\cdots)&=&
\bar{\nX}^\varepsilon_0+\nX^\varepsilon_{1}+[{\nt}^\varepsilon_{1},\bar{\nX}^\varepsilon_0]+\cdots.
\ea
Then we solve
\ba\label{Homological}
\ad(\s_0+\m_0)(\nX^\varepsilon_{1}+[{\nt}^\varepsilon_{1},\bar{\nX}^\varepsilon_0])=0,
\ea
in order to obtain \(\bar{\nX}^\varepsilon_0+\bar{\nX}^\varepsilon_{1}+\cdots\)
  in \(\mathfrak{g}^{\bar{\Omega}}\), or, in the general linear case, in \(\mathfrak{g}\), where \(\nX^\varepsilon_{1}\) is the first order nonlinear term  and, with \({\nt}^\varepsilon_{1}\) a general vector field of order \(1\) and \(\bar{\nX}^\varepsilon_{1}\) is in normal form with respect to \(\nX^0_0\) in the \(\Sl\)-style.

This procedure can then be repeated until the full system is in normal form up to the fixed degree.
The \(\ad(\s_0+\m_0)\) ensures that the normal form will automatically have the \(\Sl\)-style with respect to \(\nX^0_0\).

We should remark here that if we start with a general \({\nt}^\varepsilon_{1}\), there may be free parameters in the normal form corresponding to elements in \(\ker\ad(\s_0)\cap\ker\ad(\n_0)\)  in \({\nt}^\varepsilon_{1}\).
This is analogous to the way unique normal forms are computed \cite{baider1992further,sanders2003normal}. The free parameters may be used to simplify the normal form by removing (typically) higher order \(\varepsilon\)-terms.
There is no style known to us that would be preferable to this simple 
{\em free-costyle}). In  most of our examples the transformation turns out to be in \(\Sl\)-costyle.
\begin{rem}
In some problems, when one wants to do the calculations by hand,
 it pays to view the \(\mathfrak{g}_k\), the
polynomial vector fields as representation spaces of \(\mathfrak{g}_0\),
and more specifically of representation spaces of \(\langle \s_0 , \n_0, \h_0 , \m_0\rangle \). For instance, in \cite{baider1992further} the \(\mathfrak{g}_k\)  is shown to be a direct sum (as vector spaces, not
as Lie algebras) of two irreducible representations of \(\Sl\), \(\nfa_k\) and \(\nfb_k\) and this gives rise to a basis
that is completely natural with respect to the action of the given \(\Sl\) and such that \([\mathfrak{z}_k,\mathfrak{z}_l]\subset \mathfrak{z}_{k+l}\) for \(\mathfrak{z}=\nfa,\nfb\).
\end{rem}
As formulated, the algorithm follows what might be called the {\em rational approach}: no eigenvalues need to be computed, only characteristic polynomials, cf \cite{cushman2017uniform}.
This makes it suitable not only for Computer Algebra Systems, but also for Symbolic Formula Manipulation Systems like FORM \cite{Form00} or FERMAT \cite{lewis2008computer}, which is nice if the problems get big.

An alternative method, which might also work when the vector fields are not finitely generated at any given order and might be called the {\em spectral approach}, is to use the spectrum of \(\s_0\) and \(\h_0\), as is done in the averaging method; we refer for this method to \cite{SVM2007}.
\subsection{Nonlinear nilpotent  versal normal form}
\begin{lem}\label{MAINTHM}
For given  \(\nX_k\in\mathfrak{g}_k,\) \(k>0\), and parametric vector field \(\bar{\nX}_0^\varepsilon=\s_0+\n_0+\bar{\nv}_0^\varepsilon\)
in which \(\bar{
v}_0^\varepsilon\in \ker\,\ad({\m_0})\cap\ker\,\ad({\s_0})\)
there exists a  transformation \(\nt_k^\varepsilon\in \mathfrak{g}_k\) to the following problem 
\bas
\ad(\bar{\nX}_0^\varepsilon)\nt_k^\varepsilon=\nX_k-\bar{\nX}_k^\varepsilon,
\eas 
where \(\bar{\nX}_k^\varepsilon\in \ker\ad({\m_0})\cap \mathfrak{g}_k.\) The transformation \(\nt_k^\varepsilon\) and the normal form \(\bar{\nX}_k^\varepsilon\) can be found explicitly from equations \eqref{transf} and \eqref{nf},
respectively. 
\end{lem}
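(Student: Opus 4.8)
The plan is to recast the statement as a single surjectivity assertion and then to exploit the $\Sl$-module structure of $\mathfrak{g}_k$ together with the facts that the deformation term $\bar{\nv}_0^\varepsilon$ is small and commutes with $\s_0$. Since $\bar{\nX}_k^\varepsilon$ is forced to equal $\nX_k-\ad(\bar{\nX}_0^\varepsilon)\nt_k^\varepsilon$, the only real content is that
\[
\mathfrak{g}_k=\IM\ad(\bar{\nX}_0^\varepsilon)+\bigl(\ker\ad(\m_0)\cap\mathfrak{g}_k\bigr);
\]
equivalently, that the composite $\pi\circ\ad(\bar{\nX}_0^\varepsilon)\colon\mathfrak{g}_k\to\IM\ad(\n_0)\cap\mathfrak{g}_k$ is onto, where $\pi$ is the projection with kernel $\ker\ad(\m_0)\cap\mathfrak{g}_k$ afforded by the $\Sl$-splitting $\mathfrak{g}_k=\ker\ad(\m_0)\oplus\IM\ad(\n_0)$. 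Once this is established, $\nt_k^\varepsilon$ is any $\pi\circ\ad(\bar{\nX}_0^\varepsilon)$-preimage of $\pi\nX_k$ and $\bar{\nX}_k^\varepsilon:=\nX_k-\ad(\bar{\nX}_0^\varepsilon)\nt_k^\varepsilon$, the remaining ambiguity being exactly the \emph{free-costyle} freedom (a choice of complement to $\ker\ad(\n_0)$, e.g. $\IM\ad(\m_0)$) discussed above.

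First I would block-diagonalise. Since $\bar{\nv}_0^\varepsilon\in\ker\ad(\s_0)$, just like $\h_0$ and $\m_0$, the operator $\ad(\s_0)$ commutes with $\ad(\n_0),\ad(\h_0),\ad(\m_0)$ and with $\ad(\bar{\nv}_0^\varepsilon)$; hence $\ad(\bar{\nX}_0^\varepsilon)$ preserves the direct sum $\mathfrak{g}_k=V_0\oplus W$, where $V_0=\ker\ad(\s_0)\cap\mathfrak{g}_k$ and $W=\IM\ad(\s_0)\cap\mathfrak{g}_k$ (over $\mathbb{C}$, $W$ is the span of the nonzero-eigenvalue eigenspaces of $\ad(\s_0)$), and both $V_0$ and $W$ are themselves $\Sl$-submodules on which $\pi$ restricts. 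It then suffices to solve on each block. On $W$ the $\varepsilon^0$-part $\ad(\s_0+\n_0)|_W$ is invertible (on each complex $\ad(\s_0)$-eigenspace it has the form $\lambda\,\mathrm{id}+(\text{a nilpotent operator})$ with $\lambda\neq0$), so $\ad(\bar{\nX}_0^\varepsilon)|_W=\ad(\s_0+\n_0)|_W+\ad(\bar{\nv}_0^\varepsilon)|_W$ is invertible over $\mathbb{R}[[\varepsilon]]$ (equivalently, for real $\varepsilon$ near $0$), its inverse being a geometric, hence $\varepsilon$-power, series; there is thus no normal-form contribution on $W$ and one may take $\nt_k^\varepsilon|_W=\bigl(\ad(\bar{\nX}_0^\varepsilon)|_W\bigr)^{-1}\nX_k|_W$, $\bar{\nX}_k^\varepsilon|_W=0$.

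On $V_0$ one has $\ad(\bar{\nX}_0^\varepsilon)|_{V_0}=\ad(\n_0)|_{V_0}+\ad(\bar{\nv}_0^\varepsilon)|_{V_0}$. At $\varepsilon=0$ the classical $\Sl$-representation theory gives $V_0=\ker\ad(\m_0)\oplus\IM\ad(\n_0)=\ker\ad(\n_0)\oplus\IM\ad(\m_0)$, with $\ad(\n_0)$ mapping $\IM\ad(\m_0)\cap V_0$ isomorphically onto $\IM\ad(\n_0)\cap V_0$; an explicit right inverse is $\ad(\m_0)$ followed by an explicit rational function of the $\Sl$-Casimir and of $\ad(\h_0)$ (in suitable conventions, division by $(j+1)(N-j)$ on the weight-$(N-2j-2)$ subspace of an irreducible of highest weight $N$), so $\pi\circ\ad(\n_0)|_{V_0}$ is already onto $\IM\ad(\n_0)\cap V_0$. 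For general $\varepsilon$ I would solve $\pi\circ\ad(\bar{\nX}_0^\varepsilon)|_{V_0}=\pi\circ\ad(\n_0)|_{V_0}+\pi\circ\ad(\bar{\nv}_0^\varepsilon)|_{V_0}$ order by order: writing $\bar{\nv}_0^\varepsilon=\sum_{j\ge1}\varepsilon^j\nv^{(j)}$, $\nt_k^\varepsilon|_{V_0}=\sum_{j\ge0}\varepsilon^j\nt^{(j)}$, $\bar{\nX}_k^\varepsilon|_{V_0}=\sum_{j\ge0}\varepsilon^j\bar{\nX}^{(j)}$ with $\bar{\nX}^{(j)}\in\ker\ad(\m_0)$, at order $\varepsilon^0$ one splits $\nX_k|_{V_0}$ along $\ker\ad(\m_0)\oplus\IM\ad(\n_0)$ to read off $\bar{\nX}^{(0)}$ and $\nt^{(0)}$, and at order $\varepsilon^j$ ($j\ge1$) one applies the same splitting to $-\sum_{l=1}^{j}\ad(\nv^{(l)})\,\nt^{(j-l)}$ (which lies in $V_0$ since each $\nv^{(l)}\in\ker\ad(\s_0)$). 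Reassembling the $V_0$- and $W$-pieces yields $\nt_k^\varepsilon$ and $\bar{\nX}_k^\varepsilon$, and tracking the explicit inverses used above is exactly what produces formulas \eqref{transf} and \eqref{nf}.

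The step I expect to be the main obstacle is controlling the deformation $\ad(\bar{\nv}_0^\varepsilon)$ against the non-invertibility of $\ad(\n_0)$ on $V_0$: for an individual generic value of $\varepsilon$ the projected operator can lose rank — this is precisely the contingency already flagged, that $T^\varepsilon$ may fail to be invertible and one must then shrink the size of the deformation — so the robust route is the order-by-order construction above, which is simultaneously uniform in $\varepsilon$ and explicit. A secondary technical point, already built into the argument, is that $\ad(\s_0)$ need not be diagonalisable over $\mathbb{R}$, which is why the block decomposition is taken with respect to $\ker\ad(\s_0)$ and $\IM\ad(\s_0)$ rather than with respect to eigenspaces.
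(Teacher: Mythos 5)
Your reduction of the Lemma to the surjectivity statement $\mathfrak{g}_k=\mathrm{im}\,\ad(\bar{\nX}_0^\varepsilon)+(\ker\ad(\m_0)\cap\mathfrak{g}_k)$ is the right way to read it, and your treatment of the block $V_0=\ker\ad(\s_0)\cap\mathfrak{g}_k$ is sound: the order-by-order recursion there terminates, because each step composes the partial inverse of $\ad(\n_0)$ (which raises the $\ad(\h_0)$-weight by $2$) with $\ad(\nv^{(l)})$ (which raises it by at least $0$, the $\nv^{(l)}$ being highest-weight vectors), and the weights on $\mathfrak{g}_k$ are bounded; this is in effect the $\varepsilon$-expansion of the paper's closed formula. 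The paper, however, never splits off $W=\mathrm{im}\,\ad(\s_0)$: it takes the solution operator $\bar{N}$ of the unperturbed nilpotent problem, treats the whole of $\ad(\s_0+\bar{\nv}_0^\varepsilon)$ as the perturbation, and observes that $Q=\ad(\s_0+\bar{\nv}_0^\varepsilon)\bar{N}$ is nilpotent for exactly the weight reason you used on $V_0$ (since $\ad(\s_0)$ preserves the $\ad(\h_0)$-weight, $Q$ raises it by at least $2$). Hence $(1+Q)^{-1}$ is a finite Neumann sum, $\nt_k^\varepsilon=\bar{N}(1+Q)^{-1}\nX_k$ and $\bar{\nX}_k^\varepsilon=\pi_{\ker\ad(\m_0)}(1+Q)^{-1}\nX_k$ are polynomial in $\varepsilon$, and the statement holds for every $\varepsilon$, not only near $0$.

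The genuine weak point of your argument is the $W$-block. Inverting $\ad(\s_0+\n_0)|_W+\ad(\bar{\nv}_0^\varepsilon)|_W$ by a geometric series produces an infinite power series in $\varepsilon$ (the operator $\ad(\s_0+\n_0)^{-1}\ad(\bar{\nv}_0^\varepsilon)$ is not nilpotent in any grading), so you lose both the finiteness that the word ``explicitly'' in the Lemma refers to and the validity at resonant values of $\varepsilon$ --- which matters here, since the paper ultimately sets $\varepsilon=1$. Moreover, by forcing $\bar{\nX}_k^\varepsilon|_W=0$ you are silently performing the second reduction (to $\ker\ad(\m_0)\cap\ker\ad(\s_0)$, Section \ref{sec:Nonsemiversal}) simultaneously with the first; the resonances you flag as ``the main obstacle'' belong to that second step, not to this Lemma, whose reduction into $\ker\ad(\m_0)$ alone is resonance-free. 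The fix is already contained in your own argument: drop the $V_0\oplus W$ decomposition and apply your weight bound to $\ad(\s_0)\bar{N}$ as well, which removes any need to invert $\ad(\s_0+\n_0)$ on $W$.
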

\bpr
It should be noted that this proof  follows (but with some minor  corrections and clarifications) the proof  given in  \cite[Section 2.3]{sanders1994versal}.

Our problem is that to find the admissible transformation  \(\nt_k^\varepsilon\) and  the obstruction term \(\bar{\nX}_0^\varepsilon\in\ker\ad({\m_0})\cap \mathfrak{g}_k\)  such that the following hold
\bas
\ad(\bar{\nX}_0^\varepsilon)t_k^\varepsilon=\nX_k-\bar{\nX}_k^\varepsilon.
\eas
From \cite[Chapters 11-12]{SVM2007}
  the procedure is given to solve  the following linear problem 
\ba\label{g0}
\ad(\n_0)\nt_k^0=\nX_k-\bar{\nX}_k.
\ea
Denote the transformation \(\nt_k^0\) in equation \eqref{g0} by \(\bar{N}\nX_k.\)
Hence  from the fact that \(V={{\ker}\,\ad(\m_0)}\oplus{\rm{im}\,\ad(\n_0)}\)  one has 
\bas
{\ad (\n_{0})} {\bar N}=\pi_{\rm{im}\,\ad(\n_0)}=1-\pi_{{\ker}\,\ad(\m_0)}.
\eas
Note that  the notation \(\bar{N}\nX_k\) shows that  the operator \(\bar{N}\) acts on  \(\nX_k\).
Let now \(Q=\ad(\s_0+\bar{
v}_0^\varepsilon) {\bar N}\) and \( \hat{Q}={\bar N}\ad(\s_0+\bar{
v}_0^\varepsilon)\).
We will show that \(Q\) and \(\hat{Q}\) are nilpotent operators, so that \((1+Q)^{-1}\) and \((1+\hat{Q})^{-1}\) are both well defined.
Observe that \(\bar{N} Q=\hat{Q}  \bar{N}\).
\begin{lem}
\( \bar{N} (1+Q)^{-1}=(1+\hat{Q})^{-1}\bar{N}\).
\end{lem}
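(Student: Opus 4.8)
The plan is to obtain the claimed intertwining of the inverses as a purely formal consequence of the intertwining $\bar{N}Q=\hat{Q}\bar{N}$ just recorded, together with the (separately established) nilpotency that makes $(1+Q)^{-1}$ and $(1+\hat{Q})^{-1}$ well defined. First I would point out that the relation $\bar{N}Q=\hat{Q}\bar{N}$ carries no hidden content: since $Q=\ad(\s_0+\bar{v}_0^\varepsilon)\bar{N}$ and $\hat{Q}=\bar{N}\ad(\s_0+\bar{v}_0^\varepsilon)$, associativity of operator composition gives
\[
\bar{N}Q=\bar{N}\,\ad(\s_0+\bar{v}_0^\varepsilon)\,\bar{N}=\hat{Q}\bar{N}.
\]
Adding $\bar{N}$ to both sides yields the ``un-inverted'' identity $\bar{N}(1+Q)=(1+\hat{Q})\bar{N}$.

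Next I would invoke the nilpotency of $Q$ and $\hat{Q}$, so that $1+Q$ and $1+\hat{Q}$ are invertible with inverses given by the terminating Neumann series $\sum_{j\ge 0}(-Q)^{j}$ and $\sum_{j\ge 0}(-\hat{Q})^{j}$. From here there are two equivalent ways to close the argument. One is to left-multiply $\bar{N}(1+Q)=(1+\hat{Q})\bar{N}$ by $(1+\hat{Q})^{-1}$ and right-multiply by $(1+Q)^{-1}$, cancelling to obtain
\[
(1+\hat{Q})^{-1}\bar{N}=\bar{N}(1+Q)^{-1}.
\]
The other is to iterate $\bar{N}Q=\hat{Q}\bar{N}$ to $\bar{N}Q^{j}=\hat{Q}^{j}\bar{N}$ for all $j\ge 0$ by a one-line induction and then sum termwise,
\[
\bar{N}(1+Q)^{-1}=\sum_{j\ge 0}(-1)^{j}\bar{N}Q^{j}=\sum_{j\ge 0}(-1)^{j}\hat{Q}^{j}\bar{N}=(1+\hat{Q})^{-1}\bar{N},
\]
which is the assertion of the lemma.

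I do not expect any real obstacle internal to this statement: it is a formal identity once invertibility is granted. The substantive point, which lies \emph{outside} this lemma, is the promised nilpotency of $Q$ and $\hat{Q}$; that is where the grading of $\mathfrak{g}$ and the weight behaviour of $\bar{N}$ (a partial inverse of $\ad(\n_0)$) and of $\ad(\s_0+\bar{v}_0^\varepsilon)$ have to be used to force the Neumann series to terminate, so that all the manipulations above are legitimate rather than merely formal. Accordingly, in the write-up I would either assume that nilpotency here (it is stated just before) or, if self-containment is wanted, insert a short remark that on each graded piece $\mathfrak{g}_k$ the relevant power of $Q$ (resp.\ $\hat{Q}$) vanishes, and only then run the termwise summation above.
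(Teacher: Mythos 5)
Your proposal is correct and its second closing argument (iterating $\bar{N}Q=\hat{Q}\bar{N}$ to $\bar{N}Q^{j}=\hat{Q}^{j}\bar{N}$ and summing the terminating Neumann series) is exactly the computation the paper gives; the first closing argument via cancelling inverses is a trivial reformulation of the same thing. Your remark that the real content lies in the nilpotency of $Q$ and $\hat{Q}$, established outside this lemma, also matches the paper's structure.
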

\begin{proof}
We compute
\bas
\bar{N} (1+Q)^{-1}&=&\sum_{i=0}^\infty (-1)^i \bar{N} Q^i = \sum_{i=0}^\infty (-1)^i  \hat{Q}^i \bar{N} = (1+\hat{Q})^{-1}\bar{N},
\eas
and the Lemma is proved.
\end{proof}
We claim that  \(\nt_k^\varepsilon\) is given by
\ba\label{transf}
\nt_k^\varepsilon={\bar N} (1+Q)^{-1} \nX_k=(1+\hat{Q})^{-1}\bar{N}\nX_k=(1+\hat{Q})^{-1}\nt_k^0.
\ea
Therefore we have to first  show that \(Q\) and \(\hat{Q}\) are nilpotent  and \(\nX_k-ad(\bar{\nX}_0^\varepsilon)\nt_k^\varepsilon\in\ker\ad({\m_0})\cap \mathfrak{g}_k.\)
Assume that the  \(\nX_k\) has \(\ad(\h_0)\)-eigenvalue \(\lambda \); then the \(\bar{N}\nX_k\) has  \(\ad(\h_0)\)-eigenvalue \(\lambda+2\) since \(\ad(\h_0)\n_0=-2\n_0\).

By assumption, \(\bar{
v}_0^\varepsilon\in \ker\,\ad({\m_0})\cap\ker\,\ad({\s_0})\); hence \(\ad({\m_0})\bar{
v}_0^\varepsilon=\ad({\s_0})\bar{
v}_0^\varepsilon=0\).
Therefore the \(\ad(\h_0)\)-degree of all terms in \(\bar{
v}_0^\varepsilon\) is \(\ge 0\). Since \(\ad(\h_0)\s_0=\ad(\m_0)\s_0=0\) then its  \(\ad(\h_0)\)-degree is zero. This implies  that the \(\ad(\h_0)\)-degree
of \(Q=\ad(\s_0+\bar{
v}_0^\varepsilon) \bar{N}\ge 2\) hence \(Q\) is nilpotent. The proof for \(\hat{Q}\) is the almost the same.
It follows that  \(1+Q\) and \(1+\hat{Q}\) are invertible. What remains to be done is  to show \(\nX_k-ad(\bar{\nX}_0^\varepsilon)\nt_k^\varepsilon\in\ker\ad({\m_0})\cap \mathfrak{g}_k\):
\bas
\ad(\s_0+\n_0+\bar{
v}_0^\varepsilon)t_k^\varepsilon&=&\ad(\s_0+\n_0+\bar{
v}_0^\varepsilon) {\bar N} (1+Q)^{-1}\nX_k
\\
&=&\left(\ad(\n_0) \bar{N}+Q\right) (1+Q)^{-1}\nX_k
\\
&=&(1+Q-(1-\ad(\n_0){\bar N}))(1+Q)^{-1}\nX_k
\\
&=&
\nX_k-(1-\ad(\n_0) \bar{N})(1+Q)^{-1}\nX_k
\\&=&
\nX_k-\pi_{{\ker}\,\ad(\m_0)}(1+Q)^{-1}\nX_k.
\eas
We rewrite this as
\bas
\nX_k=\pi_{{\ker}\,\ad(\m_0)}(1+Q)^{-1}\nX_k+\ad(\s_0+\n_0+\bar{
v}_0^\varepsilon) {\bar N} (1+Q)^{-1}\nX_k,
\eas
and we define 
\ba\label{nf}
\bar{\nX}^{\varepsilon}_k=\pi_{{\ker}\,\ad(\m_0)} (1+Q)^{-1}\nX_k.
\ea
This concludes the proof of  Lemma \ref{MAINTHM}.
\epr
\subsection{Nonsemisimple versal normal form}\label{sec:Nonsemiversal}
We now extend the versal normal form computation problem from the nilpotent
to the nonsemisimple case.
We follow \cite[Section 2.4]{sanders1994versal}.
We consider the problem
\[
\ad(\nX_0^\varepsilon) \nt_k^\varepsilon =\bar{\nX}_k^\varepsilon-\bar{\bar{\nX}}_k^\varepsilon,\quad  \bar{\nX}_k^\varepsilon\in\ker\ad(\m_0),\quad  \bar{\bar{\nX}}_k^\varepsilon\in\ker\ad(\m_0)\cap
\ker\ad (\s_0).
\]
\def\im{\mathrm{im \ }}
We observe that the right hand side is by definition in \(\ker\ad(\m_0)\cap\im\ad(\s_0)\) and \(\ad(\s_0+\n_0)\) is invertible on this subspace.
We define operators \(K_k:\ker\ad(\m_0)|\mathfrak{g}_k\rightarrow\ker\ad(\m_0)|\mathfrak{g}_k\) such that \(K_k\neq I_{\ker\ad(\m_0)|\mathfrak{g}_k}\)
for \(\varepsilon\neq 0\). Let
\[
K_k=\ad(\bar{\nX}_0^\varepsilon) (1+\hat{Q})^{-1} \pi_{\ker\ad(\n_0)} .
\]
The projection on \(\ker\ad(\n_0) \) is necessary, in order not to interfere with the previous normal form calculation in Section \ref{sec:2}.
We now show that \(K_k:\ker\ad(\m_0)|\mathfrak{g}_k\rightarrow\ker\ad(\m_0)|\mathfrak{g}_k\):
\bas
K_k&=&\ad(\bar{\nX}_0^\varepsilon)(1+\hat{Q})^{-1} \pi_{\ker\ad(\n_0)} 
\\&=&\ad(\bar{\nX}_0^\varepsilon)(1+\hat{Q})^{-1} (1-\bar{N}\ad(\n_0))  
\\&=&\ad(\bar{\nX}_0^\varepsilon)(1+\hat{Q})^{-1} (1+\hat{Q}-\bar{N}\ad(\nX_0^\varepsilon))  
\\&=&\ad(\bar{\nX}_0^\varepsilon) (1-(1+\hat{Q})^{-1}\bar{N}\ad(\nX_0^\varepsilon))  
\\&=&\ad(\bar{\nX}_0^\varepsilon) (1-\bar{N}(1+Q)^{-1}\ad(\nX_0^\varepsilon))  
\\&=& (1-\ad(\bar{\nX}_0^\varepsilon) \bar{N}(1+Q)^{-1})  \ad(\nX_0^\varepsilon)
\\&=& (1-(1-\pi_{\ker\ad(\m_0)}) (1+Q)^{-1}-Q(1+Q)^{-1})  \ad(\nX_0^\varepsilon)
\\&=& (1-(1+Q)^{-1}+\pi_{\ker\ad(\m_0)} (1+Q)^{-1}-Q(1+Q)^{-1})  \ad(\nX_0^\varepsilon)
\\&=& \pi_{\ker\ad(\m_0)} (1+Q)^{-1}  \ad(\nX_0^\varepsilon).
\eas
The map \(\hat{K}_k =K_k \ad^{-1}(\s_0+\n_0)\) is well defined on \(\ker\ad(\m_0)\cap\im\ad(\s_0)|\mathfrak{g}_k\) and reduces to \( 1-\ad(\s_0+\n_0) \bar{N}(1+Q)^{-1}= 1-\ad(\s_0+\n_0)(1+\hat{Q})^{-1}\bar{N}  \) when the perturbation is zero and this reduces to \(1\) on \(\ker\ad(\m_0)\).
This in turn implies that \(\hat{K}_k\) is invertible in a neighborhood of \(\varepsilon=0\),
which means we can find a transformation generator to bring \(\bar{\nX}_k^\varepsilon\) into the normal form \(\bar{\bar{\nX}}_k^\varepsilon\).
The values of \(\varepsilon  \) for which \( \hat{K}_k\) fails to be invertible are called resonances; they play a role in the bifurcation analysis of the \(L_4\)-problem, cf. Section \ref{sec:3body}.

The method we describe here does prove that it is possible to compute the transformation explicitly and if the dimension of \(\mathfrak{g}_k\) is a bit higher,
it may help to reduce the dimension of the linear algebra problem, since one can restrict to \(\ker\ad(\m_0)\).

\section{2D nilpotent -- invariant formulation}\label{sec:2Di}

\subsection{The versal normal form of the linear system}
In this section, we intend to study the versal normal form of two-dimensional nilpotent singularities.
We use this example to illustrate the method in great detail.
This leads at times to statements that sound a bit simplistic;
these are nevertheless stated explicitly so that it is clear what
the flow of the argument is in the later examples, where the complexity of the calculation can obscure what is going on.

Consider  the following two-dimensional perturbed singular system.
\ba\label{2N}
\begin{pmatrix} \dot{x}\\ \noalign{\medskip}\dot{y}\end{pmatrix}
=\begin{pmatrix}\varepsilon \,{\tilde m}_{{1,1}}& \varepsilon {\tilde m}_{1,2}\\ \noalign{\medskip}
	{\tilde m}_{{2,1}}&\varepsilon \,{\tilde m}_{{2,2}}\end{pmatrix} \begin{pmatrix} x\\ \noalign{\medskip}y\end{pmatrix}
=	\tilde{X}^{\varepsilon}_0 \begin{pmatrix} x\\ \noalign{\medskip}y\end{pmatrix},
\ea
where we regard \(\tilde{m}_{{i,j}}\) for all \(i,j=1,2\) as elements of a commutative ring \(R\) of functions of certain parameters taking their
values in \(\mathbb{R}\) (since we want to work with real differential equations) and \({\tilde m}_{2,1}\in {R^*}\) where \(R^*\) denotes  the  invertible elements in the ring \({R}\).
Invertible in this context means that if we use asymptotic estimates, dividing by an {\em invertible} element does not produce
big numbers, which could ruin the asymptotic estimate.
As a consequence one is not allowed to divide by the noninvertible 
elements in the course of the normal form computation.

Since \({\tilde m}_{2,1}\) is invertible,  there exists an invertible  linear transformation 
\bas
T_{(0)}^\varepsilon=\begin{pmatrix}-{\tilde m}^{-1}_{2,1} &0\\ \noalign{\medskip}0&1\end{pmatrix},
\eas 
that takes  \eqref{2N} (with \(	\tilde{X}^{\varepsilon}_0T_{(0)}^\varepsilon=T_{(0)}^\varepsilon {X}^{\varepsilon}_0 \)) to the following 
\ba\label{2Nb}
\begin{pmatrix} \dot{x}\\ \noalign{\medskip}\dot{y}\end{pmatrix}
=\begin{pmatrix} \varepsilon\,{ m}_{{1,1}}& \varepsilon\,{ m}_{{1,2}}\\ \noalign{\medskip}
-1&\varepsilon\,{ m}_{{2,2}}\end{pmatrix} \begin{pmatrix} x\\ \noalign{\medskip}y\end{pmatrix}
= {X}^{\varepsilon}_0 \begin{pmatrix} x\\ \noalign{\medskip}y\end{pmatrix},
\ea
(where \(m_{1,1}=\tilde{m}_{1,1}\), \(m_{2,2}=\tilde{m}_{2,2}\),
and \(m_{1,2}=-\tilde{m}_{1,2}\tilde{m}_{2,1}\)), so that \(- {X}^{0}_0\) is in  Jordan normal form (the minus sign is there to be consistent with the definitions of the \(\nA\) and \(\nB\)-families to follow shortly).

We now rewrite  Equation \eqref{2Nb} to the operator form 
\bas
 \mathsf{\nX}^{\varepsilon}_0=\left(\varepsilon\,{ m}_{{1,1}}x+\varepsilon\,{ m}_{{1,2}}y \right) \frac{\partial}{\partial x}+\left(-x+\varepsilon\,{ m}_{{2,2}} y\right) \frac{\partial}{\partial y},
\eas
and  express \( {\nX}^{\varepsilon}_0\) to the \({\nA}\) and \({\nB}\) families introduced by \cite{baider1992further} (but with \({\nA}\) and \({\nB}\) interchanged) as 
\bas
 {\nX}^{\varepsilon}_0= {\nB}^{1}_0+\frac{\varepsilon}{2}\left({ m}_{{1,1}}+\,{ m}_{{2,2}}\right){\nA}^0_0+\varepsilon\left({\,{ m}_{{1,1}}-\,{ m}_{{2,2}}}\right) {\nB}^0_0+\varepsilon  m_{1,2}{\nB}^{-1}_0.
\eas
We now want (this is the choice of normal form style) \( {\nX}^{\varepsilon}_0-{\nB}^{1}_0\) to  commute with \({\nB}^{-1}_0\);  a general expression of linear vector fields commuting  is \(\varepsilon_{\nB}{\nB}^{-1}_0+ \varepsilon _{\nA} {\nA}^0_0,\) corresponding to the differential equation   
\ba\label{2NS}
\begin{pmatrix} \dot{x}\\ \noalign{\medskip}\dot{y}\end{pmatrix}
=\begin{pmatrix} \varepsilon_{\nA} & \varepsilon_{\nB}\\ \noalign{\medskip}
	-1&\varepsilon_{\nA}\end{pmatrix} \begin{pmatrix} x\\ \noalign{\medskip}y\end{pmatrix}
=\bar{X}^{\varepsilon}_0 \begin{pmatrix} x\\ \noalign{\medskip}y\end{pmatrix},
\ea
(the fact that the \(\varepsilon_{\nA}\) are on the diagonal and will stay there if we go to higher dimensions prompted the interchange of \(\nA\) and \(\nB\) with respect to the definitions
in \cite{baider1992further}) and the differential operator
\ba\label{2NDO}
	\bar{\mathsf{\nX}}^{\varepsilon}_0={\nB}^{1}_0+ \varepsilon_{\nA} {\nA}^0_0+\varepsilon_{\nB}{\nB}^{-1}_0.
\ea
We want to find the transformation that is named \(T_{(1)}^\varepsilon\)  such that  \( {X}^{\varepsilon}_0T_{(1)}^\varepsilon=T_{(1)}^\varepsilon	\bar{X}^{\varepsilon}_0.\) 
The necessary condition  under which such transformation exists is that the characteristic polynomial of \(X^\varepsilon_0\) and \(	\bar{X}^{\varepsilon}_0\)  be the same.
In what follows  using the  characteristic polynomial of \( {X}^{\varepsilon}_0\) and \(	\bar{X}^{\varepsilon}_0\)
we find the  \(\varepsilon_{\nA},\varepsilon_{\nB}.\)   
The characteristic polynomial of \( {X}^{\varepsilon}_0\) and \(	\bar{X}^{\varepsilon}_0\) are given, respectively by
\bas
\chi( {X}^{\varepsilon}_0)&=&{\lambda}^{2}- \varepsilon\left( \,m_{{1,1}}+m_{{2,2}} \right) \lambda+{
	\varepsilon}^{2}m_{{2,2}}m_{{1,1}}+\varepsilon\,m_{{1,2}},
\\
\chi(	\bar{X}^{\varepsilon}_0)&=&{\lambda}^{2}-2\,\varepsilon_{\nA}\lambda+{\varepsilon_{\nA}}^{2}+\varepsilon_{\nB}.
\eas
We define the invariants of \(X^\varepsilon_0\) as
\[
\chi( {X}^{\varepsilon}_0)={\lambda}^{2}-\Delta_1\lambda+\Delta_2.
\]
and we identify \(\Delta_1\) as the trace of \( {X}^{\varepsilon}_0\)
and \(\Delta_2\) as the determinant.
Since the equivalent matrices have the same characteristic polynomial then we find that 
\ba\label{v1n2}
 \varepsilon_{\nA}&=& \frac{1}{2}\Delta_1,
\\\label{v2n2}
 \varepsilon_{\nB}&=&\Delta_2-\frac{1}{4}\Delta^2_1.
\ea
We close this part by  the following theorem (this is not much of a theorem in this particular problem, but we formulate it as such because it is a basic step in this paper).
\begin{thm}
	There exists an invertible transformation \(T_{(1)}^\varepsilon\), defined by
	\ba\label{t2n}
	T_{(1)}^\varepsilon= \begin{pmatrix} 1&0\\ \noalign{\medskip}\frac{\varepsilon}{2}(m_{2,2}-m_{{1,1}})
		&1\end{pmatrix},
	\ea
	which brings  the matrix \eqref{2N} to \eqref{2NS}.
\end{thm}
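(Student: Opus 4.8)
The plan is to prove the theorem by a direct check, using that \(\varepsilon_{\nA}\) and \(\varepsilon_{\nB}\) have already been fixed in \eqref{v1n2}--\eqref{v2n2} by equating the characteristic polynomials of \(X^\varepsilon_0\) and \(\bar X^\varepsilon_0\). Two points are immediate: \(T^\varepsilon_{(1)}\) in \eqref{t2n} is unipotent, so \(\det T^\varepsilon_{(1)}=1\) identically in the parameters, whence \(T^\varepsilon_{(1)}\in\mathrm{GL}(2,\mathbb{R})\) for every \(\varepsilon\) (no degeneracy of the sort mentioned in Section~\ref{sec:2} occurs), and \(T^0_{(1)}=I\). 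What remains is the intertwining relation \(X^\varepsilon_0 T^\varepsilon_{(1)}=T^\varepsilon_{(1)}\bar X^\varepsilon_0\), i.e. the instance of \eqref{E2} we want here. Once that holds, the statement for \eqref{2N} follows by composition: \eqref{2N} was already carried to \eqref{2Nb} by \(T^\varepsilon_{(0)}\), so \(T^\varepsilon_{(0)}T^\varepsilon_{(1)}\) carries \eqref{2N} to \eqref{2NS}.

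To verify the intertwining relation I would multiply out the two \(2\times 2\) products and compare entry by entry, abbreviating by \(\delta\) the off-diagonal entry of \(T^\varepsilon_{(1)}\). Three of the four scalar equations are essentially empty: the entry holding the fixed \(\nB^1_0\)-term is \(-1=-1\), and the two diagonal entries read \(\varepsilon m_{1,1}=\varepsilon_{\nA}-\delta\) and \(\varepsilon m_{2,2}=\varepsilon_{\nA}+\delta\), whose sum recovers \eqref{v1n2} (\(\Delta_1\) being the trace) and whose difference is just the defining formula for \(\delta\). The only equation with content is the remaining off-diagonal one; substituting \(\varepsilon m_{1,1}-\varepsilon_{\nA}=-\delta\) turns it into \(\varepsilon_{\nB}=\varepsilon m_{1,2}-\delta^{2}\), which is exactly \(\Delta_2-\tfrac{1}{4}\Delta_1^{2}\), i.e. \eqref{v2n2}. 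So this last identity closes \emph{because} \(\varepsilon_{\nA},\varepsilon_{\nB}\) were chosen by the characteristic-polynomial condition, and it is the one place where the quadratic-in-\(\varepsilon\) term \(-\tfrac{1}{4}\Delta_1^{2}\) of \(\varepsilon_{\nB}\) actually enters.

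It is worth noting which transformation this is: among all matrices intertwining \(X^\varepsilon_0\) and \(\bar X^\varepsilon_0\) with value \(I\) at \(\varepsilon=0\), \(T^\varepsilon_{(1)}\) is the one for which \(T^\varepsilon_{(1)}-I\) is a multiple of \(\nB^{-1}_0\), hence in particular lies in \(\mathrm{im}\,\ad(\m_0)\); this is the \(\Sl\)-costyle of Section~\ref{sec:2}, and it is what keeps the answer as simple as \eqref{t2n}.

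The computation is routine, so there is no genuine obstacle, but the point to get right is the \emph{orientation} of the deforming generator: it must point along \(\nB^{-1}_0\), i.e. occupy the off-diagonal slot that is not taken by the nilpotent \(\nB^1_0\), since a perturbation in the \(\nB^1_0\)-direction commutes with \(X^0_0\) and does nothing to the linear part to first order. With that orientation the content-bearing equation reduces to \eqref{v2n2} as above; with the opposite one it would instead force \(\varepsilon_{\nB}=\varepsilon m_{1,2}\), which is incompatible with \eqref{v2n2} unless \(m_{1,1}=m_{2,2}\). Thus the verification simultaneously proves the theorem and fixes the precise shape of \(T^\varepsilon_{(1)}\).
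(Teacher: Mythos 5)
Your verification is correct and it follows the same route the paper intends (solve the linear equation \(X^\varepsilon_0T^\varepsilon_{(1)}=T^\varepsilon_{(1)}\bar X^\varepsilon_0\) entry by entry), but your closing remark about the ``orientation'' of the deforming generator is not a stylistic aside: it is an actual correction to the formula printed in \eqref{t2n}. As printed, \(T^\varepsilon_{(1)}-I\) sits in the \((2,1)\) slot, i.e.\ along \(\nB^1_0\); conjugation by such a lower-triangular unipotent matrix leaves the \((1,2)\) entry of \(X^\varepsilon_0\) untouched, so it would force \(\varepsilon_{\nB}=\varepsilon m_{1,2}\), whereas \eqref{v2n2} gives \(\varepsilon_{\nB}=\varepsilon m_{1,2}-\tfrac{\varepsilon^2}{4}(m_{1,1}-m_{2,2})^2\). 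These differ by an \(O(\varepsilon^2)\) term that vanishes only when \(m_{1,1}=m_{2,2}\), exactly as you say. With the entry \(\tfrac{\varepsilon}{2}(m_{2,2}-m_{1,1})\) moved to the \((1,2)\) slot (the \(\nB^{-1}_0\)-direction, consistent with the \(\Sl\)-costyle and with \(\mathrm{im}\,\ad(\m_0)\) being upper triangular here), your four scalar equations are precisely right: the \((2,1)\) equation is vacuous, the two diagonal equations reproduce \eqref{v1n2} and the definition of \(\delta\), and the \((1,2)\) equation reduces to \eqref{v2n2}, which is where the characteristic-polynomial normalization is actually used. The paper's own proof consists of the single sentence that \(T^\varepsilon_{(1)}\) is obtained by solving the linear equation, so it offers no check against which to catch the transposition; your computation supplies that check. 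One small additional point you handle correctly but the theorem statement glosses over: \eqref{t2n} intertwines \eqref{2Nb} with \eqref{2NS}, so the map carrying \eqref{2N} itself to \eqref{2NS} is the composite \(T^\varepsilon_{(0)}T^\varepsilon_{(1)}\).
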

\bpr
The transformation \eqref{t2n} is obtained using equation \( {X}^{\varepsilon}_0T_{(1)}^\varepsilon=T_{(1)}^\varepsilon	\bar{X}^{\varepsilon}_0\). This is a linear equation in \(T_{(1)}^\varepsilon\) 
and the existence of a solution is shown here explicitly.
\epr
\subsection{Some representation theory }
Following \cite{baider1992further} we  describe vector fields of arbitrary order in a bigraded infinite dimensional Lie algebra \(\nfa\oplus\nfb\),
where \(\nfa\) and \(\nfb\) are bigraded Lie subalgebras and the
\(\oplus\) denotes the direct sum of modules, not of Lie algebras,
as can be seen from the Lie brackets below, and
spanned by elements 
\(
{\nA}_m^n\in\nfa_m , 0\leq n\leq m, 
{\nB}_k^l\in\nfb_k, -1\leq l\leq k+1\)
(i.e. 
\(\dim\nfa_m=m+1\)
and 
\(\dim\nfb_k=k+3\) 
) 
where \({\nA}_{m}^{n}\) and \({\nB}_{k}^{l}\) are defined as 
\ba
{\nA}_{m}^{n}&:=& x^{n}y^{m-n}\left(x\frac{\partial}{\partial x}+y\frac{\partial}{\partial y} \right),\qquad\qquad\qquad\qquad\qquad\qquad\; (0\leq n\leq m),
\\
{\nB}_{k}^{l}& :=&\frac{x^ly^{k-l}}{k+2}\left((k-l+1)x\frac{\partial}{\partial
	x}-(l+1)y\frac{\partial}{\partial y}\right), \qquad\qquad\; (-1\leq l\leq k+1),
\ea
 with brackets 
\ba
{[}{\nA}^l_k,{\nA}^n_m]&=&(m-k)\,{\nA}^{l+n}_{k+m},
\\
{[}{\nB}^l_k,{\nA}^n_m]&=&\frac{m(m+1)}{m+k+2}\left(\frac{n}{m}-\frac{l+1}{k+2}\right)\,{\nA}^{l+n}_{k+m}-k\,{\nB}^{l+n}_{k+m},
\\
{[}{\nB}^l_k,{\nB}^n_m]&=&(k+m+2)\left(\frac{n+1}{m+2}-\frac{l+1}{k+2}\right){\nB}^{l+n}_{k+m}.
\ea
We can now write an arbitrary order \(s\) vector field as
\bas
\nX_s:=
\sum_{l=0}^s {\na}^l_s {\nA}^l_s
+\sum_{l=-1}^{s+1} {\nb}^l_s {\nB}^l_s .
\eas

A general element of order \(s\) in \(\ker\ad({\nB}_0^{-1})\) can be written as
\ba
\bar{\nX}_s=
 \bar{\na}^0_s {\nA}^0_s
+\bar{\nb}^{-1}_s {\nB}^{-1}_s .
\ea

\subsection{Nonlinear normal form reduction}

We now have to solve the equation (in \({t}^\varepsilon_s\))
\ba
\ad({\nB}_0^{-1})(\ad(\bar{\nX}_0^\varepsilon){\nt}^\varepsilon_s-\nX_s)=0,
\ea
where
\ba
{\nt}^\varepsilon_s=
\sum_{l=0}^s \nalpha^l_s {\nA}^l_s
+\sum_{l=-1}^{s+1} \nbeta^l_s {\nB}^l_s .
\ea
Recall that
\bas
\bar{\nX}_s&=&
\bar{\na}^0_s {\nA}^0_s
+\bar{\nb}^{-1}_s {\nB}^{-1}_s,\qquad\qquad
\nX_s
=
\sum_{l=0}^s \na^l_s {\nA}^l_s
+\sum_{l=-1}^{s+1} \nb^l_s {\nB}^l_s .
\eas
We have to solve
\bas
 \bar{\na}^0_s {\nA}^0_s
+\bar{\nb}^{-1}_s {\nB}^{-1}_s &=&\sum_{l=-1}^{s+1} \nb^l_s {\nB}^l_s +\sum_{l=0}^s \na^l_s {\nA}^l_s\\
&&+ \nbeta^{s}_s {\nB}^{s+1}_s
-2 s \varepsilon_{\nA} \nbeta^{s+1}_s {\nB}^{s+1}_s
+ \nalpha^{s-1}_s {\nA}^{s}_s
-2 s \varepsilon_{\nA} \nalpha^s_s {\nA}^s_s
\\&&
-\sum_{k=0}^{s}\left( -(s+2-k) \nbeta^{k-1}_s +2 s \varepsilon_{\nA} \nbeta^k_s -(k+2) \varepsilon_{\nB} \nbeta^{k+1}_s  \right) {\nB}^{k}_s
\\&&
-\sum_{k=1}^{s-1} \left( -(s+1-k) \nalpha^{k-1}_s+2 s \varepsilon_{\nA}  \nalpha^k_s -(k+1) \varepsilon_{\nB} \nalpha^{k+1}_s \right) {\nA}^{k}_s
\\&&
-\left(2 s \varepsilon_{\nA}  \nbeta^{-1}_s -\varepsilon_{\nB}\nbeta^{0}_s \right) {\nB}^{-1}_s
-\left(2 s \varepsilon_{\nA} \nalpha^0_s 
-\varepsilon_{\nB} \nalpha^{1}_s 
\right) {\nA}^{0}_s.
\eas
Thus we find, if we look at the \({\nB}^{s+1}_s\)-term, that
\ba
 \nbeta^{s}_s&=&2 s \varepsilon_{\nA} \nbeta^{s+1}_s-\nb^{s+1}_s , 
 \ea
where \(\nbeta^{s+1}_s\) is a free parameter, to be determined later
at our convenience.

Similarly, looking at the \({\nA}^s_s\) terms we find
\ba
\nalpha^{s-1}_s &=&
2 s \varepsilon_{\nA} \nalpha^s_s -
\na^{s}_s  ,
\ea
where \(\nalpha^{s}_s\) is the free parameter.
For \(0\leq k \leq s\) we find, looking at the \({\nB}^k_s\),
\ba
(s+2-k) \nbeta^{k-1}_s &=&2 s \varepsilon_{\nA} \nbeta^k_s -(k+2) \varepsilon_{\nB} \nbeta^{k+1}_s
-  \nb^{k}_s.
\ea
For \(1\leq k \leq s-1\) we find, looking at the \({\nA}^k_s\),
\ba
 (s+1-k) \nalpha^{k-1}_s=2 s \varepsilon_{\nA}  \nalpha^k_s -(k+1)\varepsilon_{\nB}  \nalpha^{k+1}_s
- \na^{k}_s.
\ea
Then
\ba
\bar{\nX}_s&=&
\left( \nb^{-1}_s-2 s \varepsilon_{\nA}  \nbeta^{-1}_s +\varepsilon_{\nB}\nbeta^{0}_s \right) {\nB}^{-1}_s
+\left(\na^{0}_s  -2 s \varepsilon_{\nA} \nalpha^0_s 
+\varepsilon_{\nB} \nalpha^{1}_s 
 \right) {\nA}^{0}_s.
\ea
Let us now specialize to \(s=1\).
We find
\bas
\nalpha^{0}_1 &=&
2 \varepsilon_{\nA} \nalpha^1_1-
\na^{1}_1,
\\
 \nbeta^{1}_1&=&2 \varepsilon_{\nA}\nbeta^{2}_1-\nb^{2}_1,\\ 
\nbeta^{0}_1 
&=&\frac{1}{2}\left( 4\varepsilon^2_{\nA}  -3 \varepsilon_{\nB}\right) \nbeta^{2}_1
-\frac{1}{2}\nb^{1}_1-\varepsilon_{\nA}\nb^{2}_1,
\\
 \nbeta^{-1}_1 &=& \frac{1}{3} \varepsilon_{\nA}\left( 4\varepsilon^2_{\nA} -7\varepsilon_{\nB}\right) \nbeta^{2}_1
+\frac{2}{3} (\varepsilon_{\nB}-\varepsilon^2_{\nA}) \nb^{2}_1
- \frac{1}{3}\left(\varepsilon_{\nA}\nb^{1}_1+  \nb^{0}_1\right),
\eas
and
\bas
\bar{\nX}_1&=&
\left( 
\nb^{-1}_1
-2 \varepsilon_{\nA}  \nbeta^{-1}_1
 +\varepsilon_{\nB}\nbeta^{0}_1 
 \right) {\nB}^{-1}_1
+\left(\na^{0}_1 -2 \varepsilon_{\nA}\nalpha^0_s 
+\varepsilon_{\nB} \nalpha^{1}_1
\right) {\nA}^{0}_1
\\&=&\left( 
\nb^{-1}_1+\frac{2}{3}\varepsilon_{\nA}\nb^{0}_1
- (-\frac{4}{3}\varepsilon^2_{\nA}+\varepsilon_{\nB})\frac{1}{2}\nb^{1}_1
+\frac{7}{3} ( \varepsilon_{\nB}
-\frac{4}{7}\varepsilon^2_{\nA}) \varepsilon_{\nA}\nb^{2}_1 
\right.\left.
+(\frac{20}{3}   \varepsilon^2_{\nA} \varepsilon_{\nB}
-\frac{8}{3} \varepsilon^4_{\nA}  
 -\frac{3}{2}\varepsilon^2_{\nB}) \nbeta^{2}_1
\right) {\nB}^{-1}_1
\\&&
+\left(\na^{0}_1 +2 \varepsilon_{\nA}\na^{1}_1 
+(-4\varepsilon^2_{\nA} +\varepsilon_{\nB}) \nalpha^{1}_1
\right) {\nA}^{0}_1.
\eas 
Choosing \(\nbeta^2_1=0\) and \(\nalpha^1_1=0\) (in accordance with the \(\Sl\)-costyle) we find
\def\Tr{\mathrm{Tr\ }}
\def\Det{\mathrm{Det\ }}	\ba
\bar{\nX}_1&=&\left( 
\nb^{-1}_1+\frac{2}{3}\varepsilon_{\nA}\nb^{0}_1
-\frac{1}{2} (-\frac{4}{3}\varepsilon^2_{\nA}+\varepsilon_{\nB})\nb^{1}_1
+\frac{7}{3} ( \varepsilon_{\nB}
-\frac{4}{7}\varepsilon^2_{\nA}) \varepsilon_{\nA}\nb^{2}_1 
\right) {\nB}^{-1}_1\nonumber
+\left(\na^{0}_1 +2 \varepsilon_{\nA}\na^{1}_1
\right) {\nA}^{0}_1\\
&=&\left( 
\nb^{-1}_1+\frac{1}{3}(\Tr X_0^\varepsilon) \nb^{0}_1
+ \frac{1}{2}(\frac{1}{12}\Tr^2 X_0^\varepsilon+\Det X_0^\varepsilon )\nb^{1}_1
+\frac{7}{6} (\frac{3}{28} \Tr^2 X_0^\varepsilon -\Det X_0^\varepsilon) (\Tr X_0^\varepsilon)  \nb^{2}_1 
\right) {\nB}^{-1}_1\nonumber
\\&&
+\left(\na^{0}_1 +(\Tr X_0^\varepsilon) \na^{1}_1
\right) {\nA}^{0}_1.\label{form:universal}
\ea

It follows from the definitions in \cite{baider1992further} that 
\ba
{\nA}^0_1&=&\iota_2(\na_0^1)=\iota_2(x)=x\begin{pmatrix} x\\y \end{pmatrix},
\\
{\nB}^{-1}_1&=&\frac{1}{3}\iota_1(\nb^{-1}_1)=\frac{1}{3}\iota_1(x^3)=
x^2\begin{pmatrix}0\\1
\end{pmatrix}.
\ea
\section{3D irreducible nilpotent}\label{sec:3D}
  \subsection{The versal normal form of the linear system}
  In this section, we discuss versal deformation of  three-dimensional nilpotent singularities.
  Consider the deformed nilpotent system 
  \ba\label{3N}
  \begin{pmatrix} \dot{x}\\ \noalign{\medskip}\dot{y}\\ \noalign{\medskip}\dot{z} \end{pmatrix}
  =\begin{pmatrix}\varepsilon\,{\tilde m}_{{1,1}}& \varepsilon{\tilde m}_{1,2}&\varepsilon\,{\tilde m}_{{1,3}}
  	\\ \noalign{\medskip}{\tilde m}_{{2,1}}&\varepsilon\,{\tilde m}_{{2,2}}&\varepsilon{\tilde m}_{2,3}
  	\\ \noalign{\medskip}\varepsilon\,{\tilde m}_{{3,1}}&{\tilde m}_{{3,2}}&\varepsilon
  	\,{\tilde m}_{{3,3}}\end{pmatrix} \begin{pmatrix} x\\ \noalign{\medskip}y\\\noalign{\medskip}z\end{pmatrix}=  \tilde{X}^\varepsilon_0\begin{pmatrix} x\\ \noalign{\medskip}y\\\noalign{\medskip}z\end{pmatrix},
  \ea
   where the elements \({\tilde m}_{2,1}, {\tilde m}_{3,2}\in {R^*}.\) 
      By applying the following invertible transformation   
  \ba\label{T3}
    T_{(0)}^\varepsilon&=&\begin{pmatrix} \tilde{m}_{2,1}&0&0\\ \noalign{\medskip}0&-\tilde{m}_{2,1}^2&0
  \\ \noalign{\medskip}0&-\varepsilon\,\tilde{m}_{{3,1}}{\tilde{m}_{21}}&\alpha
\end{pmatrix}, 
\ea
where
\[\alpha=-\frac{1}{2}{\varepsilon}^{2}\tilde{m}_{{3,1}} \left( \varepsilon\,\tilde{m}_{{2,3}}\tilde{m}_{{3,1}}+{\tilde{m}_{2,1}}
\,\tilde{m}_{{2,2}} \right) +\frac{1}{2}{\tilde{m}_{2,1}} \left( {\varepsilon}^{2}\tilde{m}_{{3,1}}\tilde{m}_{{3,3
}}+{\tilde{m}_{2,1}}\,{\tilde{m}_{3,2}} \right),\]
   the system \eqref{3N} transforms to the following system
  \ba\label{3NN}
  \begin{pmatrix} \dot{x}\\ \noalign{\medskip}\dot{y}\\ \noalign{\medskip}\dot{z} \end{pmatrix}
  = \begin{pmatrix} \varepsilon\,m_{{1,1}}&\varepsilon\,m_{{1,2}}&\varepsilon\,m_{{1,3}}
  	\\ \noalign{\medskip}-1&\varepsilon\,m_{{2,2}}&\varepsilon\,m_{{2,3}}
  	\\ \noalign{\medskip}0&-2&\varepsilon
  	\,m_{{3,3}}\end{pmatrix} \begin{pmatrix} x\\ \noalign{\medskip}y\\\noalign{\medskip}z\end{pmatrix}=   X^\varepsilon_0\begin{pmatrix} x\\ \noalign{\medskip}y\\\noalign{\medskip}z\end{pmatrix},
  \ea
  in which
  \bas
  m_{1,1}&=&  {\tilde m}_{1,1},
  		\\
  		m_{1,2}&=&
-{\varepsilon}^{2}\tilde{m}_{{1,3}}\tilde{m}_{{3,1}}-{\tilde{m}_{2,1}}\,\varepsilon\,\tilde{m}_{{1,2}},
\\
m_{1,3}& =&
\frac {\tilde{m}_{{1,3}}\alpha}{\tilde{m}_{2,1}},
		\\
		m_{2,2}&=&
	{\frac {\varepsilon\, \left( \varepsilon\,\tilde{m}_{{2,3}}\tilde{m}_{{3,1}}+{\tilde{m}_{2,1}}\,m_{{
					2,2}} \right) }{{\tilde{m}_{2,1}}}},
					\\
					m_{2,3}&=&
				-\frac {\tilde{m}_{{2,3}}\alpha}{\tilde{m}_{2,1}^2},
						\\
						m_{3,3}&=&{\frac {\varepsilon\, \left( -\varepsilon\,\tilde{m}_{{2,3}}\tilde{m}_{{3,1}}+{\tilde{m}_{2,1}}\,m_{
									{3,3}} \right) }{{\tilde{m}_{2,1}}}}
						.
							  \eas
							  \begin{rem}
							  	Note that   due to  the assumption 
							  	\({\tilde m}_{2,1}, {\tilde m}_{3,2}\in {R^*}\) 
							  	the transformation   given by \eqref{T3}
							  	when  \(\varepsilon=0\)  is invertible.   
							  \end{rem}
   Now,   we  writing down  \eqref{3NN} in terms of vector fields from \(\mathscr{A},\mathscr{B},\mathscr{C}\) given in  \cite{gazor2017vector} to find 
 the  following 
   \bas
   {\nX}^{\varepsilon}_0&=& {\nB}^{1}_{0,0}+\varepsilon\,m_{{1,3}}{\nC}^{-2}_{0,0}+\frac{1}{2}\varepsilon\, \left( m_{{1,2}}-2\,m_{{2,3}} \right){\nC}^{-1}_{0,0}-\frac{1}{2}\varepsilon\,m_{{2,2}}{\nC}^{0}_{0,0}+\frac{1}{4}\varepsilon\, \left( m_{{1,2}}+2\,m_{{2,3}} \right){\nB}^{-1}_{0,0}
   \\
  &+& \frac{1}{2}\varepsilon\, \left( 2\,m_{{1,1}}+m_{{2,2}} \right) {\nB}^{0}_{0,0}+\varepsilon\, \left( m_{{1,1}}+m_{{2,2}}+m_{{3,3}} \right) \mathsf{A}^0_{0,0}.
   \eas
   Due to \(\Sl\)-style normal form, in order to find the versal normal form of \({\nX}^{\varepsilon}_0\) we seek  the  vector fields which belong to  \(\ker \ad(\nB^{-1}_{0,0}).\)
      Hence the following special structure constants   associated to the \(\nB^{-1}_{0,0}\) are given
  \bas
{[{\nB}^{-1}_{0,0},{\rm  B}_{i,k}^l]}&=&(l+1){\nB}^{l-1}_{i,k},
\\
{[{\nB}^{-1}_{0,0},\mathsf{A}_{i,k}^l]}&=&l\mathsf{A}^{l-1}_{i,k},
\\
{[{\nB}^{-1}_{0,0},{\nC}_{i,k}^l]}&=&\frac{(l+2)(2i+3-l)}{(2i-l+1)}{\nC}^{l-1}_{i,k},\qquad  \hbox{for}\,\,\,  l<2i+1,
\\
{[{\nB}^{-1}_{0,0},{\nC}_{i,k}^l]}&=&0,\qquad\qquad \qquad\qquad \hbox{for}\,\,\,  l=2i+1,
\\
{[{\nB}^{-1}_{0,0},{\nC}_{i,k}^l]}&=&(2i+4) {\nC}^{2i+1}_{i,k},\qquad  \hbox{for}\,\,\,  l=2i+2.
   \eas
  Therefore we  obtain that 
   \ba\label{2NDObar}
   \bar{\mathsf{\nX}}^{\varepsilon}_0={\nB}^{1}_{0,0}+\varepsilon_{\nA} \mathsf{A}^0_{0,0}+\varepsilon_{\nB}{\nB}^{-1}_{0,0}+ \varepsilon_{\nC} {\nC}^{-2}_{0,0},
   \ea
 and    the correspondence differential equation of \(\bar{\mathsf{\nX}}^{\varepsilon}_0\) is
  \ba\label{3NNM}
  \begin{pmatrix} \dot{x}\\ \noalign{\medskip}\dot{y}\\ \noalign{\medskip}\dot{z} \end{pmatrix}
  = \begin{pmatrix} \varepsilon_{\nA}&2\varepsilon_{\nB}&\varepsilon_{\nC}
  	\\ \noalign{\medskip}-1&\varepsilon_{\nA}&\varepsilon_{\nB}
  	\\ \noalign{\medskip}0&-2&\varepsilon_{\nA}
  	\end{pmatrix} \begin{pmatrix} x\\ \noalign{\medskip}y\\\noalign{\medskip}z\end{pmatrix}=  {\bar X}^{\varepsilon}_0\begin{pmatrix} x\\ \noalign{\medskip}y\\\noalign{\medskip}z\end{pmatrix}.
  \ea
  Now we are ready to find  the versal parameters \(\varepsilon_{\nA},\varepsilon_{\nB}\) and \(\varepsilon_{\nC}.\)
  As before by computing  the 
characteristic polynomial of \(\bar{X}^\varepsilon_0\) and   \(\bar{X}^\varepsilon_0\)  we get
  \bas
 \chi(\bar{X}^\varepsilon_0)&=&{\lambda}^{3}-3\,\varepsilon_{{\nA}}{\lambda}^{2}+ \left( 3\,\varepsilon_{{\nA}}^{2}+4\,
 \varepsilon_{{\nB}} \right) \lambda-\varepsilon_{{\nA}}^{3}-4\,\varepsilon_{{\nA}}
 \varepsilon_{{\nB}}-2\,\varepsilon_{{\nC}},
   \\
   \chi({X}^\varepsilon_0 )&=&{\lambda}^{3}-\varepsilon\, \left( m_{{1,1}}+m_{{2,2}}+m_{{3,3}} \right) {\lambda}^{2
   }+\varepsilon\, \big( \varepsilon\,m_{{1,1}}m_{{2,2}}
+\varepsilon\,m_{{1,1}}m_
   {{3,3}}+\varepsilon\,_{{2,2}}m_{{3,3}}
   \\&&
   +m_{{1,2}}+2\,m_{{2,3}} \big) \lambda
      -
   \varepsilon\, \left( {\varepsilon}^{2}m_{{1,1}}m_{{2,2}}m_{{3,3}}+2\,
   \varepsilon\,m_{{1,1}}m_{{2,3}}+\varepsilon\,m_{{1,2}}m_{{3,3}}+2\,m_{{1,3}}
   \right). 
     \eas
  These define the invariants \(\Delta_i, i=1,2,3\) by
  \[
   \chi( X^\varepsilon_0)=\lambda^3-\Delta_1 \lambda^2 +\Delta_2 \lambda -\Delta_3.
  \]
  Hence we find 
  \bas
  \varepsilon_{\nA}&=& \frac{1}{3}\Delta_1,
 \\
 \varepsilon_{\nB}&=&\frac{1}{4}\Delta_2-\frac{1}{12}\Delta^2_1,
 \\
  \varepsilon_{\nC}&=&\frac{1}{2}\Delta_3-\frac{1}{6}\Delta_1\Delta_2+\frac{1}{27}\Delta^3_1.
  \eas
   \begin{thm}\label{Transformation3d}
  	There exists invertible transformation \(	T_{(1)}^\varepsilon\) as 
  	\ba\label{t3n}
 	T_{(1)}^\varepsilon=   \begin{pmatrix} 1&\varepsilon t_1&\varepsilon t_2\\ \noalign{\medskip}0&1&\varepsilon t_3\\ \noalign{\medskip}0&  0&1\end {pmatrix},
  	\ea 
    	in which
  	\bas
  t_1&=& \frac{1}{3}\left(m_{{3,3}}+m_{{2,2}}-2\,m_{{1,1}}\right),
  	\\
   	t_2&=&\frac{5}{36}{{\varepsilon\,m_{{1,1}} \left( m_{{1,1}}-m_{{2,2}}-m_{{3,3}}
   			\right) }}-\frac{1}{36}\,\varepsilon\,m_{{2,2}} \left(m_{{2,2	}}-7\,m_{{3,3}} \right) -\frac{1}{36}\,\varepsilon\,{m_{{3,3}}}^{2}
   \\&&
   +\frac{1}{2}\,m_{{2,3}}-\frac{1}{4}\,m_{{1,
   			2}},
   	  \\
  	 t_3&=&\frac{1}{3}\left(m_{{3,3}}-\frac{1}{2}\,m_{{2,2}}-\frac{1}{2}\,m_{{1,1}}\right),
  	\eas     
  	which brings  the matrix \eqref{3NN} to \eqref{3NNM}.
  \end{thm}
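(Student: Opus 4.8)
The plan is to proceed exactly as in the two-dimensional theorem: equation~\eqref{E2}, i.e.\ \(\nX^\varepsilon_0 T^\varepsilon_{(1)}=T^\varepsilon_{(1)}\bar{\nX}^\varepsilon_0\), is a \emph{linear} equation in the three unknown entries \(t_1,t_2,t_3\) of the ansatz~\eqref{t3n}, so the whole proof reduces to showing this system is solvable and extracting the solution. Two things are immediate: the matrix in~\eqref{t3n} is unit upper triangular, hence \(\det T^\varepsilon_{(1)}=1\), so \(T^\varepsilon_{(1)}\) is invertible for every \(\varepsilon\) and every admissible \(m_{i,j}\) (the ``take \(|\nY^\varepsilon_0|\) smaller'' escape clause from Section~\ref{sec:2} never triggers here), and \(T^0_{(1)}=I\). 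The versal parameters \(\varepsilon_{\nA},\varepsilon_{\nB},\varepsilon_{\nC}\) have already been pinned down by imposing \(\chi(\nX^\varepsilon_0)=\chi(\bar{\nX}^\varepsilon_0)\), the necessary condition for solvability.

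First I would write \(\nX^\varepsilon_0=\n_0+\varepsilon P\) and \(\bar{\nX}^\varepsilon_0=\n_0+\varepsilon\bar P\), where \(\n_0\) is the common \(\varepsilon=0\) limit — a single nilpotent Jordan block, with subdiagonal \(-1,-2\) — and \(\bar P\) is polynomial in \(\varepsilon\) since each of \(\varepsilon_{\nA},\varepsilon_{\nB},\varepsilon_{\nC}\) is divisible by \(\varepsilon\). Putting \(T^\varepsilon_{(1)}=I+\varepsilon S\) with \(S\) strictly upper triangular, \eqref{E2} becomes
\[
\ad(\n_0)\,S=(\bar P-P)+\varepsilon\,(S\bar P-PS).
\]
Since \(\ad(\n_0)\) strictly lowers the \(\ad(\h_0)\)-weight, this is a triangular recursion in powers of \(\varepsilon\) that terminates after finitely many steps; it is nothing but the grade-zero — hence finite — instance of the \((1+Q)^{-1}\) expansion of Lemma~\ref{MAINTHM}, with \(Q\) nilpotent. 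In practice I would just expand~\eqref{E2} entrywise: the \((2,1)\) and \((3,2)\) entries agree identically, and the \((3,1)\), the diagonal, and the \((1,2),(1,3),(2,3)\) entries are solved successively for \(t_1\), then \(t_3\), then \(t_2\), producing the stated closed forms. One checks along the way that only the integers \(2,3,4,36\) appear in denominators, so no division by a noninvertible element of \(R\) occurs.

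The one nontrivial point is \emph{consistency}: \eqref{E2} is nine scalar equations in three unknowns, so one must know that the six not used to define \(t_1,t_2,t_3\) hold automatically. I would establish this structurally rather than by brute force on the cubic: for \(\varepsilon\) near \(0\) both \(\nX^\varepsilon_0\) and \(\bar{\nX}^\varepsilon_0\) are nonderogatory (their common limit \(\n_0\) is a single Jordan block, and nonderogatoriness is an open condition), so each is similar to the companion matrix of its characteristic polynomial; since we arranged \(\chi(\nX^\varepsilon_0)=\chi(\bar{\nX}^\varepsilon_0)\), the two matrices are similar and a conjugation exists. That one may normalise it to the unit-upper-triangular shape~\eqref{t3n} is the \(\Sl\)-costyle choice: the residual freedom is right multiplication by the centraliser of \(\bar{\nX}^\varepsilon_0\), used to fix the diagonal to \(1\) and to kill the \(\ker\ad(\n_0)\)-component, leaving exactly the three-parameter strictly-upper-triangular family. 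Finally, as both sides of~\eqref{E2} with the explicit \(t_1,t_2,t_3\) are polynomial in \(\varepsilon\) and the \(m_{i,j}\) and the identity holds for small \(\varepsilon\), it holds identically — equivalently, one substitutes and verifies the polynomial identity directly.

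The main obstacle is thus this consistency/elimination step: once existence of \emph{some} conjugating matrix is secured by the nonderogatory argument, one still has to carry out the elimination to confirm that the \(\Sl\)-costyle representative is precisely~\eqref{t3n} with the listed \(t_1,t_2,t_3\). I expect no conceptual difficulty there, only bookkeeping in the variables \(m_{i,j}\) and \(\varepsilon\), which is well suited to a computer algebra system and in the spirit of the rational approach of Section~\ref{sec:2}.
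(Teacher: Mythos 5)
Your proposal is correct and follows essentially the same route as the paper, whose proof consists of the single remark that \(T_{(1)}^\varepsilon\) is obtained by solving the linear conjugation equation \(X^\varepsilon_0 T_{(1)}^\varepsilon=T_{(1)}^\varepsilon\bar X^\varepsilon_0\) entrywise for \(t_1,t_2,t_3\). The extra material you supply (invertibility from unit upper triangularity, the nonderogatory/companion-matrix argument for consistency of the overdetermined system, and the observation that the residual centraliser freedom realises the \(\Sl\)-costyle normalisation) is a justification the paper leaves implicit, and in the end you, like the paper, settle the matter by direct substitution of the stated \(t_1,t_2,t_3\).
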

  \bpr
  The transformation \(T_{(1)}^\varepsilon\) is obtained using equation \(X^\varepsilon_0 T_{(1)}^\varepsilon=T_{(1)}^\varepsilon{\bar X}^\varepsilon_0.\) 
  \epr
\subsection{Quadratic nonlinear versal normal form of triple-zero}
In this part we shall  compute the nonlinear normal form of the  following parametric  vector fields  with triple zero bifurcation point 
\ba\label{triple}
\bar{\mathsf{\nX}}^{\varepsilon}_0+\nX_1,
\ea
where 
\ba\label{nnl}
\nX_1&=&\sum_{j=0}^{2} a_j \mathsf{A}^j_{1,0}+\sum_{j=-1}^{3} b_j{\nB }^{j}_{1, 0}+\sum^{0}_{j=-2} c^j_{ 1} {\nC}^{j}_{-1, 1}+\sum^{4}_{j=-2} c^j_{ 0} {\nC}^{j}_{1, 0},
\ea
or equivalently 

\ba\label{nnl1}
\nX_1&=&\sum_{i+j+k=2}  a^{(1)}_{{i,j,k}} x_1^i x_2^jx_3^k \frac{\partial}{\partial x_1}+\sum_{i+j+k=2}  a^{(2)}_{{i,j,k}} x_1^i x_2^jx_3^k \frac{\partial}{\partial x_1}+\sum_{i+j+k=2}  a^{(3)}_{{i,j,k}} x_1^i x_2^jx_3^k \frac{\partial}{\partial x_3}.
\ea
The coefficients of \eqref{nnl} and \eqref{nnl1}  are related by  these  relations:
\bas
a^{(1)}_{{0,0,2}}&=&c^{-2}_1,\quad\quad\quad\quad
a^{(1)}_{{2,0,0}}=\frac{1}{2}\,b_2+c^{2}_0+a_2,\quad\quad
a^{(1)}_{{0,2,0}}=b_0-c^{-2}_{1}+\frac{2}{3}\,c^{0}_{0},\quad
\\
a^{(2)}_{{0,2,0}}&=&c^{-1}_1-c^{1}_0+a_1,\quad\quad\quad
a^{(2)}_{{0,0,2}}=b_{-1}-\frac{1}{4}\,c^{-1}_{0},\quad\quad\quad
a^{(3)}_{{2,0,0}}=30\,c^{4}_{0},\quad
\\
a^{(3)}_{{0,0,2}}&=&-\frac{1}{2}\,b_0+\frac{1}{6}\,c^{0}_{0}+a_0,\quad
a^{(1)}_{{1,1,0}}=b_1+c^{1}_0+a_1,\quad\quad
a^{(1)}_{{1,0,1}}=\frac{1}{3}\,c^{0}_{0}+\frac{1}{2}\,b_0+a_0+
a^{-2}_{1},\quad
\\
a^{(2)}_{{1,1,0}}&=&-\frac{1}{2}\,b_2-4\,c^{2}_0+a_2,\quad\quad
a^{(2)}_{{1,0,1}}=-\frac{1}{2}\,c^{1}_0-c^{-1}_1,\quad\quad
a^{(2)}_{{0,1,1}}=-\frac{2}{3}\,c^{0}_{0}+\frac{1}{2}\,b_0+a_0,\quad
\\
a^{(3)}_{{1,1,0}}&=&-2\,b_3+20\,c^3_0,\quad\quad
a^{(3)}_{{1,0,1}}=-\frac{1}{2}\,b_2+2\,c^{2}_0+a_2+2
\,a^{0}_{1},\quad\quad c^{(1)}_{{0,1,1}}=2\,b_{-1}+c^{-1}_{0},
\\
a^{(3)}_{{0,1,1}}&=&-b_1+a_1+c^{1}_0,\quad\quad
a^{(3)}_{{0,2,0}}=-b_2-2\,c^{0}_{1}+4\,c^{2}_0,\quad\quad c^{(2)}_{{2,0,0}}=-b_3-5\,c^3_0.\quad
\eas
\begin{thm}
The normal form of  \eqref{triple}  
is given by 
\ba\label{2NDO2}
\bar{\mathsf{\nX}}^{\varepsilon}_0={\nB}^{1}_{0,0}+\varepsilon_{\nA} \mathsf{A}^0_{0,0}+\varepsilon_{\nB}{\nB}^{-1}_{0,0}+ \varepsilon_{\nC} {\nC}^{-2}_{0,0}+\bar{\nc}^0_1 {\nC}^{-2}_{1,0}+\bar{\nc}^0_{-1} {\nC}^{-2}_{-1,1}
+\bar{\nb}^{1}_1 {\nB}^{-1}_{1,0}+\bar{a}^{1}_0\mathsf{A}^0_{1,0},
\ea
or equivalently 
\bas
\bar{\nX}_1&=&\left(x_{{3}}\varepsilon_{{\nC}}+2\,x_{{2}}\varepsilon_{{\nB}}+x_{{1}}\varepsilon_{{\nA}}+\bar{\nc}^0_1{x_{{3}}}^{2}+\bar{\nc}^0_{-1} \left( x_{{1}}x_{{3}}-{x_{{2}}}^{2}
\right)+2\bar{\nb}^{1}_1x_{{2}}x_{{3}}+\bar{a}^{1}_0x_{{3}}x_{{1}}
\right) \frac{\partial}{\partial x_1}
\\&&
+
\left(-x_1+\varepsilon_{{\nA}}x_{{2}}+\varepsilon_{{\nB}}x_{{3}}+\bar{\nb}^{1}_1 {x_{{3}}}^{2} +\bar{a}^{1}_0x_{{3}}x_2\right) \frac{\partial}{\partial x_2}
+\left(-2\,x_{{2}}+\varepsilon_{{\nA}}x_{{3}}+\bar{a}^{1}_0x_{{3}}^2\right)\frac{\partial}{\partial x_3},
\eas
where 
\bas
5\bar{\nc}^0_{-1}&:=&{3}\,a^{(1)}_{{1,0,1}}-{2}\,a^{(3)}_{{0,0,2}}
-a^{(2)}_{{0,1,1}}
-{2}\,a^{(1)}_{{0,2,0}}
-\left(
\frac{5}{2}\,a^{(3)}_{{1,1,0}}
+{6}\,a^{(2)}_{{2,0,0}}
\right)\varepsilon_{\nC}
-\Big(
{2}\,a^{(1)}_{{2,0,0}}
-{3}\,a^{(3)}_{{1,0,1}}
+{2}\,a^{(3)}_{{0,2,0}}
\\&&
+a^{(2)}_{{1,1,0}}
\Big)\varepsilon_{\nB}
-\left(
\frac{1}{2}\,a^{(1)}_{{1,1,0}}
-a^{(2)}_{{0,2,0}}
+{4}\,a^{(2)}_{{1,0,1}}
+a^{(3)}_{{0,1,1}}
\right)\varepsilon_{\nA}
+\left(
{4}a^{(2)}_{{2,0,0}}
+{3}a^{(3)}_{{1,1,0}}
\right)\varepsilon_{\nB}\varepsilon_{{\nA}}
\\&&
-\frac{5}{2}\,a^{(3)}_{{2,0,0}} \varepsilon_{\nC} \varepsilon_{\nA}
-
\left(
\frac{1}{2}\,a^{(2)}_{{1,1,0}}
-\frac{3}{2}\,a^{(3)}_{{1,0,1}}
+\,a^{(3)}_{{0,2,0}}
+a^{(1)}_{{2,0,0}}
\right) \varepsilon_{\nA}^2
+ 
\left({\frac {3}{4}\,a^{(3)}_{{1,1,0}}}
+a^{(2)}_{{2,0,0}}\right)\varepsilon_{\nA}^3
\\&&
+{3}\,a^{(3)}_{{2,0,0}}\varepsilon_{\nB}  \varepsilon_{\nA}^2
+{\frac {3}{4}\,a^{(3)}_{{2,0,0}}} \varepsilon_{\nA}^4,
\eas
\bas
10\bar{a}^{1}_0&:=&
\left(
\,a^{(1)}_{{0,2,0}}
+\,a^{(1)}_{{1,0,1}}
+{3}\,a^{(2)}_{{0,1,1}}
+6\,a^{(3)}_{{0,0,2}}
\right)
+\left(-2\,a^{(2)}_{{2,0,0}}
+\frac{5}{2}\,a^{(3)}_{{1,1,0}}\right) \varepsilon_{\nC}
\\&&
+\left(6a^{(1)}_{{2,0,0}}
+{3}a^{(2)}_{{1,1,0}}
+a^{(3)}_{{0,2,0}}
+a^{(3)}_{{1,0,1}}\right)\varepsilon_{\nB}
+
2\left({\frac {3\,}{4}}a^{(1)}_{{1,1,0}}
+\,a^{(2)}_{{0,2,0}}
+\,a^{(2)}_{{1,0,1}}
+{\frac 
{3\,}{4}}a^{(3)}_{{0,1,1}}
\right)\varepsilon_{\nA}
\\&&
+\frac{5}{2}\,a^{(3)}_{{2,0,0}} \varepsilon_{\nC}\varepsilon_{\nA}
+\left(
-12\,a^{(2)}_{{2,0,0}}
+\,a^{(3)}_{{1,1,0}}
\right)\varepsilon_{\nA}\varepsilon_{\nB}
+\left(
-{3}a^{(2)}_{{2,0,0}}
+\frac{1}{4}\,a^{(3)}_{{1,1,0}}
\right)\varepsilon_{\nA}^3+\,a^{(3)}_{{2,0,0}}\varepsilon_{\nA}^2 \varepsilon_{\nB}
\\&&
+\frac{1}{4}\,a^{(3)}_{{2,0,0}}\varepsilon_{\nA}^4,
\\
6\bar{\nb}^{1}_1 &:=&
\,a^{(1)}_{{0,1,1}}
+4\,a^{(2)}_{{0,0,2}}
+\frac{1}{18} \left( 
4\,a^{(1)}_{{2,0,0}}
+\,a^{(2)}_{{1,1,0}}
+\,a^{(3)}_{{0,2,0}}
-{5 \,a^{(3)}_{{1,0,1}}} 
\right) \varepsilon_{\nC}
+ \left( 
\,a^{(1)}_{{1,1,0}}
-\,a^{(3)}_{{0,1,1}} 
\right) \varepsilon_{\nB}
\\&&
+\frac{1}{2} \left( 
\,a^{(1)}_{{0,2,0}}
+\, a^{(1)}_{{1,0,1}}
+\,a^{(2)}_{{0,1,1}}
-2\,a^{(3)}_{{0,0,2}} 
\right) \varepsilon_{\nA}
-\frac{1}{12} \left( 
{ {28}\,a^{(2)}_{{2,0,0}}}
+{ {19}\,a^{(3)}_{{1,1,0}}} 
\right) \varepsilon_{\nC}\varepsilon_{\nA}-\frac {10}{3}\,a^{(3)}_{{2,0,0}}\varepsilon_{\nB}\varepsilon_{\nC} 
\\&&
- \left(
4\,a^{(2)}_{{2,0,0}}
+\,a^{(3)}_{{1,1,0}} 
\right) \varepsilon_{\nB}^{2}
+\frac {5}{6} \left(
2\,a^{(1)}_{{2,0,0}}
-\,a^{(2)}_{{1,1,0}}
-\,a^{(3)}_{{0,2,0}}
-\,a^{(3)}_{{1,0,1}} 
\right)\varepsilon_{\nA} \varepsilon_{\nB}
+ \frac{1}{4}\left(
\,a^{(1)}_{{1,1,0}}
-\,a^{(3)}_{{0,1,1 }} 
\right) \varepsilon_{\nA}^{2}
\\&&
-\frac{2}{3} \left( 
4\,a^{(2)}_{{2,0,0}}
+\,a^{(3)}_{{1,1,0}} 
\right) \varepsilon_{\nA}^{2}\varepsilon_{\nB}-\frac{4}{9}\,a^{(3)}_{{2,0,0}}\varepsilon_{\nA}{\varepsilon_{{\nB}}}^{2}
+\frac{1}{12}\left( 
2\,a^{(1)}_{{2,0,0}}
- {b_{1,1,0}}
- {a^{(3)}_{{0,2,0}}}
- {a^{(3)}_{1,0,1}}
\right) \varepsilon_{\nA}^{3}
\\&&
-\frac {25}{12}\,a^{(3)}_{{2,0,0}} \varepsilon_{\nA}^{2}\varepsilon_{\nC}
-\frac {5}{6}\,a^{(3)}_{{2,0,0}}\varepsilon_{\nA}^{3}\varepsilon_{\nB}
-{\frac{1}{24}}\left(
 4\,a^{(2)}_{{2,0,0}}
+ {a^{(3)}_{{1,1,0}}} 
\right) \varepsilon_{\nA}^{4}
-\frac{1}{24} {a^{(3)}_{{2,0,0}}\varepsilon_{\nA}^{5}},
\\\nonumber
\bar{\nc}^0_1&:=&
a^{(1)}_{{0,0,2}}
+ \frac{1}{12}\left( 
\,a^{(1)}_{{1,1,0}}
+2\,a^{(2)}_{{0,2,0}}
-10\,a^{(2)}_{{1,0,1}}
-3\,a^{(3)}_{{0,1,1}}
\right) \varepsilon_{\nC}+ \frac{1}{3}\left( \frac{1}{2}\,a^{(1)}_{
	{0,1,1}}-\,a^{(2)}_{{0,0,2}} \right) \varepsilon_{\nA}-\frac{1}{3}\,a^{(3)}_{{2,0,0}}{ \varepsilon_{\nC}}^{2}
\\&&
+ \frac{1}{5}\left( 
\,a^{(1)}_{{0,2,0}}
+\,a^{(1)}_{{1,0,1}}
-2 \,a^{(2)}_{{0,1,1}}
+\,a^{(3)}_{{0,0,2}} 
\right) \varepsilon_{\nB}
+ \left(
-{\frac {11\,}{15}}a^{(2)}_{{2,0,0}}
+\frac{1}{6}\,a^{(3)}_{{1,1,0}} 
\right) \varepsilon_{\nB}\varepsilon_{\nC}
\\&&
+\frac{2}{15} \left( 
\,a^{(1)}_{{1,1,0}}
-2\,a^{(2)}_{{0,2,0}}
-2\,a^{(2)}_{{1,0,1}}
+\,a^{(3)}_{{0,1,1}} 
\right) \varepsilon_{{\nA}}\varepsilon_{\nB}
+\frac{1}{30} \left( 
\,a^{(1)}_{{0,2,0}}
+\,a^{(1)}_{{1,0,1}}
-2\,a^{(2)}_{{0,1,1}}
+\,a^{(3)}_{{0,0,2}} 
\right) \varepsilon_{\nA}^{2}
\\&&
+ \frac{1}{36}\left( 
-\,a^{(1)}_{{2,0,0}}
-10\,a^{(2)}_{{1,1,0}}
-13\,a^{(3)}_{{0,2,0}}
+11\,a^{(3)}_{{1,0,1}}
\right) \varepsilon_{\nA}\varepsilon_{\nC}
+\frac{1}{5} \big( 
\,a^{(1)}_{{2,0,0}}
-2 \,a^{(2)}_{{1,1,0}}
+\,a^{(3)}_{{0,2,0}}
\\&&
+\,a^{(3)}_{{1,0,1}} 
\big) {\varepsilon_{{\nB} }}^{2}
+\frac{1}{36}\,a^{(3)}_{{2,0,0}}\varepsilon_{\nA}\varepsilon_{\nB}\varepsilon_{\nC}
+a^{(3)}_{{2,0,0}}{ \varepsilon_{\nB}}^{3}
+\frac {7}{90} \left( 
{\,a^{(1)}_{{2,0 ,0}}}
-{2\,a^{(2)}_{{1,1,0}}}
+{\,a^{(3)}_{{0,2,0}}}
+\,a^{(3)}_{{1,0,1}} 
\right) \varepsilon_{\nA}^{2}\varepsilon_{\nB}
\\&&
+ \frac{1}{120}\left( 
{a^{(1)}_{{1,1,0}}}
-2a^{(2)}_{{0,2,0}}
-2a^{(2)}_{{1,0,1}}
+a^{(3)}_{{0,1,1}} 
\right) \varepsilon_{\nA}^{3}
+ \left( 
-{\frac {7}{180}\,a^{(2)}_{{2,0,0}}}
+{\frac {11}{72} \,a^{(3)}_{{1,1,0}}} 
\right)\varepsilon_{\nA}^{2} \varepsilon_{\nC}
\\&&
+ \frac {11}{15}\left( 
-{\,a^{(2)}_{{2,0,0}}}
+{\frac {1}{2}\,a^{(3)}_{{1,1,0}}}
\right) \varepsilon_{\nA}\varepsilon_{\nB}^{2}
+ \frac{1}{9}\left( 
-\,a^{(2)}_{{2,0,0}}
+\frac{1}{2}\,a^{(3)}_{{1,1,0}}
\right) \varepsilon_{\nB}\varepsilon_{\nA}^{3}+{\frac {11}{72}\,a^{(3)}_{{2,0,0}} \varepsilon_{\nA}^{3}}\varepsilon_{\nC}
\\&&
+{\frac {34}{45}\,a^{(3)}_{{2,0,0}}\varepsilon_{\nA}^{2}}{ \varepsilon_{\nB}}^{2}
%
+\frac{1}{360} \left( 
a^{(1)}_{2,0,0} 
-2a^{(2)}_{1,1,0}
+a^{(3)}_{{0,2,0}}
+a^{(3)}_{{1,0,1}}
\right) \varepsilon_{\nA}^{4}
+\frac{1}{720} \left( 
-2a^{(2)}_{{2,0 ,0}}
+a^{(3)}_{{1,1,0}}
\right) \varepsilon_{\nA}^{5}
\\&&
+{\frac {5}{72}\,a^{(3)}_{{2,0,0}} \varepsilon_{\nB}\varepsilon_{\nA}^{4}}
+{\frac{1}{720} {a^{(3)}_{{2,0,0}}\varepsilon_{\nA}^{6}}}.
\eas
\end{thm}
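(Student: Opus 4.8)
\textbf{Reduction to Lemma~\ref{MAINTHM}.} The plan is to recognize the statement as a single instance of Lemma~\ref{MAINTHM} at order $k=1$. Because the organizing center of \eqref{3NNM} is purely nilpotent, I would take $\s_0=0$, $\n_0=\nB^1_{0,0}$, $\m_0=\nB^{-1}_{0,0}$, and $\h_0$ the multiple of $\nB^0_{0,0}$ completing the $\Sl$-triple $\langle\n_0,\h_0,\m_0\rangle$, and set $\bar{\nv}^\varepsilon_0=\varepsilon_{\nA}\mathsf{A}^0_{0,0}+\varepsilon_{\nB}\nB^{-1}_{0,0}+\varepsilon_{\nC}\nC^{-2}_{0,0}$, so that $\bar{\nX}^\varepsilon_0=\n_0+\bar{\nv}^\varepsilon_0$ is precisely \eqref{2NDObar}. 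The structure constants for $\ad(\nB^{-1}_{0,0})$ displayed above show immediately that $\mathsf{A}^0_{0,0}$, $\nB^{-1}_{0,0}$ and $\nC^{-2}_{0,0}$ all lie in $\ker\ad(\m_0)$, hence $\bar{\nv}^\varepsilon_0\in\ker\ad(\m_0)\cap\ker\ad(\s_0)$ and Lemma~\ref{MAINTHM} applies to $\nX_1\in\mathfrak{g}_1$: with $\bar N$ the style operator ($\ad(\n_0)\bar N=1-\pi_{\ker\ad(\m_0)}$) and $Q=\ad(\bar{\nv}^\varepsilon_0)\bar N$, the normal form is $\bar{\nX}^\varepsilon_1=\pi_{\ker\ad(\m_0)}(1+Q)^{-1}\nX_1$ and the generator $\nt^\varepsilon_1=\bar N(1+Q)^{-1}\nX_1$, exactly as in \eqref{transf}--\eqref{nf}.

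\textbf{Shape of the normal form.} Next I would pin down $\ker\ad(\m_0)\cap\mathfrak{g}_1$. Viewing $\mathfrak{g}_1$ (the $18$-dimensional space of quadratic vector fields in $x_1,x_2,x_3$) as an $\Sl$-module under $\langle\n_0,\h_0,\m_0\rangle$, as in \cite{baider1992further} and the $2$D treatment of Section~\ref{sec:2Di}, the bracket relations of the $\mathscr{A},\mathscr{B},\mathscr{C}$ families from \cite{gazor2017vector} yield a decomposition into four irreducible strings of lengths $3,5,7,3$ (total $18$), whose highest-weight vectors — the generators of $\ker\ad(\nB^{-1}_{0,0})$ — are $\mathsf{A}^0_{1,0}$, $\nB^{-1}_{1,0}$, $\nC^{-2}_{1,0}$, $\nC^{-2}_{-1,1}$. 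So $\ker\ad(\m_0)\cap\mathfrak{g}_1$ is exactly the $4$-dimensional span of these four vectors, which forces $\bar{\nX}^\varepsilon_1$ to be of the shape written in \eqref{2NDO2}; only the four scalars $\bar{a}^1_0$, $\bar{\nb}^1_1$, $\bar{\nc}^0_1$, $\bar{\nc}^0_{-1}$ then remain to be computed.

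\textbf{Computing the scalars and changing coordinates.} To obtain these scalars I would construct $\bar N$ explicitly: on each of the four strings it sends a weight vector to a definite scalar multiple of the next vector up the string and annihilates the top, the scalars being read off from the $\ad(\nB^1_{0,0})$-action (again via \cite{gazor2017vector}). Exactly as in the $2$D computation of Section~\ref{sec:2Di}, $\bar N$ raises the $\ad(\h_0)$-eigenvalue by $2$ while $\ad(\mathsf{A}^0_{0,0})$, $\ad(\nB^{-1}_{0,0})$, $\ad(\nC^{-2}_{0,0})$ change it by $0$, $2$, $4$, so $Q$ raises the $\ad(\h_0)$-eigenvalue by at least $2$; since the eigenvalues on $\mathfrak{g}_1$ lie in an interval of length $12$ (the spin-$3$ string), $Q$ is nilpotent and $(1+Q)^{-1}=\sum_{i\ge 0}(-Q)^i$ is a finite sum — it is precisely that longest string that produces the $\varepsilon_{\nA}^{6}$ terms in the statement. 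Expanding $(1+Q)^{-1}\nX_1$ in the $\mathscr{A},\mathscr{B},\mathscr{C}$-basis and applying $\pi_{\ker\ad(\m_0)}$ gives $\bar{\nX}^\varepsilon_1$ with the four coefficients as polynomials in $\varepsilon_{\nA},\varepsilon_{\nB},\varepsilon_{\nC}$ and in the coefficients $a_j,b_j,c^j_0,c^j_1$ of \eqref{nnl}. Finally I would write $\nC^{-2}_{1,0},\nC^{-2}_{-1,1},\nB^{-1}_{1,0},\mathsf{A}^0_{1,0}$ as vector fields in $x_1,x_2,x_3$ to get the explicit $\partial/\partial x_i$-form of $\bar{\nX}_1$, and invert the (linear, invertible) relations listed just before the statement to re-express $a_j,b_j,c^j_0,c^j_1$ through the monomial coefficients $a^{(m)}_{i,j,k}$, which turns the four coefficients into the displayed formulas for $\bar{a}^1_0,\bar{\nb}^1_1,\bar{\nc}^0_1,\bar{\nc}^0_{-1}$.

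\textbf{Main obstacle.} There is no conceptual gap: invertibility of $1+Q$ is guaranteed by Lemma~\ref{MAINTHM}, and the shape of the normal form is dictated by the $\Sl$-decomposition above. The hard part will be purely computational bookkeeping — getting the $\mathscr{A},\mathscr{B},\mathscr{C}$ structure constants and the induced $\bar N$-scalars exactly right, carrying the expansion of $(1+Q)^{-1}$ to sixth order on an $18$-dimensional module with three deformation parameters, and then inverting the $18\times 18$ coordinate change — a task routine in principle but only practicable with a computer algebra or symbolic manipulation system, in which the long $\varepsilon$-polynomials of the statement are easy to mistranscribe.
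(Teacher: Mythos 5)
Your proposal is correct and its only real content, as you acknowledge, is a large symbolic computation; but it is organized differently from the paper's proof. The paper does not invoke Lemma~\ref{MAINTHM} here: it writes down a general quadratic generator ${\nt}^\varepsilon_1$ with $18$ unknown coefficients, solves the homological system $\ad({\nB}_0^{-1})(\ad(\bar{\nX}_0^\varepsilon){\nt}^\varepsilon_1-\nX_1)=0$ by direct linear elimination (obtaining the transformation with four residual free parameters, listed in Appendix~A), and then reads off the four normal form coefficients after setting those free parameters to zero ``in accordance with the $\Sl$-costyle.'' You instead apply the closed-form operator solution $\nt^\varepsilon_1=\bar N(1+Q)^{-1}\nX_1$, $\bar{\nX}^\varepsilon_1=\pi_{\ker\ad(\m_0)}(1+Q)^{-1}\nX_1$ of Lemma~\ref{MAINTHM} (legitimately, since $\s_0=0$ and the displayed structure constants show $\bar{\nv}^\varepsilon_0\in\ker\ad(\m_0)$), together with the correct $\Sl$-decomposition of $\mathfrak{g}_1$ into strings of lengths $3,5,7,3$ with top vectors $\mathsf{A}^0_{1,0},\nB^{-1}_{1,0},\nC^{-2}_{1,0},\nC^{-2}_{-1,1}$; your nilpotency bound on $Q$ and the resulting $\varepsilon_{\nA}^6$ cutoff are consistent with the stated formulas. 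What your route buys is a parameter-free recipe with guaranteed termination; what the paper's route buys is that the free parameters are kept explicit (Appendix~A) before being fixed. The one point you should make explicit is that your formula automatically selects the $\mathrm{im}\,\ad(\m_0)$ costyle, whereas the paper fixes the costyle by zeroing four specific monomial coefficients of the generator; since the normal form coefficients visibly depend on these parameters (equations \eqref{f1}--\eqref{f4}), you must verify that the two choices coincide in order to reproduce the displayed formulas literally rather than just some equivalent normal form.
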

\bpr 
In order to find the transformation  the following  linear system should be solved
\ba\label{3s}
\ad({\nB}_0^{-1})(\ad(\bar{\nX}_0^\varepsilon){\nt}^\varepsilon_{\nB}-\nX_2)=0,
\ea
where 
\bas
{\nt}^\varepsilon_{1}&=&\sum_{j=0}^{2} \alpha_j \mathsf{A}^j_{1,0}+\sum_{j=-1}^{3} \beta_j{\nB}^{j}_{1, 0}+\sum^{0}_{j=-2} \gamma^j_{ 1} {\nC}^{j}_{-1, 1}+\sum^{4}_{j=-2} \gamma^j_{ 0} {\nC}^{j}_{1, 0},
\eas
or in the different basis it  equals to 
\ba\label{3t}
{\nt}^\varepsilon_{1}&=&\sum_{i+j+k=2}  \alpha^{(1)}_{{i,j,k}} x_1^i x_2^jx_3^k \frac{\partial}{\partial x_1}+\sum_{i+j+k=2}  \alpha^{(2)}_{{i,j,k}} x_1^i x_2^jx_3^k \frac{\partial}{\partial x_1}
\\\nonumber
&+&\sum_{i+j+k=2}  \alpha^{(3)}_{{i,j,k}} x_1^i x_2^jx_3^k \frac{\partial}{\partial x_3}.
\ea
By solving Equation \eqref{3s} one can find the coefficients of transformation
\({\nt}^\varepsilon_{1}\)  as 
 are  given  in  Appendix \ref{AppA}, see Equation \eqref{3dtc}. On the other hand,
by solving  the equation below
\bas
\bar{a}^{1}_0\mathsf{A}^0_{1,0}++\bar{\nb}^{1}_1 {\nB}^{-1}_{1,0}+\bar{\nc}^0_1 {\nC}^{-2}_{1,0}+\bar{\nc}^0_{-1} {\nC}^{-2}_{-1,1}
=\ad(\bar{\nX}_0^\varepsilon){\nt}^\varepsilon_{\nB}-\nX_2,
\eas
we find the  coefficients of normal form which has four free  parameters as
 \(\alpha^{(3)}_{{2,0,0}},\alpha^{(2)}_{{2,0,0}},\alpha^{(3)}_{{0,2,0}}\)  and \(\alpha^{(2)}_{{2,0
		,0}}.\)  In accordance to  the \(\Sl\)-costyle  we can take   all of them zero and we get the coefficients as   given in the theorem.
	These coefficients with  those free parameters are given in Appendix  \ref{AppA}, see equations \eqref{f1}-\eqref{f4}.
\epr
\section{An example on \({\mathfrak{sp}}(4,\mathbb{R})\) }\label{sec:exampsp4}
 
 In \cite[Equation (48)]{mailybaev2001transformation} the versal deformation problem is studied using formal power series.
We refer to this paper for more references to the literature and a general introduction of the importance of the versal deformation in applied mathematics.
We mention that to keep things simple, we use an almost symplectic map to obtain the versal normal form, a trade off we have discussed in Section \ref{sec:2}.

In this  section we  shall find the  near identity transformation \(T^\varepsilon\) as discussed in the previous section to bring the symplectic matrices
 given by \cite[Equation (48)]{mailybaev2001transformation}, describing oscillations of a simply supported elastic pipe
 conveying fluid, to its versal  normal form. 
 Set 
 \bas
 \rho:=\frac{1}{4}\sqrt { \left( 4+\varepsilon\,p_{{1}}\right)  \left( {3}+4\varepsilon\,
 	p_{{2}} \right) },
 \eas	
 where \(p_1,p_2\) are two real parameters.	 Define	
 \ba\label{Russ}
 X^\varepsilon_0:=\begin{pmatrix} 0&{\rho}&1&0\\ \noalign{\medskip}-{\rho}&0&0&1\\ \noalign{\medskip}\varepsilon\,p_{{1}}-{{\rho}}^{2}+3&0&0&{
 	\rho}\\ \noalign{\medskip}0&4\,\varepsilon\,p_{{1}}-{{\rho}}^{2}&-{
 	\rho}&0\end{pmatrix}. 
 \ea	
 Set
 \(
 r:=\frac{\sqrt{3}}{2}\) and define  \(\mathsf{n}_0=X^0_0\) and apply the Jacobson-Morozov  construction to find 
 \bas
 \mathsf{n}_0:= \begin{pmatrix} 0&r&1&0\\ \noalign{\medskip}-r&0&0&1
 \\ \noalign{\medskip}\frac{9}{4}&0&0&r\\ \noalign{\medskip}0&-\frac{3}{4}&-r&0
\end{pmatrix},\quad 
\mathsf{m}_0:=  \begin{pmatrix} 0&-r&1&0\\ \noalign{\medskip}r&0&0&-\frac{1}{3}
 \\ \noalign{\medskip}\frac{9}{4}&0&0&-r\\ \noalign{\medskip}0&\frac{9}{4}&r&0
\end{pmatrix}, 
 \eas	
one can see that   \(
 {\mathsf{m}_0}^4=\mathsf{n}_0^4=0
 .\)
 Now define 
 \bas
 \mathsf{h}_0:=[\mathsf{m}_0,\mathsf{n}_0]=\begin{pmatrix} 0&0&0&-\frac{2}{3}\,\sqrt {3}
 \\ \noalign{\medskip}0&2&-\frac{2}{3}\,\sqrt {3}&0\\ \noalign{\medskip}0&-
 \,\sqrt {3}&0&0\\ \noalign{\medskip}-\frac{3}{2}\,\sqrt {3}&0&0&-2
\end{pmatrix}, 
  \eas	
 and we have 
 \(
 [ \mathsf{h}_0,\mathsf{m}_0]-2\mathsf{m}_0=0,\,\, [\mathsf{h}_0,\mathsf{n}_0]+2\mathsf{n}_0=0
 .\)	
The normal form of \(X^\varepsilon_0\) consists of  those elements which are in  \(\ker{\rm ad}(\mathsf{m}_0)\); in fact 	
 \bas
 V_1=\begin{pmatrix} 0&\frac{\sqrt {3}}{2}&-1&0\\ \noalign{\medskip}-\frac{\sqrt {3}}{2}\,
 &0&0&\frac{1}{3}\,
 	\\ \noalign{\medskip}-\frac{9}{4}&0&0& \frac{\sqrt {3}}{2}
 	\\ \noalign{\medskip}0&-\frac{9}{4}&-\frac{\sqrt {3}}{2}&0
 \end{pmatrix},&\qquad V_2=\begin{pmatrix} 0&\frac{9\sqrt {3}}{4}\,
&-\frac{3}{2}\,
&0\\ \noalign{\medskip}0&0&0&0
 \\ \noalign{\medskip}0&0&0&0 \\ \noalign{\medskip}0&{\frac {81}{4}}&-\frac{9\sqrt {3}}{4}\,&0
\end{pmatrix}.&
 \eas
Hence the normal form is given by 	
 \ba\label{NFRUSS}
\bar{X}_0^\varepsilon=\mathsf{n}_0+\varepsilon_{1} V_1+\varepsilon_{2} V_2.
 \ea
 The characteristic polynomial of \(X^\varepsilon_0\)  given in the \eqref{Russ} and \({\bar X}^\varepsilon_0\) 	 are  as 	
 \bas
\chi(X^\varepsilon_0)&=&{\lambda}^{4}
+ \left( 
-{\frac {17\,\varepsilon\,p_{{1}}}{4}}
+{\varepsilon}^{2}p_{{1}}p_{{2}}+4\,\varepsilon\,p_{{2}} 
\right) {\lambda}^{2}
+4p_{{1}}\varepsilon\left(\,{\varepsilon}{p_{{1}}}
+3\,\right),
\\
 \chi({\bar X}^\varepsilon_0)&=&{\lambda}^{4}+10\,\varepsilon_{1}{\lambda}^{2}+9\,{\varepsilon_{1}}^{2}+54\,\varepsilon_{2}.
 \eas   
 Therefore we obtain     
 \bas                             
 \varepsilon_{1}&=&\frac{1}{5}{\varepsilon}\left({2} p_{{2}}-{\frac {17\,}{8}} p_{{1}}\right)+\frac{1}{10}\,p_{{1}}p_{{2}}\varepsilon^2,
 \\
 \varepsilon_{2}&=&\frac{1}{{86400}}{ {\varepsilon^2\, \left( 29\,p_{{1}}+48\,p_{{2}} \right)  \left( 131
 		\,p_{{1}}-48\,p_{{2}} \right) }}-
 	\frac{1}{{3600}}
 	{ {\varepsilon p_{{1}} \left( 6\,{
 			\varepsilon}^{3}p_{{1}}{p_{{2}}}^{2}-51\,{\varepsilon}^{2}p_{{1}}p_{{2}}+48
 		\,{\varepsilon}^{2}{p_{{2}}}^{2}-800 \right) }}.
 \eas
 With \(\chi(X^\varepsilon_0)=\lambda^4+\Delta_2 \lambda^2+\Delta_4\),
 we see that
 \begin{eqnarray*}
 	\Delta_2&=&-{\frac {17\,}{4}}\varepsilon\,p_{{1}}
 	+{\varepsilon}^{2}p_{{1}}p_{{2}}+4\,\varepsilon\,p_{{2}}, \\
 	\Delta_4&=&4\,{\varepsilon}^{2}{p_{{1}}}^{2}
 	+12\,\varepsilon\,p_{{1}}.
 \end{eqnarray*}
To go from \(p_1, p_2\) to \(\Delta_2,\Delta_4\) is less simple
then in the earlier examples. 
 \begin{thm}\label{TransformationRussia}
	There exists invertible transformation \(	T^\varepsilon\)   such that  brings
	\eqref{Russ} to 
	\eqref{NFRUSS},
 \end{thm}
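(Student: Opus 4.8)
The plan is to proceed exactly as in the analogous statements of Sections~\ref{sec:2Di} and~\ref{sec:3D}, in particular Theorem~\ref{Transformation3d}: the versal normal form $\bar{X}^\varepsilon_0$ of~\eqref{NFRUSS} is already fixed, the deformation parameters $\varepsilon_1,\varepsilon_2$ having been pinned down by the requirement $\chi(X^\varepsilon_0)=\chi(\bar{X}^\varepsilon_0)$ recorded above, so that what is left is to produce a near-identity $T^\varepsilon$ realising the conjugation~\eqref{E2}, namely $X^\varepsilon_0T^\varepsilon=T^\varepsilon\bar{X}^\varepsilon_0$ with $X^\varepsilon_0$ as in~\eqref{Russ}, and to verify $T^\varepsilon\in\mathrm{GL}(4,\mathbb{R})$ for $\varepsilon$ near $0$. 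Since~\eqref{E2} is linear in the $16$ entries of $T^\varepsilon$, this is a problem of linear elimination over the ring $R$ of functions of $p_1,p_2$ (after formal expansion in $\varepsilon$), precisely the situation handled by the general algorithm of Section~\ref{sec:2}.

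First I would record that $\mathsf{n}_0=X^0_0$ is the regular nilpotent of $\mathfrak{sp}(4,\mathbb{R})$, that is, a single Jordan block, as witnessed by $\mathsf{n}_0^4=0\neq\mathsf{n}_0^3$; hence $X^\varepsilon_0$ and $\bar{X}^\varepsilon_0$ are both nonderogatory for small $\varepsilon$, and since they share a characteristic polynomial they are conjugate in $\mathrm{GL}(4,\mathbb{R})$ --- this is the sense in which $\bar{X}^\varepsilon_0$ is the rational normal form of $X^\varepsilon_0$ alluded to in Section~\ref{sec:2}. The solution space $\mathcal{S}^\varepsilon$ of~\eqref{E2}, with $T^\varepsilon$ ranging over $M_4(\mathbb{R})$, is therefore $4$-dimensional for small $\varepsilon$ and reduces at $\varepsilon=0$ to the centralizer of $\mathsf{n}_0$ in $M_4(\mathbb{R})$, which contains $I$. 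Using the $\mathfrak{sl}_2$-triple $\langle\mathsf{n}_0,\mathsf{h}_0,\mathsf{m}_0\rangle$ constructed above one has $M_4(\mathbb{R})=\ker\ad(\mathsf{n}_0)\oplus\mathrm{im}\,\ad(\mathsf{m}_0)$ (with $\ad$ acting on $M_4(\mathbb{R})$), and $\mathcal{S}^\varepsilon$ projects isomorphically onto $\ker\ad(\mathsf{n}_0)$ for small $\varepsilon$; so there is a \emph{unique} $T^\varepsilon\in\mathcal{S}^\varepsilon$ with $T^\varepsilon-I\in\mathrm{im}\,\ad(\mathsf{m}_0)$, this is the $\Sl$-costyle choice, and it satisfies $T^0=I$. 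Since $\det T^0=1$, the determinant $\det T^\varepsilon$ is a unit of $R$ for $\varepsilon$ small, so $T^\varepsilon$ is invertible; then~\eqref{E1} produces the accompanying form $\bar{\Omega}^\varepsilon_0$, with respect to which $\bar{X}^\varepsilon_0$ is symplectic by Lemma~\ref{mainlemma}.

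The step I expect to be the actual work is the elimination producing $T^\varepsilon$ in a usable closed form. The excerpt already flags that passing from $p_1,p_2$ to $\Delta_2,\Delta_4$ is less transparent than before: $\varepsilon_1,\varepsilon_2$ are polynomial of degree up to $4$ in $\varepsilon$ and $\rho$ carries a square root, so a careless solution of~\eqref{E2} is drowned in radicals --- exactly the phenomenon the paper sets out to avoid. The remedy is organisational: treat $\rho$ as an independent symbol subject to $16\rho^2=(4+\varepsilon p_1)(3+4\varepsilon p_2)$, keeping every expression polynomial, and solve~\eqref{E2} order by order in $\varepsilon$; the pivots that arise are the subdiagonal entries of $\mathsf{n}_0$ in Jordan form, which are units at $\varepsilon=0$, so no division by a noninvertible element of $R$ is ever required and the process terminates. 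What then still has to be checked by hand is that the $\Sl$-costyle normalisation does single $T^\varepsilon$ out uniquely and that the resulting matrix is simple enough to be fed into the subsequent nonlinear normal form computation, which is the whole purpose of the construction.
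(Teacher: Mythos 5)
Your proposal is correct, and at its core it is the same step the paper takes --- solving the linear intertwining equation $X^\varepsilon_0T^\varepsilon=T^\varepsilon\bar{X}^\varepsilon_0$ of \eqref{E2} by elimination --- but you wrap it in a structural existence and uniqueness argument that the paper's one-line proof omits entirely: the paper simply exhibits the explicit matrix \eqref{transfruss} and asserts that it satisfies the equation. What your route buys is a guarantee, independent of the computation, that the elimination must succeed: since $\mathsf{n}_0=X^0_0$ is regular nilpotent (your claim $\mathsf{n}_0^3\neq 0$, not stated in the paper, does check out --- $\mathsf{n}_0^2$ has rank $2$, so the Jordan type is a single block of size $4$), both $X^\varepsilon_0$ and $\bar{X}^\varepsilon_0$ are nonderogatory with equal characteristic polynomial, the intertwiner space is $4$-dimensional of constant rank near $\varepsilon=0$, and the splitting $M_4(\mathbb{R})=\ker\ad(\mathsf{n}_0)\oplus\mathrm{im}\,\ad(\mathsf{m}_0)$ singles out a unique solution with $T^0=I$, hence invertible for small $\varepsilon$. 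What it does not deliver is the explicit closed form of the entries $t_1,\dots,t_6$, which is the actual content the paper records and the only thing its proof consists of; also, whether the paper's displayed $T^\varepsilon$ is the $\Sl$-costyle representative you construct or some other normalization is left unchecked on both sides, but since the theorem only asserts existence of an invertible $T^\varepsilon$, this does not affect correctness. One small imprecision: your remark that the pivots are ``the subdiagonal entries of $\mathsf{n}_0$ in Jordan form'' sits awkwardly with the paper's explicit remark that $X^0_0$ was \emph{not} put into (symplectic) normal form in this example; the correct statement is just that the elimination only ever divides by quantities that are units at $\varepsilon=0$, which your structural argument already guarantees.
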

 \ba\label{transfruss}
 T^{\varepsilon}=\begin{pmatrix} 1&0&0&0
 \\ \noalign{\medskip}0&1+\frac{\sqrt{3}}{76800\rho}\varepsilon t_{1}&-\frac{1}{115200\rho}\varepsilon t_2&0
 \\ \noalign{\medskip}0&{\frac {\sqrt {3}}{76800}} \varepsilon t_3&1+\frac{1}{115200}\varepsilon t_{4}&0
 \\ \noalign{\medskip} \frac{1}{25600 \rho} \varepsilon t_{5}&0&0&1+\frac{\sqrt{3}}{115200\rho}\varepsilon t_{{6}}\end{pmatrix},\ea	                                                                                 
in which 
\bas
t_1&=&320\left(29\,p_{{1}}+48\,p_{{2}}\right)+ \left( 1079\,{p_{{1}}}^{2}+11296\,p_{{1
}}p_{{2}}-2304\,{p_{{2}}}^{2} \right) \varepsilon+8\,p_{{1}}p_{{2}}
\left( 233\,p_{{1}}-144\,p_{{2}} \right) {\varepsilon}^{2}
\\&&
-144\,{
	\varepsilon}^{3}{p_{{2}}}^{2}{p_{{1}}}^{2},
\\
t_2&=&
1920\left(3 p_1+16 p_2\right)+ \left( 1079\,{p_{{1}}}^{2}+15136\,p_{{1
}}p_{{2}}-2304\,{p_{{2}}}^{2} \right) \varepsilon+8\,p_{{1}}p_{{2}}
\left( 233\,p_{{1}}-144\,p_{{2}} \right) {\varepsilon}^{2}
\\&&
-144\,{
	\varepsilon}^{3}{p_{{2}}}^{2}{p_{{1}}}^{2},
\\
t_3&=&
12800\,p_{{1}}+ \left( 6519\,{p_{{1}}}^{2}+2336\,p_{{1}}p_{{2}}-2304
\,{p_{{2}}}^{2} \right) \varepsilon+8\,p_{{1}}p_{{2}} \left( 73\,p_{{1}}-
144\,p_{{2}} \right) {\varepsilon}^{2}-144\,{\varepsilon}^{3}{p_{{2}}}^{2}{p
	_{{1}}}^{2},
\\
t_4&=&
960\left(17 p_1-16 p_2\right)- \left( 6519\,{p_{{1}}}^{2}+6176\,p_{{1}
}p_{{2}}-2304\,{p_{{2}}}^{2} \right) \varepsilon-8\,p_{{1}}p_{{2}}
\left( 73\,p_{{1}}-144\,p_{{2}} \right) {\varepsilon}^{2}
\\&&
+144\,{\varepsilon
}^{3}{p_{{2}}}^{2}{p_{{1}}}^{2},
\\
t_5&=&
320\left(17p_1-16p_2\right)- \left( 331\,{p_{{1}}}^{2}+12704\,p_{{1}}p
_{{2}}-5376\,{p_{{2}}}^{2} \right) \varepsilon-168\,p_{{1}}p_{{2}}
\left( 17\,p_{{1}}-16\,p_{{2}} \right) {\varepsilon}^{2}
\\&&
+336\,{\varepsilon}
^{3}{p_{{2}}}^{2}{p_{{1}}}^{2},
\\
t_6&=&960\left(3 p_{{1}}+16 \,p_{{2}}\right)+ \left( 331\,{p_{{1}}}^{2}+15264\,p_{{1}
}p_{{2}}-5376\,{p_{{2}}}^{2} \right) \varepsilon+168\,p_{{1}}p_{{2}}
\left( 17\,p_{{1}}-16\,p_{{2}} \right) {\varepsilon}^{2}
\\&&
-336\,{\varepsilon}
^{3}{p_{{2}}}^{2}{p_{{1}}}^{2}.
\eas
\bpr
The transformation \eqref{transfruss} is obtained using equation \( {X}^{\varepsilon}_0T^\varepsilon=T^\varepsilon	\bar{X}^{\varepsilon}_0.\)
\epr
 \begin{rem}
In this example, we did not put  the \(X^0_0\) into the symplectic normal form.
 \end{rem} 
 \section{Three body problem}\label{sec:3body}
 \subsection{The versal normal form at $L_4$}\label{sec:versalL4}
 In the theory of the restricted three body problem, the Langrange equilibria play a very practical role,
since they are used to park satellites in orbit, as has been the case for $L_1$ and $L_2$. 
The Trojan points $L_4$ and $L_5$ are considered as positions for space colonies,
since they are stable, unlike $L_3$ which only made it into science fiction sofar.

Consider (Cf. \cite[Equation 1.8]{cushman1986versal}) the four-dimensional  \(L_4\)-singularity
 \ba\label{mec}
 X^\delta_0&:=&X^0_0
 +(4\,\sqrt {2}-3\,\delta)
 \begin{pmatrix} 0&-\frac{{\gamma}^{2}}{4}&\frac{{\gamma}^{2}}{2}&0
 \\ \noalign{\medskip}\frac{1}{8{\gamma}^{2}}&0&0&-\frac{1}{4{\gamma}^{2}}
 \\ \noalign{\medskip}\frac{1}{16{\gamma}^{2}}&0&0&-\frac{1}{8{\gamma}^{2}}
 \\ \noalign{\medskip}0&-\frac{{\gamma}^{2}}{8}&\frac{{\gamma}^{2}}{4}&0
\end{pmatrix},
 \ea
 where  \(\gamma=(1-\frac{\sqrt{2}}{2})^{\frac{1}{2}}\) and
 \bas
 X^0_0:=\begin{pmatrix} 0&-\frac{1}{2}\sqrt {2}&0&0
 \\ \noalign{\medskip}\frac{1}{2}\sqrt {2}&0&0&0\\ \noalign{\medskip}0&0&0&-\frac{1}{2}\sqrt {2}\\ \noalign{\medskip}0&0&\frac{1}{2}\sqrt {2}&0
\end{pmatrix}+\begin{pmatrix} 0&0&0&0
\\ \noalign{\medskip}0&0&0&0\\ \noalign{\medskip}-1&0&0&0\\ \noalign{\medskip}0&-1&0&0
\end{pmatrix}=\s_0+\n_0.
 \eas
The bifurcation point of \(X_0^{\delta}\) is  \(\delta=\delta_0:=\frac{4\sqrt{2}}{3}.\)
 Set \(\delta=\delta_0+\varepsilon\)  to find 
 \ba\label{Aepsilon}
 X^\varepsilon_0=X^0_0+\varepsilon \begin{pmatrix} 0&\frac{3}{4}-\frac{3}{8}\,\sqrt {2}&-\frac{3}{2}+\frac{3}{4}\,\sqrt {2}&0
 \\ \noalign{\medskip}-\frac{3}{4}-\frac{3}{8}\,\sqrt {2}&0&0&\frac{3}{2}+\frac{3}{4}\,\sqrt {2}
 \\ \noalign{\medskip}-\frac{3}{8}-\frac{3}{16}\,\sqrt {2}&0&0&\frac{3}{4}+\frac{3}{8}\,\sqrt {2}
 \\ \noalign{\medskip}0&\frac{3}{8}-\frac{3}{16}\,\sqrt {2}&-\frac{3}{4}+\frac{3}{8}\,\sqrt {2}&0
\end{pmatrix}. 
 \ea

 The normal form of \eqref{mec} is given by 
 \bas
 \bar{X}^\varepsilon_0=X^0_0-{\sqrt{2}}\varepsilon_{\nS}\s_0+\varepsilon_{\nN}\n_0, 
 \eas
 where \(\varepsilon_{\nN},\varepsilon_{\nS}\) are  obtained as follows.
The characteristic polynomial of \(X^\varepsilon_0\) is 
\ba\label{charA}
{\lambda}^{4}+{\lambda}^{2}-\frac{1}{16}\left( 3\,\varepsilon+6+4\,\sqrt {2} \right) 
\left( 3\,\varepsilon-6+4\,\sqrt {2} \right). 
\ea
Comparing the characteristic polynomial of \(X^\varepsilon_0\)  with characteristic polynomial of \( \bar{X}^\varepsilon_0\) as
\ba\label{charB}
{\lambda}^{4}+ \left( 2\,\varepsilon_{\nN}+2{\varepsilon_{\nS}}^{2}-2
\,\sqrt {2}\varepsilon_{\nS}+1 \right) {\lambda}^{2}+\frac{1}{4}\left( -2\,{\varepsilon_{\nS}}^{2}+2\,\sqrt {2}\varepsilon_{\nS}+2
\varepsilon_{\nN}-1 \right) ^{2}.
\ea
Then equations \eqref{charA} and \eqref{charB} imply   that 
\ba\label{FM0}
2\varepsilon_{\nS} \left( \sqrt {2}-\varepsilon_{\nS} \right)&=&2\,
\varepsilon_{\nN},
\\\label{FM1}
\left( 4\,\varepsilon_{\nN}-1 \right) ^{2}&=&-\frac{1}{4}\left( 3\,\varepsilon
-6+4\,\sqrt {2} \right)  \left( 3\,\varepsilon+6+4\,\sqrt {2} \right). 
\ea
 Now, by solving  Equation \eqref{FM1}  respect to \(\varepsilon_{\nN}\)  we find two solutions. The negative root is the right solution, since for \(\varepsilon=0\) we have 
\bas
\varepsilon_{\nN}={\frac {3\,\sqrt {2}}{4}}\varepsilon+{\frac {81}{32}}\varepsilon^2+O \left( {\varepsilon
}^{3} \right), 
\eas 
hence, 
\bas
\varepsilon_{\nN}&=&
-{\frac {\sqrt {- \left( 3\,\varepsilon-6+4\,\sqrt {2} \right) 
			\left( 3\,\varepsilon+6+4\,\sqrt {2} \right) }}{8}}+\frac{1}{4}.
\eas
Now substitute \(\varepsilon_{\nN}\) into \eqref{FM0} and solve for \(\varepsilon_{\nS}\) to get 
\bas
\varepsilon_{\nS}&:=& 
\frac{1}{2}
\left(
{ {\sqrt {2}-\frac{1}{2}\sqrt {4+2\,\sqrt {- \left( 3\,\varepsilon-6+
				4\,\sqrt {2} \right)  \left( 3\,\varepsilon+6+4\,\sqrt {2} \right) }}}}\right).
\eas
Here also we choose  the negative root, since from \(\varepsilon=0\) we obtain 
\bas
\varepsilon_{\nS}=	{\frac {3}{4}}\varepsilon+{\frac {99\,\sqrt {2}}{64}}\varepsilon^2+O \left( {\varepsilon
}^{3} \right). 
\eas 
Note that to have a real normal form we should make this restriction: 
\bas
-3.885618082\approx\frac{-6-4\sqrt{2}}{3}<\varepsilon<\frac{6-4\sqrt{2}}{3}\approx 0.114381918.
\eas
\subsection{Finding the transformation generator \({\nt}^\varepsilon_0\)}
Let \({{\nt}^\varepsilon_0}\in{\rm GL(4,\mathbb{R})}\)
\ba\label{t3b}
{{\nt}^\varepsilon_0}= \frac{1}{\left(\alpha_1+6\right)}\begin{pmatrix} t_{{1}}&t_{{2}}&t_{{3}}&-\frac{1}{\varepsilon}t_{{4}}
\\ \noalign{\medskip}t_{{5}}&t_{{6}}&\frac{1}{\varepsilon}t_{{7}}&t_{{8}}
\\ \noalign{\medskip}-\frac{1}{\varepsilon} t_9&\frac{1}{\varepsilon^2}t_{{10}}&-\frac{1}{\varepsilon^2}t_{{11}}&\frac{1}{\varepsilon}t_{{12}}
\\ \noalign{\medskip}-\frac{1}{\varepsilon^2}t_{13}&\frac{1}{\varepsilon}t_{{14}}&\frac{1}{\varepsilon}t_{{15}}&-\frac{1}{\varepsilon^2}t_{{16}}\end{pmatrix}. 
\ea
We solve the  following equation 
\ba\label{map}
(I+\varepsilon {\nt}^\varepsilon_0){\bar X}_0^\varepsilon=X_0^\varepsilon(I+\varepsilon {\nt}^\varepsilon_0),
\ea
 for \(\{t_i,i=1\cdots16\}.\) The solutions of above equation  respect to four free parameters \(t_1,t_2,t_5,t_6\) are given in  Appendix \ref{AppB}.
 Now we should find  parameters \(t_1,t_2,t_5,t_6.\)	 By substituting parameters in \({\nt}^\varepsilon_0\) and Taylor expansion  around \(\varepsilon=0\) we find 
\bas
{\nt}^\varepsilon_0=\begin{pmatrix} t_1&t_2&0&0\\ \noalign{\medskip}t_5&t_6&0&0
\\ \noalign{\medskip}-{\frac {\sqrt {2} \left( t_{{2}}+t_{{5}} \right) }{3\varepsilon}}&{\frac {\sqrt {2} \left( 4\,t_{{1}}-4\,t_{{6}}-9 \right) }{12
		\varepsilon}}&0&0
\\ \noalign{\medskip}-{\frac {\sqrt {2} \left( 4\,t_{{1}}-4\,t_{{6}}-9 \right) }{12
		\varepsilon}}
&-{\frac {\sqrt {2} \left( t_{{2}}+t_{{5}} \right) }{3\varepsilon}}&0&0\end{pmatrix} +O(\varepsilon^0).
\eas												
{Due to Equation \eqref{map}  transformation \(T^\varepsilon\) should be near identity}.
Hence, it requires 
\(
t_2=-t_5,\,\,t_1=t_{{6}}+\frac{9}{4}.
\)
We have two free parameters \(t_5,t_6\) which can be taken as \(t_5=0, t_6=-\frac{9}{4}.\) 
Hence
\(t_1=t_2=0.\) Thereby we find 
\bas
{\nt}^\varepsilon_0=\begin{pmatrix}0&0&0&0\\ \noalign{\medskip}0&-\frac{9}{4}&0&0
	\\ \noalign{\medskip}0&0&0&0
	\\ \noalign{\medskip}0
	&0&0&0\end{pmatrix} +O(\varepsilon^0).
\eas
\begin{thm}
	The following transformation takes 
		\(X^\delta_0\) to  its versal normal from \(\bar{X}^\varepsilon_0\) through of equation  \( {X}^{\varepsilon}_0(I+{\nt}^\varepsilon_0)=(I+{\nt}^\varepsilon_0)	\bar{X}^{\varepsilon}_0.\)
\bas
{\nt}^\varepsilon_0=\frac{1}{\left( 6+\alpha_1 \right)}\begin{pmatrix} 0&0&0&\frac{1}{8{\varepsilon}}t_{{4}}
\\ \noalign{\medskip}0&\frac{-9}{4}&\frac{1}{6{\varepsilon\,
}}t_{{7}}&0
\\ \noalign{\medskip}0&\frac{1}{24{{\varepsilon}^{2}  }}t_{{10}}&\frac{1}{{96\,{\varepsilon}^{2}
}}t_{{11}}&0
\\ \noalign{\medskip}\frac{1}{24{{\varepsilon}^{2}
}}t_{{13}}&0&0&\frac{1}{24{\varepsilon}^{2}}t_{{16}}\end{pmatrix}, 
\eas
where 
\bas
\alpha_1=\sqrt {- \left( 3\,\varepsilon-6+4\,\sqrt {2} \right)  \left( 3\,\varepsilon
	+6+4\,\sqrt {2} \right) },
\quad
\alpha_2= \sqrt{4+2\alpha_1},
\eas
and 
\bas
t_{{4}}&=&8\left(
2+\sqrt{2}+\alpha_1-\sqrt{2}\alpha_2-\frac{1}{2}\alpha_1\sqrt{2}\right)+
6\left(\frac{3}{2}\,\alpha_1\sqrt{2}-2-\alpha_2-5\sqrt{2}-3\alpha_1
\right)\varepsilon
\\&&
+27\left(\sqrt{2}-2\right){\varepsilon}^{2},
\\
t_{{7}}&=&16\left(\alpha_1
-\sqrt{2}+\frac{1}{2}\alpha_1\sqrt{2}-\sqrt{2}\alpha_2\right)+384\left(3\sqrt{2}\alpha_2-\alpha_2-2\sqrt{2}-4\right)\varepsilon
+27\,\alpha_2{\varepsilon}^{2},
\\
t_{{10}}&=&8\left(
\frac{1}{2}\alpha_1\sqrt{2}\alpha_2-\sqrt{2}\alpha_2-4\sqrt{2}+2\alpha_2+\alpha_1\alpha_2-2\alpha_1\sqrt{2}\right)
+12\big(10+3\alpha_1\sqrt{2}-\sqrt{2}
\alpha_2+\alpha_1
\\&&
+6\sqrt{2}-2\alpha_2\big)\varepsilon
-27\left(\alpha_1-10\right){\varepsilon}^{2},
\\
t_{{11}}&=&16\left(2\,-\alpha_1\right)\alpha_2+\left(48\,\alpha_1\sqrt{2}+36\,\alpha_1\alpha_2-96\,\sqrt{2}
-72\,\alpha_2
-384 \right)\varepsilon
-36\left(
8+4
\,\sqrt{2}+\alpha_2\right){\varepsilon}^{2}
\\&&
+81\,\alpha_2{\varepsilon}^{3},
\\
t_{{13}}&=&
4\,\alpha_2+\sqrt{2}\alpha_2-2\alpha_1\sqrt{2}+2\alpha_1\alpha_2-\alpha_1\alpha_2\sqrt{2}
	+ 12\big(\frac{3}{4}\sqrt{2}\alpha_2\alpha_1-\frac{3}{2}
	\alpha_1\alpha_2-\frac{5}{2}\sqrt{2}\alpha_2-8\,\sqrt{2}
	\\&&
	+\alpha_1-\alpha_2+10
	\big)\varepsilon
+27\left(\sqrt{2}-2\right)\alpha_2{\varepsilon}^{2},
\\
t_{{16}}&=&
4\alpha_2\left(2-\alpha_1\right)
-12
\left(\alpha_1\sqrt{2}
+12
+2\sqrt{2}\right)\varepsilon
+9\left(3\alpha_1\sqrt{2}-6\alpha_1-10\,\sqrt{2}-4-\alpha_2\right){\varepsilon}^{2}
\\&&
+81\left(\sqrt{2}-2\right){\varepsilon}^{3}.
\eas
Note that  \(t_4,t_7\)  are in order \({\varepsilon}\)  and  \(t_{10},t_{11},t_{13}, t_{16}\) are in order \({\varepsilon}^{2}.\) 
\end{thm}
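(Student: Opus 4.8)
The plan is to read the defining relation $X^\varepsilon_0(I+\nt^\varepsilon_0)=(I+\nt^\varepsilon_0)\bar X^\varepsilon_0$ as a linear Sylvester equation for the sixteen entries of $\nt^\varepsilon_0$ and then to pin down the remaining freedom by the near-identity requirement. Expanding the relation gives
\[
X^\varepsilon_0\,\nt^\varepsilon_0-\nt^\varepsilon_0\,\bar X^\varepsilon_0=\bar X^\varepsilon_0-X^\varepsilon_0 ,
\]
whose right-hand side is $O(\varepsilon)$ by the explicit expressions in Section~\ref{sec:versalL4}. Solvability rests on the consistency condition that $X^\varepsilon_0$ and $\bar X^\varepsilon_0$ share a characteristic polynomial, which was arranged when $\varepsilon_{\nS},\varepsilon_{\nN}$ were fixed from \eqref{FM0}--\eqref{FM1}. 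Moreover $X^0_0=\s_0+\n_0$ is non-derogatory---its minimal polynomial equals its characteristic polynomial $(\lambda^2+\tfrac{1}{2})^2$, the two two-dimensional $\Sl$-blocks sitting at the complex-conjugate eigenvalues $\pm i/\sqrt2$---and this property persists both under the perturbation $X^\varepsilon_0-X^0_0$ and under the versal terms $\bar X^\varepsilon_0-X^0_0$, so for $\varepsilon$ near $0$ the matrices $X^\varepsilon_0$ and $\bar X^\varepsilon_0$ are conjugate in $\mathrm{GL}(4,\mathbb{R})$; hence the solution set of the Sylvester equation is a four-parameter affine family whose linear part is the centraliser of $\bar X^\varepsilon_0$.

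Next I would carry out the elimination. Parametrising $\nt^\varepsilon_0$ as in \eqref{t3b}, equation \eqref{map} becomes sixteen scalar equations in $t_1,\dots,t_{16}$ with coefficients rational in $\varepsilon$ and in the algebraic quantities $\alpha_1,\alpha_2$; solving them leaves $t_1,t_2,t_5,t_6$ free and expresses $t_3,t_4,t_7,\dots,t_{16}$ in closed form, the formulas to be collected in Appendix~\ref{AppB}. Because $X^0_0=\bar X^0_0$, the Sylvester operator is singular at $\varepsilon=0$---there it is $\ad(\s_0+\n_0)$, annihilating the whole centraliser of $\s_0+\n_0$---so a solution produced by blind elimination carries negative powers of $\varepsilon$; this is exactly the source of the $\varepsilon^{-1}$ and $\varepsilon^{-2}$ factors appearing in \eqref{t3b}.

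The decisive step is the choice of the free parameters. Laurent-expanding $\nt^\varepsilon_0$ about $\varepsilon=0$, the singular and $O(1)$ part of its lower block is governed by the two combinations $t_2+t_5$ and $4t_1-4t_6-9$ displayed just before the statement; imposing that $I+\varepsilon\nt^\varepsilon_0$ be near identity, i.e. that $\nt^\varepsilon_0$ have no pole at $\varepsilon=0$, forces $t_2=-t_5$ and $t_1=t_6+\tfrac{9}{4}$. The two surviving parameters $t_5,t_6$ record the residual freedom of composing with the centraliser, and I would fix them by the $\Sl$-costyle choice $t_5=0$, $t_6=-\tfrac{9}{4}$, whence $t_1=t_2=0$; substituting back into the Appendix~\ref{AppB} formulas and simplifying with $\alpha_1^2=-(3\varepsilon-6+4\sqrt2)(3\varepsilon+6+4\sqrt2)$ and $\alpha_2^2=4+2\alpha_1$ yields the stated expressions for $t_4,t_7,t_{10},t_{11},t_{13},t_{16}$. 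One then checks, using $\alpha_1=2$ and $\alpha_2=2\sqrt2$ at $\varepsilon=0$, that these entries vanish to the orders recorded in the statement, so that the denominators in \eqref{t3b} are cancelled and the transformation reduces to the identity at $\varepsilon=0$. I expect the genuine obstacle to be precisely this last verification---that the two conditions $t_2=-t_5$, $t_1=t_6+\tfrac{9}{4}$ simultaneously remove \emph{every} potentially singular entry, not merely the sub-block exhibited in the expansion---which is why the explicit formulas, to be checked with a computer algebra system, are needed rather than a purely abstract existence argument.
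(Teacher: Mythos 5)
Your proposal is correct and follows essentially the same route as the paper's own derivation in Section \ref{sec:versalL4} and Appendix \ref{AppB}: solve the Sylvester equation \eqref{map} by linear elimination with the ansatz \eqref{t3b}, obtain a four-parameter family \(t_1,t_2,t_5,t_6\), impose regularity of the Laurent expansion at \(\varepsilon=0\) to force \(t_2=-t_5\) and \(t_1=t_6+\tfrac{9}{4}\), and then fix \(t_5=0\), \(t_6=-\tfrac{9}{4}\). Your added observations --- that \(X^0_0\) is non-derogatory so equality of characteristic polynomials suffices for conjugacy, that the four free parameters are the centraliser of \(\bar X^\varepsilon_0\), and that the poles come from the degeneration of the Sylvester operator to \(\ad(\s_0+\n_0)\) at \(\varepsilon=0\) --- are a sound conceptual supplement that the paper leaves implicit, and do not change the argument.
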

\section{Concluding remarks}
We have shown that the correct implementation of versal normal form in normal form computations is possible.
It does give, and this was to be expected, an added level of complexity.
In any practical computation, this will have to be balanced against the added level of correctness.

It will be interesting to see whether these considerations can also be applied in practice to the theory of unique normal form.
This would, after all, be the holy grail of normal form theory: unique versal normal forms!
\begin{appendices}\label{3dN}
	\addcontentsline{toc}{section}{Appendices}
	\renewcommand{\thesubsection}{\Alph{subsection}}
	
	\section{The coefficients of normal form and transformation of triple-zero }\label{AppA}
	The coefficients of transformation \({\nt}^\varepsilon_{1}\) given in the Equation \eqref{3t}  respect to four free parameters \( \) are as follows:
		\ba\label{3dtc}
	\alpha^{(3)}_{1,1,0}&=&2\,\alpha^{(2)}_{{2,0,0}}+a^{(3)}_{{2,0,0}}+\alpha^{(3)}_{{2,0,0}}\varepsilon_{{\nA}},
	\\\nonumber
	\alpha^{(1)}_{2,0,0}&=&\alpha^{(2)}_{{1,1,0}}-a^{(2)}_{{2,0,0
	}}-\alpha^{(2)}_{{2,0,0}}\varepsilon_{{\nA}}+\alpha^{(3)}_{{2,0,0}}\varepsilon_{{\nB}},
	\\\nonumber
	\alpha^{(1)}_{1,1,0}&=&a^{(1)}_{{2,0,0}}-\alpha^{(3)}_{{2,0,0}}\varepsilon_{{\nC}}-2\,\alpha^{(2)}_{{2,0,0}}\varepsilon_{{\nB}}+
	 \left( \alpha^{(2)}_{{1,1,0}}-a^{(2)}_{{	2,0,0}} \right) \varepsilon_{{\nA}}+
	\alpha^{(3)}_{{2,0,0}}\varepsilon_{{\nB}}\varepsilon_{{\nA}}-\alpha^{(2)}_{{2,0,0}}
	\varepsilon_{{\nA}}^{2},
	\\\nonumber
	\alpha^{(3)}_{1,0,1}&=&\alpha^{(2)}_{{1,1,0}}+\frac{1}{2}\,a^{(3)}_{{1,1,0}}-
	\alpha^{(3)}_{{0,2,0}}+ 2\,\alpha^{(3)}_{{2,0,0}}\varepsilon_{{\nB}}+\left( \alpha^{(2)}_{{2,0,0}}+\frac{1}{2}\,a^{(3)}_{{2,0,0}} \right) 
	\varepsilon_{{\nA}}+\frac{1}{2}\,\alpha^{(3)}_{{2,0,0}} \varepsilon_{{\nA}}^{2},
	\\\nonumber
	3\alpha^{(2)}_{0,2,0}&=&a^{(3)}_{{1,0,1}}+a^{(1)}_{{2,0,0}}+a^{(2)}
	_{{1,1,0}}-\frac{1}{2}\,a^{(3)}_{{0,2,0}}
	+\left( 
	3\alpha^{(2)}_{{1,1,0}}-a^{(2)}_{{2,0,0}}+\frac{1}{2}\,a^{(3)}_{{1,1,0}}-\frac{1}{2}\alpha^{(3)}_{{0,2,0}} \right) 
	\varepsilon_{{\nA}}
	\\\nonumber&&
	+
	\alpha^{(3)}_{{2,0,0}}\varepsilon_{{\nC}}-a^{(3)}_{{2,0,0}}\varepsilon_{{\nB}}+\frac{1}{2}\,a^{(3)}_{{2,0,0}}\varepsilon_{{\nA}}^{2}+\frac{2}{3}
	\,\alpha^{(3)}_{{2,0,0}}\varepsilon_{{\nA}}\varepsilon_{{\nB}}+\frac{1}{2}\,\alpha^{(3)}_{{2,0,0
	}}\varepsilon_{{\nA}}^{3},
	\\\nonumber
	3\alpha^{(3)}_{0,1,1}&=&a^{(1)}_{{2,0,0}}+a^{(2)}_{{1,1,0}}+a^{(3)}_{{0,2,0}}+
	\,a^{(3)}_{{1,0,1}}+\alpha^{(3)}_{{2,0,0}}\varepsilon_{{\nC}}
	+ \left( \frac{1}{2}\,a^{(3)}_{{1,1,0}}-a^{(2)}_{{2,0,0}}+3\alpha^{(2)}_{{1,1,0}
	} \right) \varepsilon_{{\nA}}
	\\\nonumber&&
	+2 \left(a^{(3)}_{{2,0,0}}+3\alpha^{(2)}_
	{{2,0,0}} \right) \varepsilon_{{\nB}}+{5}\alpha^{(3)}_{{2,0,0}}	\varepsilon_{{\nA}}\varepsilon_{{\nB}}
	+\frac{1}{2}\,a^{(3)}_{{2,0,0}}\varepsilon_{{\nA}}^{2}+\frac{1}{2}\,
	\alpha^{(3)}_{{2,0,0}}\varepsilon_{{\nA}}^{3},
	\\\nonumber
	6\alpha^{(2)}_{1,0,1}&=&
	a^{(1)}_{{2,0,0}}+
	a^{(2)}_{{1,1,0}}+ a^{(3)}_{{0,2,0}}-2a^{(3)}_{{1,0,1}}-5\alpha^{(3)}_{{2,0,0}}\varepsilon_{{\nC}}-a^{(3)}_{{2,0,0}}\varepsilon_{{\nB}}
	+ \left( -a^{(2)}_{{2,0,0}
	}-a^{(3)}_{{1,1,0}}+3 \alpha^{(3)}_{{0,2,0}} \right) \varepsilon_{{\nA}}
	\\\nonumber&&
	-4\alpha^{(3)}_{{2,0,0}}\varepsilon_{{\nA}}\varepsilon_{{\nB}}+
	\left( -3 \alpha^{(2)}_{{2,0,0}}- a^{(3)}_{{2,0,0}} \right) {\varepsilon_{{\nA}}
	}^{2}-\varepsilon_{{\nA}}^{3}\alpha^{(3)}_{{2,0,0}},
\\\nonumber
	3\alpha^{(2)}_{0,1,1}&=&\frac{1}{2}\,a^{(1)}_{{1,1,0}}+a^{(2)}_{{0,2,0}}+a^{(2)}_{{1,0,1}}+\left( -\frac{1}{2}\,a^{(3)}_{{2,0,0}}+\alpha^{(2)}_{{2,0,0}} \right) \varepsilon_{{\nC}}+ \left( -\frac{1}{2}\,a^{(3)}_{{1,
			1,0}}+\alpha^{(2)}_{{1,1,0}}-{2}\,a^{(2)}_{{2,0,0}} \right) \varepsilon_{{\nB}}
	\\\nonumber&&
	+ \left( a^{(1)}_{{2,0,0}}+\frac{1}{2}\,a^{(2)}_{{1,1,0}} \right) \varepsilon_{{\nA}}- \left( 
	a^{(3)}_{{2,0,0}}
	+4 \alpha^{(2)}_{{2,0,0}} \right) \varepsilon_{{\nA}}\varepsilon_{{\nB}}-
	\frac{3}{2}\,\alpha^{(3)}_{{2,0,0}}\varepsilon_{{\nA}}\varepsilon_{{\nC}}+
	\left( -a^{(2)}_{{
			2,0,0}}+\frac{3}{2}\,\alpha^{(2)}_{{1,1,0}} \right) \varepsilon_{{\nA}}^{2}
	\\\nonumber&&\nonumber
	-\alpha^{(2)}_{
		{2,0,0}}\varepsilon_{{\nA}}^{3},
\\\nonumber
	3\alpha^{(1)}_{0,2,0}&=&a^{(1)}_{
		{1,1,0}}+{2}a^{(2)}_{{1,0,1}}-a^{(2)}_{{0,2,0}}+\left( -a^{(3)}_{{2,0,0}}+{2}\alpha^{(2)}_{{2,0,0}} \right) \varepsilon_{{\nC}}+ \frac{1}{3}\left( \alpha^{(3)}_{{0,2
			,0}}-a^{(3)}_{{1,1,0}}-4 a^{(2)}_{{2,0,0}} \right) \varepsilon_{{\nB}}
	\\\nonumber&&
	+ \left( -a^{(3)}_{{1,0,1}}+\frac{1}{2}a^{(3)}_{{0,2,0}}+a^{(1)}_{{2,0,0}} \right) \varepsilon_{{\nA}}- \left( 
	{8}\alpha^{(2)}_{{2,0,0}}+a^{(3)}_{{2,0,0}} \right) \varepsilon_{{\nA}}\varepsilon_{{\nB}}-
	4 \alpha^{(3)}_{{2,0,0}}\varepsilon_{{\nA}}\varepsilon_{{\nC}}
	\\\nonumber&&
	+ \left( \frac{3}{2}\,
	\alpha^{(3)}_{{0,2,0}}-\frac{1}{2}a^{(3)}_{{1,1,0}}-a^{(2)}_{{2,0,0}} \right) \varepsilon_{
		{3}}^{2}-{2}\alpha^{(3)}_{{2,0,0}}\varepsilon_{{\nA}}^{2}\varepsilon_{{\nB}}
	- \left( \frac{1}{2}\,a^{(3)}_{{2,0,0}}+{2}\alpha^{(2)}_{{2,0,0}} \right) 
	\varepsilon_{{\nA}}^{3}-\frac{1}{2}\,\varepsilon_{{\nA}}^{4}\alpha^{(3)}_{{2,0,0}},
	\ea
	\bas
	12\alpha^{(3)}_{{0,0,2}}&=&a^{(1)}_{{1,1,0}}+3a^{(3)}_{{0,1,1}}+2 a^{(2)}_{{1,0,1}}+2 a^{(2)}_{{0,2,0}}+ \left( 2 a^{(3)}_{{2,0,0}}+{8}
	\alpha^{(2)}_{{2,0,0}} \right) \varepsilon_{{\nC}}+\alpha^{(3)}_{{2,0,0}}\varepsilon_{{\nB}}^{2}+\alpha^{(3)}_{{2,0
			,0}}\varepsilon_{{\nA}}\varepsilon_{{\nC}}
	\\&&
	+\left(12 \alpha^{(2)}_{{1,1,0}}+2a^{(3)}_{{1,1,0}}-{4}a^{(2)}_{{2
			,0,0}} \right) \varepsilon_{{\nB}}+ \left( a^{(3)}_{{1,0,1}}+a^{(3)}_{{0,2,0}}+2 a^{(2)}_{{1,1,0}}+3 a^{(1)}_{{2,0,0}} \right) \varepsilon_{{\nA}}
	\\&&
	+
	\left(4 \alpha^{(2)}_{{2,0,0}}+3 a^{(3)}_{{2,0
			,0}} \right) \varepsilon_{{\nA}}\varepsilon_{{\nB}}+
	\left(2a^{(3)}_{{1,1,0}}-3a^{(2)}_{{2,0,0}}+6\alpha^{(2)}_{{1,1,0}}
	\right) \varepsilon_{{\nA}}^{2}
	+2 \left( a^{(3)}_{{2,0,0}}-\alpha^{(2)}_{{2
			,0,0}} \right) \varepsilon_{{\nA}}^{3}
	\\&&
	+8\alpha^{(3)}_{{2,0,0}}
\varepsilon_{{\nB}}	\varepsilon_{{\nA}}^{2}+\alpha^{(3)}_{
		{2,0,0}}\varepsilon_{{\nA}}^{4},
	\eas
	\bas
	15\alpha^{(1)}_{0,1,1}&=&\frac{3}{2}\left({3} a^{(1)}_{{0,2,0}}+{3}a^{(1)}_{{1,0,1}}-2a^{(3)}_{{0
			,0,2}}-a^{(2)}_{{0,1,1}}\right)
	-\left( a^{(2)}_{{0,2,0}}+a^{(2)}_{{1,0,1}}-
	{\frac {7}{4}}a^{(1)}_{{1,1,0}}+\frac{3}{4}\,a^{(3)}_{{0,1,1}} \right) \varepsilon_{{\nA}}
	\\&&
	+ \frac{1}{2}\left( 
	-a^{(3)}_{{0,2,0}}+{ {14\,a^{(1)}_{{2,0,0}}}}-{ {13\,a^{(2)}_{{1,1,0}
	}}}-\,a^{(3)}_{{1,0,1}} \right) \varepsilon_{{\nB}}+3 \left( \alpha^{(3)}_{{0,2,0}}-\frac{5}{4}\,a^{(3)}_{{1,1,0}}-3a^{(2)}_{{2,0,0}}
	\right) \varepsilon_{{\nC}}
	\\&&
	+15
	\left( -2\,\alpha^{(2)}_{{2,0,0}}+\frac{1}{3}\,a^{(3)}_{{2,0,0}} \right) \varepsilon_{{\nB}}^
	{2}-15\left( {\frac {5\,a^{(3)}_{{2,0,0}}}{12}}
	+{\frac {
			47\,\alpha^{(2)}_{{2,0,0}}}{30}} \right) \varepsilon_{{\nA}}\varepsilon_{{\nC}}-2 \left( { {7
			a^{(2)}_{{2,0,0}}}}+a^{(3)}_{{1,1,0}} \right) \varepsilon_{{\nA}}\varepsilon_{{\nB
	}}
	\\&&
	+
	\frac{1}{4}\left( 4a^{(1)}_{{2,0,0}}-3a^{(2)}_{{1,1,0}}
	-{ {a^{(3)}_{{1,0,1}}}}-
	{ {a^{(3)}_{{0,2,0}}}} \right) \varepsilon_{{\nA}}^{2}
	-{{14\,\alpha^{(3)}_{{2,0,0}}\varepsilon_{{\nC}}\varepsilon_{{\nB}}}}
	-{\frac 
		{31}{{4}}}\,\alpha^{(3)}_{{2,0,0}}\varepsilon_{{\nC}}\varepsilon_{{\nA}}^{2}
	+{17\,\alpha^{(3)}_{{2,0,0}}\varepsilon_{{\nB}}^{2}}\varepsilon_{{\nA}}
	\\&&
	- \frac{1}{2}\left( a^{(3)}_{{2,0,0}}+10\alpha^{(2)}_{{2,0,0}} \right)
\varepsilon_{{\nB}}	\varepsilon_{
		{3}}^{2}
	- \left( a^{(2)}_{{2,0,0}}
	+{\frac {a^{(3)}_{{1,1,0}}}{8}} \right) \varepsilon_{{\nA}}^{3}-
	\frac{1}{4}\left( {5}\alpha^{(2)}_{{2,0,0}}
	+{\frac {1}{2}}a^{(3)}_{{2,0,0}} \right) 
	\varepsilon_{{\nA}}^{4}
	-{\frac {\alpha^{(3)}_{{2,0,0}}}{8}\varepsilon_{{\nA}}^{5},
	}
	\\
	6\alpha^{(1)}_{1,0,1}&=&a^{(1)}_{{1,1,0}}-4a^{(2)}_{{1,0,1}}+2a^{(2)}_{{0,2,0}}+\left(2a^{(3)}_{{1,0,1}}-a^{(3)}_{{0,2,0
	}}+a^{(1)}_{{2,0,0}} \right) \varepsilon_{{\nA}}- \left( a^{(3)}_{{2,0,0}}+{10}
	\,\alpha^{(2)}_{{2,0,0}} \right) \varepsilon_{{\nC}}
	\\&&
	+6 \left( -\alpha^{(3)}_{{0,2,0}}+\alpha^{(2)}_{{1,1,0}}+\frac{1}{3}\,a^{(3)}_
	{{1,1,0}}-\frac{2}{3}\,a^{(2)}_{{2,0,0}} \right) \varepsilon_{{\nB}}
	+12\,\alpha^{(3)}_{{2,0,0}}\varepsilon_{{\nB}}^{2}+{2}\,\alpha^{(3)}_{{2
			,0,0}}\varepsilon_{{\nA}}\varepsilon_{{\nC}}
	\\&&
	- 2\left( \alpha^{(2)}_
	{{2,0,0}}-a^{(3)}_{{2,0,0}} \right)\varepsilon_{{\nA}}\varepsilon_{{\nB}} 
	+ \left( -3\alpha^{(3)}_{{0,2,0
	}}a^{(3)}_{{1,1,0}}
	-a^{(2)}_{{2,0,0}}+3\alpha^{(2)}_{{1,1,0}} \right) {
		\varepsilon_{{\nA}}}^{2}+7
	\alpha^{(3)}_{{2,0,0}} \varepsilon_{{\nA}}^{2}\varepsilon_{{\nB}}
	\\&&
	+\left( a^{(3)}_{{2,0,0}}+\alpha^{(2)}_{{2,0,0}}
	\right) \varepsilon_{{\nA}}^{3}+\varepsilon_{{\nA}}^{4}\alpha^{(3)}_{{2,0,0}
	},
	\eas
	\bas
	60\alpha^{(2)}_{0,0,2}&=&3\left(a^{(1)}_{{0
			,2,0}}+a^{(1)}_{{1,0,1}}+{{3a^{(2)}_{{0,1,1}}}}-4a^{(3)}_{{0,0,2}}\right)
	+
	60\left( -\frac{1}{10}a^{(2)}_{{2,0,0}}
	-\frac{1}{8}a^{(3)}_{{1,1,0}}+\frac{1}{5}\,\alpha^{(3)}_{{0,2,0}}
	\right) \varepsilon_{{\nC}}
	\\&&
	+
	\left( 8\,a^{(1)}_{{2,0,0}}-{{a^{(2)}_{{1,1,0}}}}-{ {7\,a^{(3)}_{{0,2
					,0}}}}-{ {7\,a^{(3)}_{{1,0,1}}}} \right) \varepsilon_{{\nB}}
	+ \left( 2a^{(1)}_{{1,1,0}}+{ {a^{(2)}_{{0,2,0}}}}+{ {a^{(2)}_{{1,0,1
	}}}}-3a^{(3)}_{{0,1,1}} \right) \varepsilon_{{\nA}}
	\\&&
	- \left( { {29\,\alpha^{(2)}_{{2,0,0}}}}+{
		\frac{25}{2}a^{(3)}_{{2,0,0}}} \right)\varepsilon_{{\nA}} \varepsilon_{{\nC}}-{46\alpha^{(3)}_{{2,0,0}}\varepsilon_{{\nB}}}\varepsilon_{{\nC}}-60
	\left( \alpha^{(2)}_{{2,0,0}}+\frac{1}{3}\,a^{(3)}_{{2,0,0}} \right) \varepsilon_{{\nB}}^{2}
	\\&&
	-
	\left( {{16 a^{(2)}_{{2,0,0}}}}+{ {7\,a^{(3)}_{{1,1,0}}}}
	\right) \varepsilon_{{\nA}}\varepsilon_{{\nB}}+ \left( \frac{3}{2}a^{(1)}_{{2,0
			,0}}-{ 2{a^{(2)}_{{1,1,0}}}}-{ {a^{(3)}_{{0,2,0}}}}-{{a^{(3)}_{{
					1,0,1}}}} \right) \varepsilon_{{\nA}}^{2}
	\\&&
	-\left( 40\alpha^{(2)}_{{2,0,0}}+{
		\frac {19\,}{2}}a^{(3)}_{{2,0,0}} \right) \varepsilon_{{\nA}}^{2}\varepsilon_{{\nB
	}}-{10}\alpha^{(3)}_{{2,0,0}}\varepsilon_{{\nA}}^{3}\varepsilon_{{\nB}}-{
		{8 \alpha^{(3)}_{{2,0,0}}\varepsilon_{{\nA}}^{2}\varepsilon_{{\nC}}}}-{{32}\alpha^{(3)}_{{2,0,0}}\varepsilon_{{\nA}}\varepsilon_{{\nB}}^{2}}
	\\&&
	-\left( \frac{3}{2}a^{(2)}_{{2,0,0}}+
	{2 {a^{(3)}_{{1,1,0}}}} \right) \varepsilon_{{\nA}}^{3}- \left( \frac{5}{2}
	\,\alpha^{(2)}_{{2,0,0}}+{2 {a^{(3)}_{{2,0,0}}}} \right) \varepsilon_{{\nA}}^
	{4}-{2 {\alpha^{(3)}_{{2,0,0}}\varepsilon_{{\nA}}^{5}}},
	\eas
	
	\bas
	\alpha^{(1)}_{0,0,2}&=&\frac{1}{3}\left(\frac{1}{2}\,a^{(1)}_{{0,1,1}}-\,a^{(2)}_{{0,0,2}}\right)+ \frac{1}{30}\left( a^{(3)}
	_{{0,0,2}}+a^{(1)}_{{1,0,1}}-2a^{(2)}_{{0,1,1}}+a^{(1)}_{{0,2,0}}
	\right) \varepsilon_{{\nA}}
	\\&&
	+  \frac{1}{18}\left(\,a^{(1)}_{{2,0,0}}-2\,a^{(3)}_{{0,2,0
	}}-2\,a^{(2)}_{{1,1,0}}+\,a^{(3)}_{{1,0,1}} \right) \varepsilon_{{\nC}}
	+ \frac{1}{12}\left( \,a^{(3)}_{{0,1,1}}+\,a^{(1)}_{{1,1,0}}-2\,a^{(2)}_{{0,2,0}}-2a^{(2)}_{{1,0,1}}
	\right) \varepsilon_{{\nB}}
	\\&&
	+ \left( -
	\frac{1}{5}\,\alpha^{(3)}_{{0,2,0}}-{\frac {11\,}{90}}a^{(2)}_{{2,0,0}}+\frac{1}{36}a^{(3)}_{{1,1,0}}
	\right) \varepsilon_{{\nA}}\varepsilon_{{\nC}}
	+
	\frac{1}{60} \left( 2{ {a^{(1)}_{{1,1,0}}}}+2{ {a^{(3)}_{{0,1,1}}}}-{ {a^{(2)}_{{1,0,1}}}}-{ {a^{(2)}_{
				{0,2,0}}}}
	\right) \varepsilon_{{\nA}}^{2}
	\\&&
	+ \frac{1}{6}\left( -2\,a^{(2)}_{{2,0,0}}+\,a^{(3)}_{{1,1,0}}
	\right) \varepsilon_{{\nB}}^{2}
	+\frac{11}{180} \left( {{a^{(1)}_{{2,0,0}}}}-{
		{2 a^{(2)}_{{1,1,0}}}}+{ {a^{(3)}_{{1,0,1}}}}
	+{ {a^{(3)}_{{0
					,2,0}}}} \right) \varepsilon_{{\nA}}\varepsilon_{{\nB}}
	+\frac{1}{18}\,\alpha^{(3)}_{{2,0,0}}\varepsilon_{{\nC}}^{2}	\\&&-18\left(a^{(3)}_{{2,0,0}}+\frac{1}{18}\alpha^{(2)}_{{2,0,0}} \right) \varepsilon_{{\nB}}\varepsilon_{{\nC}}
	+ \left( \frac{1}{36}\,a^{(3)}_{{2,0,0}}-\frac{1}{10}\,\alpha^{(2)}_{{2,0,0}} \right)\varepsilon_{{\nA}}^{2} \varepsilon_{{\nC}}
	+\frac{17}{360}\left( { {
			a^{(3)}_{{1,1,0}}}}-{2 {a^{(2)}_{{2,0,0}}}} \right)
\varepsilon_{{\nB}}\varepsilon_{{\nA}}^{2}
	\\&&
	+ \frac{1}{360}\left( { {a^{(1)}_{{2,0,0}}}}+{
		{a^{(3)}_{{1,0,1}}}}+{ {a^{(3)}_{{0,2,0}}}}-{2 {a^{(2)}_{{1,1,0
	}}}} \right) \varepsilon_{{\nA}}^{3}
	+{\frac {43\,}{180}}\alpha^{(3)}_{{2,0,0}}\varepsilon_{{\nA}}\varepsilon_{{\nB}}\varepsilon_{{\nC}}
	+\alpha^{(3)}_{{2,0,0}}\varepsilon_{{\nB}}^{3}
	+{\frac {17\,}{36}}a^{(3)}_{{2,0,0}}
	\varepsilon_{{\nA}}	\varepsilon_{{\nB}}^{2}
	\\&&
	+{\frac {11\,}{180}}a^{(3)}_{{2,0,0}}
	\varepsilon_{{\nB}}\varepsilon_{{\nA}}^{3}
	+{\frac {11\,}{360}}\alpha^{(3)}_{{2,0,0}}\varepsilon_{{\nA}}^{3}\varepsilon_{{\nC}}
	+{\frac {34\,}{45}}\alpha^{(3)}_{{2,0
			,0}}\varepsilon_{{\nA}}^{2}\varepsilon_{{\nB}}^{2}
	+{\frac {5\,
		}{72}}\alpha^{(3)}_{{2,0,0}}\varepsilon_{{\nB}}\varepsilon_{{\nA}}^{4}
	+\frac{1}{720} \left( { {a^{(3)}_{{1,1,0}}}}-2{ {a^{(2)}_{{2,0,0}}}} \right) \varepsilon_{{\nA}}^{4}
	\\&&
	+\frac {1}{720}
	a^{(3)}_{{2,0,0}}\varepsilon_{\nA}^{5}+\frac {1}{720}
	\alpha^{(3)}_{{2,0,0}}\varepsilon_{\nA}^{6},
	\eas
	where the  coefficients of the normal form are given by 
	\ba\nonumber
	10\bar{a}^{1}_0&=&a^{(1)}_{{0,2,0}}
	+a^{(1)}_{{1,0,1}}
	+6a^{(3)}_{{0,0,2}}
	+3a^{(2)}_{{0,1,1}}+ \left( -6\alpha^{(3)}_{{0,2,0}}
	+\frac{5}{2}a^{(3)}_{{1,1,0}}-2a^{(2)}_{{2,0,0}}
	+\alpha^{(2)}_{{1,1,0}} \right) \varepsilon_{{\nC}}
	\\&&\nonumber
	+ \left(2a^{(2)}_{{0,2,0}}
	+2a^{(2)}_{{1,0,1}}
	+{\frac {3\,}{2}}a^{(1)}_{{1,1,0}}
	+{\frac {3\,a^{(3)}_{{0,1,1}}}{2}} \right) \varepsilon_{{\nA}}+ \left(a^{(3)}_{{0,2,0}}
	+6 a^{(1)}_{{2,0,0}}
	+{3}a^{(2)}_{{1
			,1,0}}
	+a^{(3)}_{{1,0,1}} \right) \varepsilon_{{\nB}}
	\\\nonumber&&
	+
	\left( {30}a^{(1)}_{{2,0,0}}
	+{ {3\,a^{(2)}_{{1,1,0}}}}
	+a^{(3)}_{{1,0,1
	}}
	+a^{(3)}_{{0,2,0}} \right) \varepsilon_{{\nA}}^{2}
	+8\left(\alpha^{(3)}_{{2,0,0}}\varepsilon_{{\nC}}\varepsilon_{{\nB}}
	+2
	\alpha^{(3)}_{{2,0,0}}\varepsilon_{{\nA}}\varepsilon_{{\nB}}^{2}\right)
	\\\nonumber&&
	+ \left( \frac{5}{2}\,a^{(3)}_{{2,0,0}}
	+{{7\,\alpha^{(2)}_{{2,0,0}}}} \right) 
	\varepsilon_{{\nA}}\varepsilon_{{\nC}}+ \left( -12a^{(2)}_{{2,0,0
	}}
	+a^{(3)}_{{1,1,0}}
	+20 \alpha^{(2)}_{{1,1,0}} \right) \varepsilon_{{\nA}}\varepsilon
	_{{2}}
	+ \left(a^{(3)}_{{2,0,0}}-10\alpha^{(2)}_{{2,0,0}} \right) \varepsilon_{{\nA
	}}^{2}\varepsilon_{{\nB}}
	\\\nonumber&&
	+{2}\alpha^{(3)}_{{2,0,0}
	}\varepsilon_{{\nC}}\varepsilon_{{\nA}}^{2}
	+5\alpha^{(3)}_{{2,0,0}}\varepsilon_{{\nA}}^{3}
	\varepsilon_{{\nB}}
	+ \left( 5\alpha^{(2)}_{
		{1,1,0}}-{3}a^{(2)}_{{2,0,0}}
	+\frac{1}{4}\,a^{(3)}_{{1,1,0}} \right) \varepsilon_{{\nA}}^
	{3}
	+ \left( -\frac{5}{2}\,\alpha^{(2)}_{{2,0,0}}
	+\frac{1}{4}\,a^{(3)}_{{2,0,0}} \right) \varepsilon_{{\nA}}^{4}
	\\&&\label{f1}
	+\frac{1}{4}\,\alpha^{(3)}_{{2,0,0}}\varepsilon_{{\nA}}^{5},
\\\nonumber
	\bar{\nc}^0_{-1}&=&\frac{1}{5}\left(-{2}
	\,a^{(1)}_{{0,2,0}}
	+{3}\,a^{(1)}_{{1,0,1}}-{2}\,a^{(3)}_{{0,0,2}}-a^{(2)}_{{0,1,1}}\right)+\frac{1}{5}\left( a^{(2)}_{{0,2,0}}-
	{4}a^{(2)}_{{1,0,1}}-\frac{1}{2}\,a^{(1)}_{{1,1,0}}-a^{(3)}_{{0
			,1,1}} \right) \varepsilon_{{\nA}}
	\\\nonumber&&
	+ \left( \frac{7}{5}\alpha^{(3)}_{{0,2,0}}-\frac{1}{2}
	\,a^{(3)}_{{1,1,0}}
	-\frac{6}{5}\,a^{(2)}_{{2,0,0}} \right) \varepsilon_{{\nC}}
	- \frac{2}{5}\left( a^{(3)}_{{0,2,0}}+\,a^{(1)}_{{2,0,0}}-\frac{1}{2}\,a^{(2)}_{{1,1,0}}
	+\frac{3}{2}\,a^{(3)}_{
		{1,0,1}} \right) \varepsilon_{{\nB}}
	\\\nonumber&&
	-\left( \frac{1}{2}\,a^{(3)}_
	{{2,0,0}}+\frac {19\,}{5}\alpha^{(2)}_{{2,0,0}} \right) \varepsilon_{{\nC}}
	\varepsilon_{{\nA}}
	+\frac{1}{5} \left( -10\alpha^{(3)}_{{0,2,0}}
	+{4}\,a^{(2)}_{{
			2,0,0}}
	+{3}\,a^{(3)}_{{1,1,0}} \right) \varepsilon_{{\nA}}\varepsilon_{{\nB}}
	\\\nonumber&&
	+ \left( 
	\frac{3}{5}\,a^{(3)}_{{2,0,0}}
	+2\,\alpha^{(2)}_{{2,0,0}} \right) \varepsilon_{{\nA}}^{2}
	\varepsilon_{{\nB}}
	+\alpha^{(3)}_{{2,0,0}}\varepsilon_{{\nA}}^{3}\varepsilon_{{\nB}}+
	\frac{4}{5}\alpha^{(3)}_{{2,0,0}}\varepsilon_{{\nC}}\varepsilon_{{\nB}}
	+\frac{8}{5}\,
	\alpha^{(3)}_{{2,0,0}}\varepsilon_{{\nA}}\varepsilon_{{\nB}}^{2}+\frac{3}{10}\,\alpha^{(3)}_{{2,0,0}}\varepsilon_{{\nA}}^{2}\varepsilon_{{\nC}
	}
	\\\nonumber&&
	+ \frac{1}{5}\left( -\,a^{(1)}_{{2,0,0}}-\frac{1}{2}\,a^{(2)}_{{1,1,0
	}}
	+\frac{3}{10}\,a^{(3)}_{{1,0,1}}-\frac{1}{5}\,a^{(3)}_{{0,2,0}} \right) \varepsilon_{{\nA}}^{2}
	+
	\left( -\frac{1}{2}\,\alpha^{(3)}_{{0,2,0}}
	+\frac{1}{5}\,a^{(2)}_{{2,0,0}}
	+{\frac {3\,a^{(3)}_{{1,1,0}}
		}{20}} \right) \varepsilon_{{\nA}}^{3}
	\\	\label{f4}&&
	+\frac{1}{20}\left(6a^{(1)}_{{2,0,0}}
	+{ {3\,a^{(2)}_{{1,1,0}}}}
	+\,a^{(3)}_{{0,2,0}}
	+
	\,a^{(3)}_{{1,0,1}}
	\right)\varepsilon_{\nA}^2
	+ \left( \frac{1}{2}\,\alpha^{(2)}_{{2,0,0}}
	+{
		\frac {3\,a^{(3)}_{{2,0,0}}}{20}} \right) \varepsilon_{{\nA}}^{4}
	+{\frac {3\,
			\alpha^{(3)}_{{2,0,0}}}{20}}	\varepsilon_{{\nA}}^{5},
	\ea
		\ba
	\nonumber
	\bar{\nc}^0_1&=&
	a^{(1)}_{{0,0,2}}
	+\frac{1}{6} \left( 
	\frac{1}{2}\,a^{(1)}_{{1,1,0}}
	-{5},a^{(2)}_{{1,0,1}}
	+a^{(2)}_{{0,2,0}}
	-\frac{3}{2}\,a^{(3)}_{{0,1,1}} 
	\right) \varepsilon_{{\nC}} 
	+ \frac{1}{5}\left( 
	\,a^{(1)}_{{0,2,0}}
	+a^{(1)}_{{1,0,1}}
	+a^{(3)}_{{0,0,2}}
	-{2}\,a^{(2)}_{{0,1,1}} 
	\right) \varepsilon_{{\nB}}
	\\\nonumber&&
	+ \frac{1}{3}\left( 
	\frac{1}{2}\,a^{(1)}_{{0,1,1}}
	- \,a^{(2)}_{{0,0,2}} 
	\right) \varepsilon_{{\nA}}
	+\frac{1}{5} \left(
	a^{(3)}_{{1,0,1}}
	+a^{(1)}_{{2,0,0}}
	+a^{(3)}_{ {0,2,0}}
	-{2}a^{(2)}_{{1,1,0}} 
	\right) \varepsilon_{{\nB}}^{2}
	\\\nonumber&&
	+\frac{2}{15}
	\left( 
	a^{(1)}_{{1,1,0}}
	-{ {2\,a^{(2)}_{{1,0,1}}}}
	-{ {2\,a^{(2)}_{{0 ,2,0}}}}
	+a^{(3)}_{{0,1,1}} 
	\right) \varepsilon_{\nB}\varepsilon_{\nA}
	+ \left( 
	-\frac{6}{5}\,\alpha^{(3)}_{{0,2,0}}
	+\frac{1}{6}\,a^{(3)}_{{1,1,0}}
	-{ \frac {11\,a^{(2)}_{{2,0,0}}}{15}} 
	\right) \varepsilon_{{\nC}}\varepsilon_{{\nB}}
	\\\nonumber&&
	+ \frac{1}{36}\left( 
	-{ {13}a^{(3)}_{{0 ,2,0}}}
	-a^{(1)}_{{2,0,0}}
	-{ {10\,a^{(2)}_{{1,1,0}}}}
	+{{36} \,a^{(3)}_{{1,0,1}}} 
	\right) \varepsilon_{\nC}\varepsilon_{\nA}
	-\frac{1}{3} \left( 
	\,a^{(3)}_{{2,0,0}}
	+7 \alpha^{(2)}_{{2,0,0}} 
	\right) \varepsilon_ {{1}}^{2}
	\\\nonumber&&
	+ \frac{1}{30}\left( 
	\,a^{(3)}_{{0,0,2}}
	+\,a^{(1)}_{{1,0,1}}
	-2\,a^{(2)}_{{0,1,1}}
	+\,a^{(1)}_{{0,2,0}} 
	\right) \varepsilon_{{\nA}}^{2}
	+ \left( 
	-{ \frac {7\,}{10}}\alpha^{(3)}_{{0,2,0}}
	+{\frac {11\,}{72}}a^{(3)}_{{1,1,0}}
	-{\frac {7\,}{180}}a^{(2)}_{{2,0,0}} 
	\right) \varepsilon_{{\nA}}^{2}\varepsilon_{{\nC}}
	\\\nonumber&&
	+\frac{1}{120} \left( 
	-{2 {a^{(2)}_{{1,0,1}}} }
	-{ 2 {a^{(2)}_{{0,2,0}}}}
	+{{a^{(1)}_{{1,1,0}}}}
	+{ {a^{(3)}_{{0,1 ,1}}}} 
	\right) \varepsilon_{{\nA}}^{3}
	+ \left( 
	\frac{1}{36}\,a^{(3)}_{{2,0,0}}
	-{\frac {34\,}{15} } \alpha^{(2)}_{{2,0,0}}
	\right) \varepsilon_{{\nA}}\varepsilon_{{\nC}}\varepsilon_{{\nB}}
	+a^{(3)}_{{2,0,0}} \varepsilon_{{\nB}}^{3}
	\\\nonumber&&
	+{
		\frac {11}{36}\,\alpha^{(3)}_{{2,0,0}}\varepsilon_{{\nA}}\varepsilon_{{\nC}}^{2}}
	+\frac{8}{5}\,\alpha^{(3)}_{{2,0,0}}\varepsilon_{{\nC}}\varepsilon_{{\nB}}^{2}
	+ \frac{11}{30} \left( 
	-{2{a^{(2)}_{{2,0,0}}}}
	+{ {a^{(3)}_{{1,1,0}}}} 
	\right) \varepsilon_{{\nA}}\varepsilon_{{\nB}}^{2}
	\\\nonumber&&
	+ \frac{7}{90}\left( 
	-{ 2 a^{(2)}_{{1,1,0}}}
	+{ {\,a^{(3)}_{{0,2,0}}}}
	+{ {\,a^{(1)}_{{2,0,0}}}}
	+{ {\,a^{(3)}_{{1,0,1}}}} 
	\right) \varepsilon_{{\nA}}^{2}\varepsilon_{{\nB}}+ \frac{1}{9}\left( 
	\frac{1}{2} \,a^{(3)}_{{1,1,0}}
	-\,a^{(2)}_{{2,0,0}} 
	\right) \varepsilon_{{\nA}}^{3}\varepsilon_{{\nB}}+\frac {16}{5}\,\alpha^{(3)}_{{2,0,0}}\varepsilon_{{\nA}}\varepsilon_{{\nB}}^{3}
	\\\nonumber&&
	+\frac{1}{360} 
	\left(  {a^{(1)}_{{2,0,0}}}
	+{ {a^{(3)}_{{1,0,1}}}}
	+{ { a^{(3)}_{{0,2,0}}}}
	-{2 {a^{(2)}_{{1,1,0}}}} 
	\right) \varepsilon_{{\nA}}^{4}
	+{\frac {34}{45}\,a^{(3)}_{{2,0,0}}\varepsilon_{{\nA}}^{2}\varepsilon_{{\nB}}^{2}}
	+{\frac {49}{45}\,\alpha^{(3)}_{{2,0,0}}\varepsilon_{{\nA}}^{3}\varepsilon_{{\nB}}^{2}}
	+{\frac {5}{72}\,a^{(3)}_{{2,0,0}}\varepsilon_{{\nA}}^{4}\varepsilon_{{\nB}}}
	\\\nonumber&&
	+{\frac {7}{90}\,\alpha^{(3)}_{{2,0,0}}\varepsilon_{{\nA}}^{5}\varepsilon_{{\nB}}}
	+{ \frac {7}{45}\,\alpha^{(3)}_{{2,0,0}}\varepsilon_{{\nA}}^{4}\varepsilon_{{\nC}}}
	+ \left( 
	{\frac {7\,\alpha^{(2)}_{{2,0,0}}}{30}}
	+{\frac {11}{72}\,a^{(3)}_{{2,0,0}}}
	\right) \varepsilon_{{\nA}}^{3}\varepsilon_{{\nC}}
	+{ \frac {34}{45}\,\alpha^{(3)}_{{2,0,0}}\varepsilon_{{\nA}}^{2}\varepsilon_{{\nC}}\varepsilon_{{\nB}}}
	\\\label{f2}&&
	+\frac{1}{360} \left( 
	{\frac {a^{(3)}_{{1,1,0}}}{2}}
	-{ {a^{(2)}_{{2,0,0}} }} 
	\right) \varepsilon_{{\nA}}^{5}
	+\frac{1}{720} {a^{(3)}_{{2,0,0}}\varepsilon_{{\nA}}^{6}}
	+\frac{1}{720} {\alpha^{(3)}_{{2,0,0}}\varepsilon_{{\nA}}^{7}},
	\ea
	\ba
	\nonumber
	 \bar{\nb}^{1}_1 &=&
	\frac{1}{6}\,a^{(1)}_{{0,1,1}}
	+\frac{2}{3}\,a^{(2)}_
	{{0,0,2}}+ \frac{1}{6}\left( -\,a^{(3)}_{{0,1,1}}
	+\,a^{(1)}_{{1,1,0}} \right) \varepsilon_{{\nB}}
	+ \frac{1}{18}\left(4a^{(1)}_{{2,0,0}}
	+\,a^{(3)}_{{0,2,0}}
	+
	\,a^{(2)}_{{1,1,0}}-{ {5\,a^{(3)}_{{1,0,1}}}} \right) \varepsilon_{{\nC}}
	\\\nonumber&&
	+ \frac{1}{12}\left( \,a^{(2)}_{{0,1,1}}
	+\,a^{(1)}_{{0,2,0}}
	+\,a^{(1)}_{{1,0,1}
	}-2\,a^{(3)}_{{0,0,2}} \right) \varepsilon_{{\nA}}
	-\frac{1}{3} \left( {2}a^{(2)}_{{2,0,0}}+\frac{1}{2}\,a^{(3)}_{{1,1,0}} \right) {\varepsilon_{{\nB
	}}}^{2}
	-\frac {7\,}{9}\alpha^{(3)}_{{2,0,0}}\varepsilon_{{\nC}}^{2}
	\\\nonumber&&
	-\frac{1}{3} \left( -\frac{5}{3}a^{(3)}_
	{{2,0,0}}+4\alpha^{(2)}_{{2,0,0}} \right) \varepsilon_{{\nB}}\varepsilon_{{\nC}}
	+\left( \frac{1}{2}\,\alpha^{(3)}_{{0,2,0}}-{\frac {7\,}{18}}a^{(2)}_{{2,0,0}}-{\frac {19
			\,}{72}}a^{(3)}_{{1,1,0}} \right) \varepsilon_{{\nA}}\varepsilon_{{\nC}}
	\\\nonumber&&
	-\frac{4}{3} \left( 2\alpha^{(2)}_{{2,0,0}}+\frac{1}{3}\,a^{(3)}_{{2,0,0}} \right) 
	\varepsilon_{{\nA}}\varepsilon_{{\nB}}^{2}-\frac{4}{9}\,\alpha^{(3)}_{{2
			,0,0}}\varepsilon_{{\nB}}^{2}\varepsilon_{{\nA}}^{2}-{
		\frac {16\,
		}{9}}\alpha^{(3)}_{{2,0,0}}\varepsilon_{{\nA}}\varepsilon_{{\nB}}\varepsilon_{{\nC}}
	-\frac{1}{{12}} \left( {
		{13\,}}\alpha^{(2)}_{{2,0,0}}+\frac {25}{6}a^{(3)}_{{2,0,0}}
	\right) \varepsilon_{{\nA}}^{2}\varepsilon_{{\nC}}
	\\\nonumber&&
	-{\frac {29\,}{72}\alpha^{(3)}_{{2,0,0}}\varepsilon_{{\nA
		}}^{3}\varepsilon_{{\nC}}}
	+ \frac{5}{36}\left( { 2{a^{(1)}_{{2,0,0}}
	}}-{ {a^{(2)}_{{1,1,0}}}}-{{a^{(3)}_{{1,0,1}}}}-{
		{a^{(3)}_{{0,2,0}}}} \right) \varepsilon_{{\nA}}\varepsilon_{{\nB}}+ \frac{1}{24}\left(\,a^{(1)}_{{1,1,0}}-\,a^{(3)}_{{0,1,1}} \right) \varepsilon_{{\nA}}^{2}
	\\\nonumber&&
	-
	\frac{1}{9}\left( \,a^{(3)}_{{1,1,0}}+\,a^{(2)}_{{2,0,0}} \right) \varepsilon_{{\nA}}^{2
	}\varepsilon_{{\nB}}
		+ \frac{1}{72}\left( -{ {a^{(3)}_{{1,0,
					1}}}}
	+2a^{(1)}_{{2,0,0}}-{ {a^{(2)}_{{1,1,0}}}}-{ {a^{(3)}_{{0,2,0
	}}}} \right) \varepsilon_{{\nA}}^{3}
	\\\nonumber&&
-\frac{5}{36}\left( { {a^{(3)}_{{2,0,0}}}}+{6}\alpha^{(2)}_{{2,0
		,0}} \right) \varepsilon_{{\nA}}^{3}\varepsilon_{{\nB}}	-\frac{1}{2} \left(4a^{(2)}_{{2,0,0}}+{
		{a^{(3)}_{{1,1,0}}}} \right) \varepsilon_{{\nA}}^{4}
	- \frac {1}{144}\left( 6
	\alpha^{(2)}_{{2,0,0}}+a^{(3)}_{{2,0,0}} \right) \varepsilon_{{\nA}}^{5
	}
\\\label{f3}&&
-{\frac {5\,}{36}}
\alpha^{(3)}_{{2,0,0}}	\varepsilon_{\nA}^{4}\varepsilon_{{\nB}}-{\frac {1}{144}}\alpha^{(3)}_{{2,0,0}}\varepsilon_{\nA}^{6}.
	\ea
	
\section{The coefficients of   \({\nt}^\varepsilon_0\)   in the $L_4$-problem }\label{AppB}
	The coefficients  of transformation  given in  Equation \eqref{t3b}
with free parameters \(t_1,t_2,t_5,t_6\)   are as follows:
		\bas
		t_{{3}}&=&2 \sqrt{2}\alpha_2t_{{2}}-\alpha_1\sqrt{2}t_{{5}}+4 t_{{5}}+2\alpha_1t_{{5}}+2\sqrt{2}t_{{5}}
		+\frac{4}{3}\left(
		4\,t_{{5}}-2\,\sqrt{2} t_{{5
		}}+\,\alpha_2\,t_{{2}}
		\right)\varepsilon,
		\\
		{t_{{4}}}&=&
		2\left(\sqrt{2}\alpha_2-2-\alpha_1-\frac{\sqrt{2}}{2}\alpha_1\right)
		+\Big(
		\frac{3}{2}\alpha_2-4t_{{6}}-2\alpha_1 t_{
			{6}}-6
		+\sqrt{2}\big(-2t_{{6}}+\alpha_1 t_{{6}}
		\\&&
		+2\alpha_2 t_{{1}}+3
		\big)\Big)\varepsilon
		+\frac{4}{3}\left(2\,\sqrt{2}t_{{6}}
		+\alpha_2t_{{1}}-4t_{{6}}\right)\varepsilon^2,
		\\
		t_{{8}}&=&\alpha_1\sqrt{2}t_{{2}}+2\sqrt{2}\alpha_2t_{{5}}+2\alpha_1t_{{2}}-2\sqrt{2}t_{{2}}+4t_{{2}}
		+\frac{4}{3}\left(2\,\sqrt{2}t_{{2}
		}+\alpha_2t_{{5}}-3 t_{{2}}\right)\varepsilon ,
		\\
		t_{{9}}&=&
		\frac{1}{6}\left( 2\sqrt{2}\alpha_1\alpha_2t_{{2}}+8\,\alpha_1\sqrt{2}t_{{5}}
		+4\,\alpha_1\alpha_2t_{{2}}-4\,\sqrt{2}\alpha_2t_{{2}}+16\,\sqrt{2}t_{{5}}+8\,\alpha_2t_{{2}}\right)
		\\&&
		-\left(\sqrt{2}\alpha_2 t_{{2}}+\alpha_1t_{{5}}+2\alpha_2 t_{{2}
		}+10t_{{5}}\right)\varepsilon,
		\\
		t_{{12}}&=&\frac{1}{3}\left(\,\alpha_1
		-2\right)\alpha_2t_{{5}}+
		\left(2\alpha_1 +\alpha_1\sqrt{2}+2\sqrt{2}-4\right)t_2\varepsilon
		+3\left(\frac{1}{4}\alpha_2t_{{5}}-2t_{{2}}- \sqrt{2}t
		_{{5}}\right)  {\varepsilon}^{2}
		,
		\\
		t_{{15}}&=&\frac{1}{3}\left(\,\alpha_1
		-2\right)\alpha_2t_{{2}}+
		\left(2\alpha_1 \,-\alpha_1\sqrt{2}\,+2\sqrt{2}+4\right)t_5\varepsilon
		+3\left(\frac{1}{4}\alpha_2t_{{2}}+2t_{{5}}- \sqrt{2}t
		_{{5}}\right)  {\varepsilon}^{2},
		\\
		t_{{14}}&=&\frac{2}{3}\left(\frac{\sqrt{2}}{2}\,\alpha_1\alpha_2t_{{5}}-4\sqrt{2}t_{{2}}
		-2\,\alpha_2t_{{5}}-2\,\alpha_1\sqrt{2}t_{{2
		}}-\,\alpha_1\alpha_2t_{{5}}-\,\sqrt{2}\alpha_2t_{{5}}\right)
		+\big(\sqrt{2}\alpha_2\,t_{{5}}+\alpha_1\,t_{{2}}
		\\&&
		-2\,\alpha_2 t_{{
				5}}+10\,t_{{2}}\big)\varepsilon,
		\\
		{t_{{7}}}&=&2\alpha_1-2\sqrt{2}+4+\alpha_1\sqrt{2}-2\sqrt{2}\alpha_2
		+
		\big(\alpha_1\sqrt{2} t_{{1}}-2\sqrt{2}\alpha_2 t_{{6}}+2\alpha_1 t
		_{{1}}-2\sqrt{2}\,t_{{1}}-
		3\sqrt{2}-\frac{3}{2}\,\alpha_2
		\\&&
		+4\,\,t_{{1}}-6\,
		\big)\varepsilon
		-\frac{3}{2}\left(
		2 \sqrt{2}t_{{1}}+\alpha_2t_{{6}}+4\,t_{{1}}\right){\varepsilon}^{2},
		\\
		t_{{10}}&=&
		\frac{4}{3}\left(\frac{1}{4}\alpha_1\sqrt{2}\alpha_2-\alpha_1\sqrt{2}+\frac{1}{2}\alpha_1\alpha_2-\frac{1}{2}\sqrt{2}\alpha_2-2\sqrt{2}+\alpha_2\right)
		+\frac{1}{6}\big(-\sqrt{2}\alpha_2
		-\frac{4}{3}\sqrt{2} t_{{6}}
		\\&&
		+ \frac{1}{3}\alpha_1\sqrt{2}\alpha_2\,t_{{1}}-\frac{4}{3}\alpha_1\sqrt{2}t_{{6}}
		+\frac{2}{3}\alpha_1\alpha_2t_{{1}}-\frac{2}{3}\sqrt{2}\alpha_2 t_{{1}}+\frac{1}{2}\alpha_2t_{{1}}
		-2\alpha_2+\alpha_1+10\,
		\big)\varepsilon
		\\&&
		+\left(-\sqrt{2}\alpha_2t_{{1}}+\alpha_1t_{{6}}-2\alpha_2t_{{1}}+10 t_{{6}}\right){\varepsilon}^{2},
		\\
		t_{{11}}&=&\frac{1}{3}\left(\alpha_1-2\right)\alpha_2
		+\left(2\sqrt{2}+8-\frac{2}{3}\alpha_2
		\,t_{{6}} +\frac{1}{2}\alpha_1\alpha_2t_{{6}}-\alpha_1\sqrt{2}\right)	\varepsilon
		-
		\big(\alpha_1\sqrt{2}t_{{1}}+2\alpha_1t_{{1}}+4t_{{1}}+6
		\\&&
		-2\,\sqrt{2}t_{{1}}
		-3\,\sqrt{2}+\frac{1}{2}\alpha_2
		\big){\varepsilon}^{2}
		+3\left(\sqrt{2} t_{{1}}+\frac{1}{4}\alpha_2t_{{6}}+2t_{
			{1}}\right){\varepsilon}^{3},
		\\
		t_{{13}}&=&\frac{1}{3}\left(8\sqrt{2}-2\sqrt{2}\alpha_2-4\alpha_2+4\alpha_1\sqrt{2}-2\alpha_1\alpha_2+\alpha_1\sqrt{2}\alpha_2\right)
		+
		\big(2\,\alpha_1\sqrt{2}\alpha_2 t_{{6}}-12\,\alpha_2+6\,\sqrt{2}\alpha_2
		\\&&
		+8\,\alpha_1\sqrt{2}t_{{1}}-60\,
		-6\,\alpha_1 -4\,\alpha_1\alpha_2 t_{{6}}
		-4\,\sqrt{2}\alpha_2 t_{{6}}+16\,\sqrt{2} t_{{1}}-8\,\alpha_2 \,t_{{6}}\big)
		\varepsilon+\big(\sqrt{2}\alpha_2t_{{6
		}}-\alpha_1 t_{
			{1}}
		\\&&
		-2\alpha_2t_{{6}}-10\,t_{{1}}\big){\varepsilon}^{2},
		\eas
		\bas
		t_{{16}}&=&	\frac{1}{3}\left(\alpha_1-2\right)\alpha_2+	\left(\frac{1}{3}\alpha_1\alpha_2\,t_{{1}}+\alpha_1\sqrt{2}-\frac{2}{3}\alpha_2	t_{{1}}-2\sqrt{2}+8\right)\varepsilon+
		\big(\alpha_1\sqrt{2}t_{{6}}-2\alpha_1 t_{{6}}-2\sqrt{2}t_{{6}}
		\\&&
		-6+3\sqrt{2}+\frac{3}{4}\alpha_2-4 t_{{6}}\big){\varepsilon}^{2}
		+3\left(\sqrt{2}t_{{6}}+\frac{1}{4}\alpha_2t_{{1}}-2t_{{6}}\right) {\varepsilon}^{3}.
		\eas
			\end{appendices}
\bibliographystyle{plain}

\end{document}